\newcommand{\ver}{arxiv}
\newcommand{\arxapp}[2]{\ifthenelse{\equal{\ver}{conf}}{#2}{#1}}
\definecolor{lipicsblue}{rgb}{0.08235294118,0.3098039216,0.537254902}
\spnewtheorem*{sketch}{Sketch}{\itshape}{\rmfamily}
\spnewtheorem{prop}{Proposition}{\bfseries}{\itshape}
\spnewtheorem{conj}{Conjecture}{\bfseries}{\rmfamily}
\Crefname{prop}{Proposition}{Propositions}
\Crefname{conj}{Conjecture}{Conjectures}
\crefname{figure}{Fig.\!}{Figs.\!}
\newcommand{\df}[1]{{\it #1}}
\newenvironment{tightcenter}{%
  \setlength\topsep{0pt}
  \setlength\parskip{0pt}
  \vspace{6pt}
  \begin{center}
}{%
  \end{center}
}
\newcommand{\CR}{{\ensuremath{\mathcal{R}}}}
\newcommand{\CG}{{\ensuremath{\mathcal{G}}}}
\newcommand{\CB}{{\ensuremath{\mathcal{B}}}}
\newcommand{\CC}{{\ensuremath{\mathcal{C}}}}
\newcommand{\GP}{{\ensuremath{G( P, <)}}}
\newcommand{\GPW}[1]{{\ensuremath{G( {#1}, <)}}}
\newcommand{\GPR}[1]{{\ensuremath{G( P_{#1}, <_{#1})}}}
\newcommand{\PPR}[1]{{\ensuremath{\langle P_{#1}, <_{#1} \rangle}}}
\title{Lazy Queue Layouts of Posets}
\author{Jawaherul~Md.~Alam\inst{1} \and 
Michael~A.~Bekos\inst{2} \and 
Martin~Gronemann\inst{3} \and
Michael~Kaufmann\inst{2} \and 
Sergey~Pupyrev\inst{4}}
\authorrunning{Alam et al.}
\institute{
Amazon Inc., Tempe, AZ, USA
\\\email{jawaherul@gmail.com}
\and
Institut f{\"u}r Informatik, Universit{\"a}t T{\"u}bingen, T{\"u}bingen, Germany
\\\email{\{bekos,mk\}@informatik.uni-tuebingen.de}
\and
Theoretical Computer Science, Osnabr\"uck University, Osnabr\"uck, Germany
\\\email{martin.gronemann@uni-osnabrueck.de}
\and
Facebook, Inc., Menlo Park, CA, USA
\\\email{spupyrev@gmail.com}
}
\begin{document}
\maketitle
\begin{abstract}
We investigate the queue number of posets in terms of their width, that is, the maximum number of pairwise incomparable elements. A long-standing conjecture of Heath and Pemmaraju asserts that every poset of width $w$  has queue number at most $w$. The conjecture has been confirmed for posets of width $w=2$ via so-called \emph{lazy} linear extension.\linebreak
We extend and thoroughly analyze  lazy linear extensions for posets~of~width $w > 2$. Our analysis implies an upper bound of $(w-1)^2 +1$ on~the queue number of width-$w$ posets, which is tight for the strategy and yields an improvement over the previously best-known bound. Further,~we~provide an example of a poset that  requires at least $w+1$ queues in every linear extension, thereby disproving the conjecture for posets of width $w > 2$.
\keywords{Queue layouts, Posets, Linear Extensions}
\end{abstract}


\captionsetup[subfigure]{format=hang}
\section{Introduction}
\label{sec:introduction}

A \df{queue layout} of a graph consists of a total order $\prec$ of its
vertices and a partition of its edges into \df{queues} such that no two edges
in a single queue \df{nest}, that is, there are no edges $(u, v)$ and $(x, y)$ in a
queue with $u \prec x \prec y \prec v$. If the input graph is directed, then the
total order has to be compatible with its edge
directions, i.e., it has to be a topological ordering of it~\cite{DBLP:journals/siamcomp/HeathP99,DBLP:journals/siamcomp/HeathPT99a}. 
The minimum number of queues needed in a queue layout of a graph is commonly referred 
to as its \df{queue number}. 

There is a rich literature exploring bounds on the queue number of different
classes of
graphs~\cite{DBLP:conf/gd/AlamBG0P18,DBLP:journals/siamdm/HeathLR92,DBLP:journals/siamcomp/HeathR92,DBLP:journals/combinatorics/Wiechert17,DBLP:conf/gd/Pupyrev17,DBLP:conf/cocoon/RengarajanM95}. A remarkable work by 
Dujmovi{\'c} et al.~\cite{DBLP:conf/focs/DujmovicJMMUW19} proves that the queue 
number of (undirected) planar graphs is constant, thus improving upon previous (poly-)logarithmic bounds~\cite{dujmovic2015graph,DBLP:journals/jgaa/DujmovicF18,DBLP:journals/siamcomp/BattistaFP13} 
and resolving an old conjecture by Heath, Leighton and Rosenberg~\cite{DBLP:journals/siamdm/HeathLR92}. For a 
survey, we refer to~\cite{DBLP:journals/dmtcs/DujmovicW04}.

In this paper, we investigate bounds on the queue number of posets. Recall that
a \df{poset} $\langle P,< \rangle$ is a finite set of elements $P$ equipped with
a partial order~$<$; refer to \cref{sec:preliminaries} for formal definitions.
The queue number of $\langle P,< \rangle$ is the queue number of the acyclic
digraph $\GP$ associated with the poset that contains all non-transitive
relations among the elements of $P$. This digraph is known as the \df{cover graph}
and can be visualized using a Hasse diagram; see \cref{fig:poset}. 

The study of the queue number of posets was initiated in 1997 by Heath and
Pemmaraju~\cite{DBLP:journals/siamdm/HeathP97}, who provided bounds on the queue
number of a poset expressed in terms of its \df{width}, that is, the maximum
number of pairwise incomparable elements with respect to $<$. In particular, they
observed that the queue number of a poset of width $w$ cannot exceed $w^2$ and
posed the following conjecture.

\begin{conj}[Heath and Pemmaraju~\cite{DBLP:journals/siamdm/HeathP97}]\label{conj:hp99}
	Every poset of width $w$ has queue number at most $w$.
\end{conj}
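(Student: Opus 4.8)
The plan is to fix a Dilworth chain decomposition $P = C_1 \cup \dots \cup C_w$ and then build a linear extension of $\langle P,< \rangle$ together with an assignment of the cover edges of $\GP$ to $w$ queues. The natural first attempt is to index the queues by chains and send each cover edge to the queue of, say, its lower endpoint; since each chain is totally ordered, edges lying entirely inside one chain can never nest, so the only threat comes from edges that cross between chains. To keep such crossing edges out of trouble one cannot use an arbitrary linear extension — and this is exactly where the \emph{lazy} idea enters: sweep the chains from bottom to top, but whenever advancing along the ``current'' chain would force a later element of some other chain to be inserted so that its incoming cover edge nests with an edge already committed to that queue, postpone the advance (be lazy) and process the obstructing chain first.

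Next I would turn the laziness into an invariant maintained throughout the sweep. At each step the algorithm keeps, for every chain $C_i$, a frontier element $f_i$ (the largest element of $C_i$ placed so far), and it appends the $<$-minimum among the \emph{eligible} unplaced elements, where eligibility is designed to mean ``placing this element now creates no nest in its queue with anything already present.'' The two lemmas to prove are that the eligible set is never empty, so the construction cannot stall, and — more delicate — that the eligibility conditions protecting the different queues are mutually consistent, i.e.\ the laziness required to safeguard queue $i$ never conflicts with the laziness required to safeguard queue $j$. Establishing these would give a valid $w$-queue layout.

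The step I expect to be the main obstacle is precisely this simultaneous consistency once $w > 2$. For $w = 2$ there is only ever one ``other'' chain to worry about, so the lazy rule is essentially forced and self-consistent; for $w \geq 3$, deferring $C_i$ in order to accommodate $C_j$ may drag $C_i$'s frontier into a configuration that endangers a cover edge shared with $C_k$, and a cyclic three-way (or longer) conflict among the chains becomes conceivable. I would attack this by studying the ``conflict digraph'' on the chains induced by the currently pending laziness constraints and trying to show it is acyclic — equivalently, that a consistent tie-break order among the chains always exists — which, if true, yields the bound $w$; if it is \emph{not} always acyclic, the same analysis should at least quantify how many extra queues such conflicts can cost, and I would expect that residual loss to be the real content of the problem.
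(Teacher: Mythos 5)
You are attempting to prove the Heath--Pemmaraju conjecture, but the paper's central contribution is to \emph{disprove} it for every width $w \geq 3$: \cref{thm:34} exhibits a width-$3$ poset, $\widetilde{G}(31,22)$, that requires $4$ queues in \emph{every} linear extension, and \cref{lem:w1} lifts this to a width-$w$ poset with queue number at least $w+1$ for all $w \geq 3$ (\cref{thm:w1}). Consequently no construction of a linear extension --- lazy, frontier-based, or otherwise --- can achieve $w$ queues in general. The ``simultaneous consistency'' step that you yourself flag as the main obstacle for $w>2$ is therefore not merely delicate; it is provably impossible to carry out. Your closing instinct, that the residual loss from cyclic conflicts among three or more chains ``is the real content of the problem,'' is exactly on target, but the correct conclusion is that the statement is false, not that a cleverer tie-break order exists.

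Two further concrete issues with the plan. First, the paper analyzes precisely the lazy strategy you sketch: \cref{thm:lazy} shows a lazy linear extension guarantees rainbows of size at most $w^2-w$, and \cref{thm:lazy-bound} shows this is worst-case optimal \emph{for that strategy} --- there are width-$w$ posets and lazy extensions containing $(w^2-w)$-rainbows --- so laziness alone cannot reach $w$. The refined tie-break of always continuing with the most recently used chain improves the bound to $(w-1)^2+1$ (\cref{thm:mru}), again tight for the strategy (\cref{thm:mru-bound}). Second, your proposed queue assignment (one queue per chain of the lower endpoint) is too coarse even before any nesting analysis: \cref{prp:w2} only forbids nesting between two edges whose sources lie in a common chain \emph{and} whose targets lie in a common chain, which is why the trivial upper bound is $w^2$ and not $w$; two cover edges leaving the same chain toward two different chains can nest in a linear extension, so a single queue per source chain is not sound.
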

Heath and Pemmaraju~\cite{DBLP:journals/siamdm/HeathP97} made a step towards
settling the conjecture~by providing a linear upper bound of $4w-1$ on the
queue number of planar posets of width $w$.
This bound was recently improved to $3w-2$ by 
Knauer, Micek, and Ueckerdt~\cite{DBLP:conf/gd/KnauerMU18}, who also 
gave a planar poset whose queue number is exactly $w$, thus establishing
a lower bound.
Furthermore, they investigated (non-planar)
posets of width $2$, and proved that their queue number is at most $2$.
Therefore, \cref{conj:hp99} holds when $w=2$.\footnote{Knauer et al.~\cite{DBLP:conf/gd/KnauerMU18}
also claim to reduce the queue number of posets
of width $w$ from $w^2$ to $w^2 - 2\lfloor w/2 \rfloor$. However, as we discuss in \arxapp{\cref{app:knauer}}{\cite{arxiv}}, 
their argument is incomplete.}

\medskip\noindent\textbf{Our Contribution.} 
We present improvements upon the aforementioned results, thus continuing the study
of the queue number of posets expressed in terms of their width, which is
one of the open problems by Dujmovi{\'c} et al.~\cite{DBLP:conf/focs/DujmovicJMMUW19}.

\medskip\noindent$(i)$~For a fixed total order of a graph, the queue number is the size of a maximum \df{rainbow}, that is,
a set of pairwise nested edges~\cite{DBLP:journals/siamdm/HeathLR92}. 
Thus to determine the queue number of a poset $\langle P, <\rangle$ one has to compute 
a \df{linear extension} (that is, a total order complying with $<$), which minimizes the size of a maximum rainbow.
In \arxapp{\cref{thm:general-extension} in \cref{app:general}}{\cite{arxiv}}, we present a poset and a linear extension of it which yields a rainbow of size $w^2$.
Knauer et al.~\cite{DBLP:conf/gd/KnauerMU18} studied a special type of linear
extensions, called \df{lazy}, for posets of width-$2$ to show that their queue
number is at most $2$. Thus, it is tempting to generalize and analyze lazy
linear extensions for posets of width $w > 2$. We provide such an analysis and
show that the maximum size of a rainbow in a lazy linear extension of a
width-$w$ poset is at most $w^2-w$ (\cref{thm:lazy} in \cref{sec:lazy}). Furthermore, 
we show that the bound is worst-case optimal 
for lazy linear extensions (\arxapp{\cref{thm:lazy-bound} in \cref{app:lazy}}{for details refer to~\cite{arxiv}}).

\medskip\noindent$(ii)$~The above bound already provides an improvement over
the existing upper bound on the queue number of posets. However, a carefully
chosen lazy linear extension, which we call \df{most recently used} (MRU), further
improves the bound to $(w-1)^2+1$ (\cref{thm:mru} in \cref{sec:mru}). Therefore, the queue number of a width-$w$ poset is at 
most $(w-1)^2+1$.
Again we show this bound to be worst-case optimal 
for MRU extensions (\arxapp{\cref{thm:mru-bound} in \cref{app:mru}}{for details refer to~\cite{arxiv}}).

\medskip\noindent$(iii)$~We demonstrate a non-planar poset of 
width $3$ whose queue number is $4$ (\cref{thm:34}). We generalize this example to 
posets of width $w > 3$ (\cref{thm:w1}), thus disproving \cref{conj:hp99}.
These two proofs are mostly deferred to \arxapp{\cref{app:counterexample}}{\cite{arxiv}}.


\begin{figure}[t!]
	\centering	
	\begin{subfigure}[b]{.3\textwidth}
		\centering
		\includegraphics[scale=1.1,page=5]{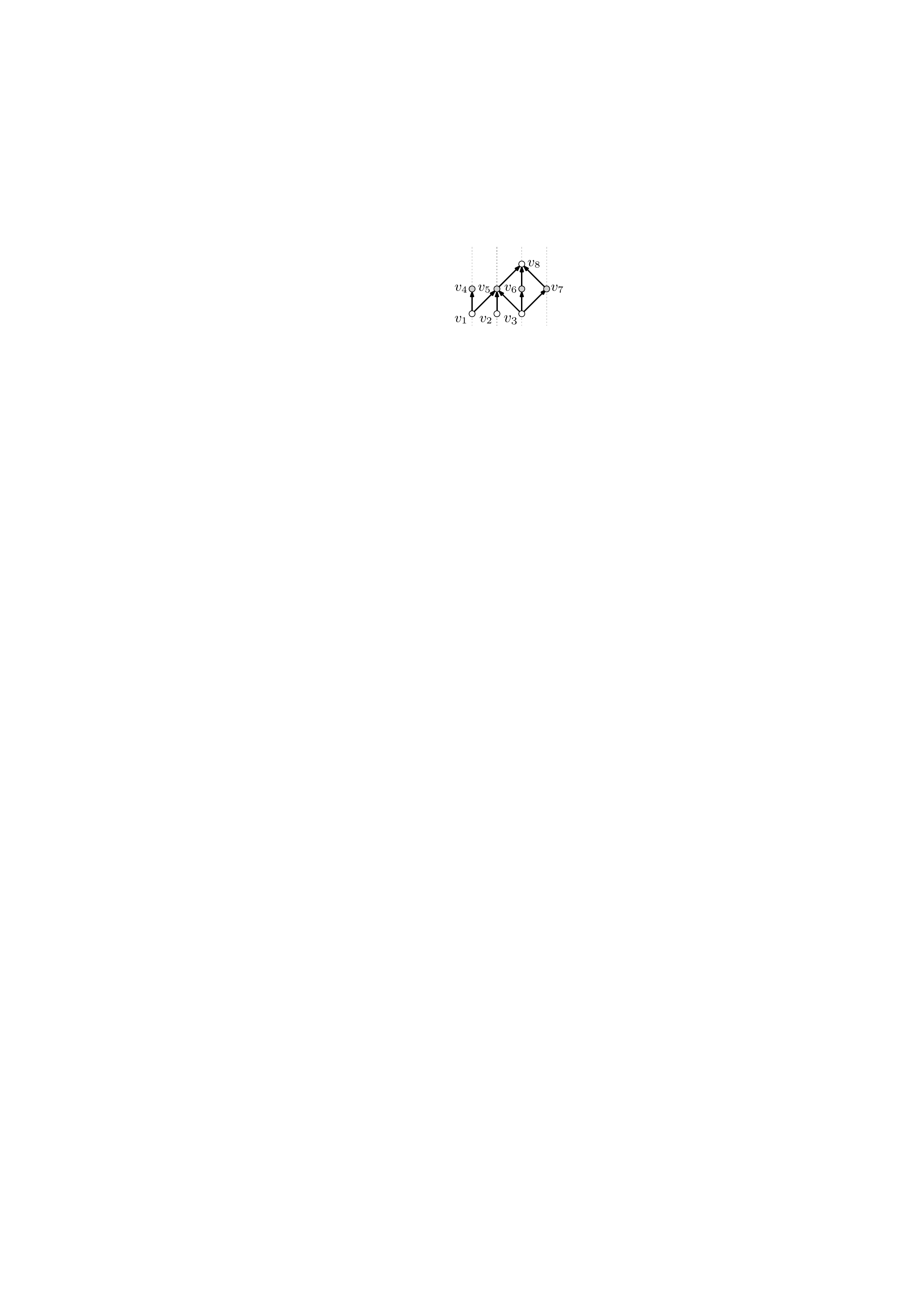}
		\caption{}
		\label{fig:digraph}
	\end{subfigure}
	\hfil
	\begin{subfigure}[b]{.5\textwidth}
		\centering
		\includegraphics[scale=1.1,page=6]{introduction}
		\caption{}
		\label{fig:layout}
	\end{subfigure}
	\caption{
		(a)~The Hasse diagram of a width-$4$ poset; 
		gray elements are pairwise incomparable; the chains of a certain decomposition are shown by~vertical lines.
		(b)~A $2$-queue layout with a $2$-rainbow formed by edges $(v_2, v_5)$ and $(v_6, v_8)$.}
	\label{fig:poset}
\end{figure}

\section{Preliminaries}
\label{sec:preliminaries}

A \df{partial order} over a finite set of elements $P$ is a binary relation~$<$
that is irreflexive and transitive.
A set $P$ together with a partial order, $<$, is a \df{partially ordered set} (or
simply a \df{poset}) and is denoted by $\langle P, < \rangle$. Two
elements $x$ and $y$ with $x < y$ or $y < x$ are called
\df{comparable}; otherwise $x$ and $y$ are \df{incomparable}.
A subset of pairwise comparable
(incomparable) elements of a poset is called a \df{chain} (\df{antichain},
respectively). The \df{width} of a poset is defined as the
cardinality of a largest antichain. 
For two elements $x$ and $y$ of $P$ with $x<y$, we say that $x$ \df{is covered}
by $y$ if there is no element $z \in P$ such that $x < z < y$.
A poset $\langle P, < \rangle$ is naturally associated with an acyclic
digraph $\GP$, called the \df{cover graph}, whose vertex-set $V$ consists of the
elements of $P$, and there exists an edge from $u$ to $v$ if 
$u$ is covered by $v$; see \cref{fig:digraph}.
By definition, $\GP$ has no transitive edges. 

A \df{linear extension} $L$ of a poset $\langle P, < \rangle$ is a total order
of $P$, which complies with $<$, that is, for every two elements $x$ and
$y$  in $P$ with $x < y$, $x$ precedes $y$ in~$L$. 
Given a linear extension $L$ of a poset, we write $x \prec y$ to denote that $x$ precedes $y$ in $L$;
if in addition $x$ and $y$ may coincide, we write $x \preceq y$.
We use
$[x_1,x_2, \ldots, x_k]$ to denote $x_i \prec x_{i+1}$ for all 
$1 \leq i < k$; such a subsequence of $L$ is also called a \df{pattern}.
Let $F=\{(x_i,y_i);\;i=1,2,\ldots,k\}$ be a set of $k \geq 2$
\df{independent} (that is, having no common endpoints) edges of $\GP$. It follows that $x_i \prec y_i$ for all 
$1 \leq i \leq k$. If $[x_1, \ldots, x_k, y_k, \ldots, y_1]$ holds in $L$, then the edges of $F$ form a \df{$k$-rainbow} (see \cref{fig:layout}). Edge $(x_i, y_i)$ \df{nests} edge $(x_{j}, y_{j})$, if $1 \le i<j \leq k$.

A \df{queue layout} of an acyclic digraph $G$ consists of a total order of its
vertices that is compatible with the edge directions of $G$ and of a partition
of its edges into \df{queues}, such that no two edges in a
queue are nested.
The \df{queue number} of $G$ is the minimum number of queues required by its queue layouts. 
The \df{queue number of a poset} $\langle P, < \rangle$ 
is the queue number of its cover graph $\GP$. 
Equivalently, the queue
number of $\langle P, < \rangle$ is at most $k$ if and only if it admits
a linear extension $L$ such that no $(k+1)$-rainbow is formed by some of the edges of $\GP$~\cite{DBLP:journals/siamcomp/HeathR92}.
If certain edges form a rainbow in $L$, we say that $L$ \df{contains} the rainbow.

The
elements of a poset $\langle P, < \rangle$ of width $w$ can be partitioned into $w$
chains~\cite{Di50}.
Note that such a
partition is not necessarily unique.
In the following, we fix this
partition, and treat it as a function~$\CC:P \rightarrow \{1,\ldots,w\}$ 
such that if $\CC(u) = \CC(v)$ and $u \neq v$, then either $u < v$ or $v < u$.
We use $\CR$, $\CB$, and $\CG$ to denote specific chains from a chain
decomposition.
A set of edges of the cover graph $\GP$ of the poset that form a rainbow in a linear extension 
is called an \df{incoming $\CR$-rainbow} $T_\CR$ of size $s$ if it consists of $s$ edges $(u_1,r_1), \dots, (u_s,r_s)$ such that
$r_i \in \CR$ for all $1 \leq i \leq s$ and $\CC(u_i) \neq \CC(u_j)$ for all $1 \leq i,j \leq s$ with  $i \neq j$.
If $s=w$, $T_\CR$ is called \df{complete} and is denoted by $T^*_\CR$. 
An edge $e$ of $T_\CR$ with both endpoints in $\CR$ is called an \df{$\CR$-self edge}. 
For example,
$T^*_\CR \setminus \{e\}$ is an incoming $\CR$-rainbow of size $w-1$ without the $\CR$-self edge $e$.
Similar notation is used for chains $\CB$ and $\CG$.

\section{Lazy Linear Extensions}
\label{sec:lazy}

First let us recall two properties of linear extensions, whose proofs immediately
follow from the fact that a cover graph of a poset contains no transitive edges.

\begin{prop}\label{prp:bbb}
	A linear extension of a poset $\langle P, < \rangle$ does not contain pattern
	$[r_1 \dots r_2 \dots r_3]$, 
	where $\CC(r_1) = \CC(r_2) = \CC(r_3)$ and $(r_1, r_3)$ is an edge of $\GP$.
\end{prop}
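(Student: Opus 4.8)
The plan is to argue directly from the definitions, deriving a contradiction from the assumption that such a pattern exists. Suppose some linear extension $L$ of $\langle P, < \rangle$ contains the pattern $[r_1 \dots r_2 \dots r_3]$ with $\CC(r_1)=\CC(r_2)=\CC(r_3)$ and $(r_1,r_3)$ an edge of $\GP$. First I would use that $r_1$, $r_2$, $r_3$ all lie on the same chain of the fixed decomposition $\CC$, which by the defining property of $\CC$ means they are pairwise comparable with respect to $<$.

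Next, I would invoke the compatibility of $L$ with $<$: for any two comparable elements $x$ and $y$, $L$ orders them according to $<$, so $x \prec y$ implies $x < y$. Applying this to the pairs $(r_1,r_2)$ and $(r_2,r_3)$, the pattern $[r_1 \dots r_2 \dots r_3]$ forces $r_1 < r_2$ and $r_2 < r_3$, hence $r_1 < r_2 < r_3$ by transitivity of $<$. But then $r_2$ is an element strictly between $r_1$ and $r_3$, so by definition $r_1$ is not covered by $r_3$; equivalently, $(r_1,r_3)$ would be a transitive edge, which $\GP$ does not contain. This contradicts the assumption that $(r_1,r_3)$ is an edge of $\GP$, completing the proof.

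Since every step is a one-line consequence of a definition, there is no real obstacle here; the only point requiring a moment of care is that passing from the order in $L$ to the order in $<$ is legitimate precisely because the three elements are pairwise comparable (an arbitrary triple occurring in this pattern in $L$ need not form a chain), which is exactly what the hypothesis $\CC(r_1)=\CC(r_2)=\CC(r_3)$ supplies.
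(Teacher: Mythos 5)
Your argument is correct and is precisely the reasoning the paper intends: the paper dismisses this proposition with the one-line remark that it ``immediately follows from the fact that a cover graph of a poset contains no transitive edges,'' and your derivation ($r_1, r_2, r_3$ pairwise comparable via the chain, hence $r_1 < r_2 < r_3$ from the linear extension, hence $(r_1,r_3)$ transitive) is exactly the expansion of that remark. Your closing observation---that the chain hypothesis is what licenses transferring the order from $L$ back to $<$---is the right point of care and is well taken.
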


\begin{prop}\label{prp:w2}
	A linear extension of a poset $\langle P, < \rangle$ does not contain pattern
	$[r_1 \dots r_2 \dots b_2 \dots b_1]$, 
	where $\CC(r_1) = \CC(r_2)$, $\CC(b_1) = \CC(b_2)$, and $(r_1, b_1)$ and $(r_2, b_2)$ are edges of $\GP$.
\end{prop}

\cref{prp:w2} implies that for any linear extension of a poset, 
the maximum size of a rainbow is at most $w^2$~\cite{DBLP:journals/siamdm/HeathP97}. 
\arxapp{\cref{thm:general-extension} in \cref{app:general} shows}{In~\cite{arxiv} we show} that 
for every $w \geq 2$, there exists a width-$w$ poset and a linear extension 
of it containing a $w^2$-rainbow.
Hence, a linear extension has be to chosen carefully,
if one seeks for a bound on the queue number of posets that is strictly less than $w^2$. 

In this section, we present and analyze such an extension, which we call \df{lazy}.
Assume that a poset is given with a decomposition into $w$ chains.
Intuitively, a lazy linear extension is constructed incrementally
starting from a minimal element of the poset. In every iteration, the next element is
chosen from the same chain, if possible. Formally, 
for $i=1,\ldots,n-1$, assume that we have computed a lazy linear extension $L$ for
$i$ vertices of $\GP$ and let $v_i$ be last vertex in $L$ (if any). To
determine the next vertex $v_{i+1}$ of $L$, we compute the following set
consisting of all source-vertices of the subgraph of $\GP$ induced by $V\setminus L$:

\begin{equation}\label{eq:s}
S = \{v\in V\setminus L: \; \nexists  (u,v)\in E \text{ with } u \in V \setminus L\}    
\end{equation}
If there is a vertex $u$ in $S$ with $\CC(u) =
\CC(v_i)$, we set $v_{i+1} = u$; otherwise $v_{i+1}$ is freely
chosen from $S$\arxapp{; see \cref{algo:lazy} in \cref{app:pseudocode}}{}. For the
example of \cref{fig:digraph}, observe that $v_1 \prec v_4 \prec v_2 \prec v_3 \prec v_6 \prec v_7 \prec v_5 \prec v_8$
is a lazy linear extension.

\begin{lemma}\label{lem:bxb}
If a lazy linear extension $L$ of poset $\langle P, < \rangle$ contains the
pattern $[r_1 \dots b \dots r_2]$, where $\CC(r_1) = \CC(r_2) \neq \CC(b)$, then
there exists some $x \in P$ with $\CC(x) \neq \CC(r_1)$ between $r_1$ and $r_2$ in $L$,
such that $x < r_2$.
\end{lemma}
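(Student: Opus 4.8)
The plan is to argue by induction on the number $N$ of elements lying strictly between $r_1$ and $r_2$ in $L$; note $N \geq 1$ because $b$ is such an element. Throughout write $c = \CC(r_1) = \CC(r_2)$.

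First I would normalize the left endpoint. Let $r_1'$ be the last element of the maximal block of consecutive chain-$c$ elements of $L$ that contains $r_1$ (possibly $r_1' = r_1$). Since $r_1 \prec b$ and $\CC(b)\neq c$, the element $b$ is not in this block, so $r_1'$ is not the last element of $L$ and $b$ comes after $r_1'$. Let $t$ be the element immediately following $r_1'$; by maximality of the block $\CC(t)\neq c$, and we have $r_1 \preceq r_1' \prec t \preceq b \prec r_2$. In particular $r_1' \prec r_2$, and since $\CC(r_1') = c = \CC(r_2)$ and $L$ complies with $<$, this gives $r_1' < r_2$. Also $r_2$ has not been placed when $t$ is placed, since $t \prec r_2$.

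The crux is to apply the lazy rule at the step placing $t$: the previously placed element is $r_1'$, which lies in chain $c$, yet the chosen next element $t$ lies in a different chain. By the definition of a lazy extension, this is possible only if the current source set $S$ contains no element of chain $c$; in particular $r_2 \notin S$ although $r_2 \notin L$ at that moment. Hence $r_2$ has an in-neighbor $z$ in $\GP$ with $z \notin L$ at that step, so $z \succeq t$; moreover $z < r_2$ (as $(z,r_2)$ is an edge of the cover graph), hence $z \prec r_2$. Now distinguish two cases. If $\CC(z)\neq c$ (this includes the case $z = t$), then $x := z$ works: from $r_1 \preceq r_1' \prec t \preceq z \prec r_2$ we get $r_1 \prec z \prec r_2$, with $\CC(z)\neq c$ and $z < r_2$. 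Otherwise $\CC(z) = c$, and then $z\neq t$, so $r_1' \prec t \prec z$ is a pattern with $\CC(r_1') = \CC(z) = c \neq \CC(t)$; apply the induction hypothesis to it. This is legitimate because $z$ lies strictly between $r_1$ and $r_2$ in $L$ (since $r_1 \prec t \prec z \prec r_2$) but not strictly between $r_1'$ and $z$, while every element strictly between $r_1'$ and $z$ lies strictly between $r_1$ and $r_2$ (using $r_1 \preceq r_1'$ and $z \prec r_2$); hence the induction measure strictly decreases. The hypothesis yields $x$ with $r_1' \prec x \prec z$, $\CC(x)\neq c$, and $x < z$; then $r_1 \preceq r_1' \prec x \prec z \prec r_2$ and $x < z < r_2$ by transitivity, so $x$ is the required element. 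The base case $N = 1$ is covered by the first case, since then the only in-between element is $b$, forcing $z = t = b$ and $\CC(z)\neq c$.

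The main obstacle is getting the reading of the lazy rule exactly right — namely, that deviating from chain $c$ immediately after $r_1'$ forces $r_2$ to have an as-yet-unplaced in-neighbor — and then the bookkeeping that certifies the recursive instance $[r_1' \dots t \dots z]$ is strictly smaller than $[r_1 \dots b \dots r_2]$; both hinge only on $z \prec r_2$ being strict and on $r_1 \preceq r_1'$. Everything else is routine use of the compatibility of $L$ with $<$ and of the cover-graph structure, in the spirit of \cref{prp:bbb}.
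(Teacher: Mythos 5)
Your proof is correct and rests on the same key observation as the paper's: because the lazy rule deviates from chain $\CC(r_1)$ immediately after the last consecutive chain-$\CC(r_1)$ element, the next element of that chain cannot be a source at that step, so it has a not-yet-placed in-neighbor $x$ with $x<r_2$. The paper's (very terse) proof avoids your induction by applying this to the \emph{first} chain-$\CC(r_1)$ element $y$ placed after the deviation rather than to $r_2$ itself, which forces $\CC(x)\neq\CC(r_1)$ outright since every unplaced chain-$\CC(r_1)$ element is $\geq y$; your recursive treatment of the case $\CC(z)=\CC(r_1)$ is a sound, if slightly longer, way around the same point.
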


\begin{proof}
Since the pattern is $[r_1 \dots b \dots r_2]$, $\GP$ contains an edge from a vertex $x$ with $\CC(x) \neq \CC(r_1)$ to a
vertex $y \in \CC(r_1)$ that is between $r_1$ and $r_2$ in $L$ (notice that $x$ may
or may not coincide with $b$). Since the edge belongs to $\GP$, it follows
that $x < y \leq r_2$.
\qed\end{proof}	

\begin{lemma}\label{lem:incoming}
A lazy linear extension of poset $\langle P, < \rangle$ does not contain pattern
\begin{tightcenter}
\includegraphics[page=20]{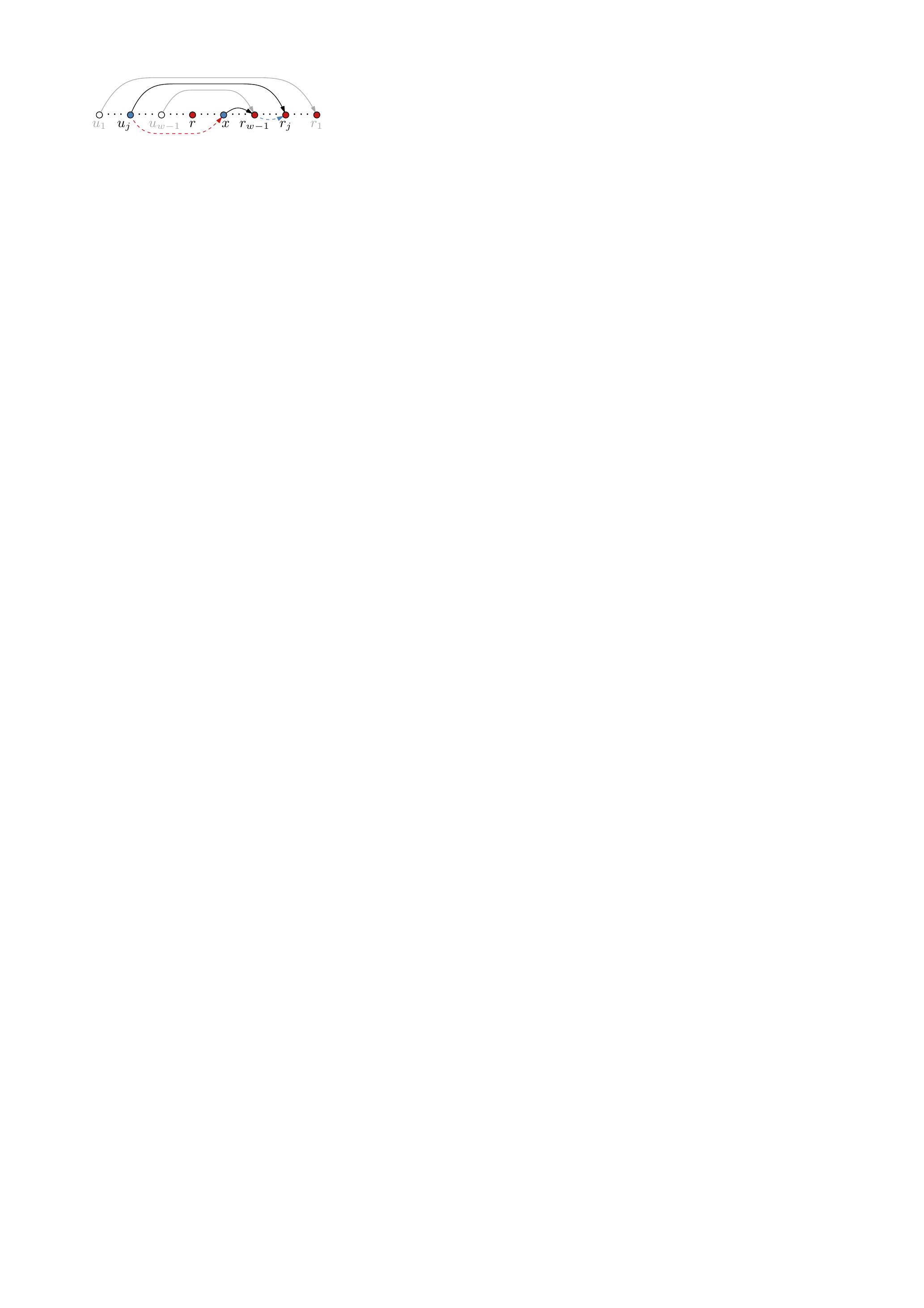}
\end{tightcenter}
where $(u_1,r_1),\dots,(u_{w-1},r_{w-1})$ form an incoming $\CC(r)$-rainbow of size $w-1$, 
such that $\CC(r) \neq \CC(u_i)$ for all $1 \leq i \leq w-1$ and $\CC(r) \neq \CC(b)$.
\end{lemma}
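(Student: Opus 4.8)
\begin{sketch}
Assume for contradiction that a lazy linear extension $L$ of $\langle P,<\rangle$ contains the displayed pattern, and put $\CR:=\CC(r)$. Since $(u_1,r_1),\dots,(u_{w-1},r_{w-1})$ form an incoming $\CR$-rainbow of size $w-1$, the chains $\CC(u_1),\dots,\CC(u_{w-1})$ are pairwise distinct and, by hypothesis, each different from $\CR$; as the decomposition consists of only $w$ chains, they are \emph{exactly} the $w-1$ chains other than $\CR$. In particular, since $\CC(b)\neq\CR$, there is a unique index $k$ with $\CC(b)=\CC(u_k)$. The pattern moreover fixes the positions in $L$: every source $u_i$ precedes $b$, whereas $r$ and the targets $r_1,\dots,r_{w-1}$ all lie in $\CR$ and follow $b$; writing the targets in rainbow order $r_{w-1}\prec\dots\prec r_1$, this gives $r_{w-1}\le r_j$ in the poset for every $j$.

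The core of the argument is to convert the laziness of $L$ into a ``blocker'' lying strictly inside the rainbow. Let $\sigma$ be the last element of $\CR$ preceding $b$ in $L$ (the case where no such $\sigma$ exists is treated at the end). The elements of $\CR$ placed up to any step form a prefix of the chain $\CR$, so the chain-successor $\sigma'$ of $\sigma$ in $\CR$ is the first $\CR$-element of $L$ after $\sigma$, whence $b\prec\sigma'\preceq r_{w-1}$. Thus, right after $\sigma$ was placed, the lazy rule did not place $\sigma'$; since $\CC(\sigma')=\CC(\sigma)=\CR$, by the lazy rule this is only possible if $\sigma'$ is not a source of the remaining digraph at that moment, i.e., $\GP$ contains an edge $(x,\sigma')$ with $x$ not yet placed. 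Then $\CC(x)\neq\CR$ (otherwise $x$ would be an element of $\CR$ strictly between $\sigma$ and $\sigma'$), and $x$ lies strictly inside the rainbow: $\sigma\prec x\prec\sigma'\preceq r_{w-1}$, so $x<\sigma'\le r_{w-1}\le r_j$ for every $j$. (This refines the conclusion of \cref{lem:bxb} applied to the pattern $[\sigma\dots b\dots r_{w-1}]$.) The one place where laziness is essential is the claim that $x$ may be taken to occur \emph{after} $b$; this is precisely why $\sigma$ was chosen as the \emph{last} element of $\CR$ before $b$, so that the blocker $x$ of $\sigma'$ is placed only once the lazy rule has been driven off the chain~$\CR$.

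Now $\CC(x)\neq\CR$ forces $\CC(x)=\CC(u_m)$ for some $m\in\{1,\dots,w-1\}$. From $u_m\prec b\prec x$ and the fact that $u_m$ and $x$ share a chain we get $u_m<x$; and $x<r_{w-1}\le r_m$ gives $x<r_m$. Hence $u_m<x<r_m$, contradicting the fact that $(u_m,r_m)$ is an edge of the cover graph $\GP$: $u_m$ is covered by $r_m$, so no element can lie strictly between them. Therefore $L$ contains no such pattern.

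The delicate step is the second paragraph --- squeezing out of the lazy rule a blocker that provably lies to the right of $b$ (equivalently, to the right of every source) while being below every target simultaneously. Two degenerate configurations need separate treatment: when $\CR$ contributes no element before $b$ at all, in which case one argues symmetrically at the step where the first $\CR$-element following $b$ is placed, since there the lazy rule abandons a non-$\CR$ chain; and when the blocker $x$ obtained above nevertheless precedes $b$, in which case one uses that $(x,\sigma')$ is a cover edge --- so nothing lies strictly between $x$ and $\sigma'$ --- to pin down the position of $u_m$ relative to $x$ and $\sigma'$. In every case the contradiction comes from playing the blocker against the matching source $u_m$ and target $r_m$.
\end{sketch}
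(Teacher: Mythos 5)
Your proof is correct and takes essentially the same route as the paper's, which simply invokes \cref{lem:bxb} on the sub-pattern $[r \dots b \dots r_{w-1}]$ to obtain a blocker $x$ with $\CC(x)\neq\CC(r)$ and $x<r_{w-1}$, and then derives the identical transitivity contradiction for the matching edge $(u_m,r_m)$; you merely re-derive that blocker inline via the chain-successor $\sigma'$ of $\sigma$. The ordering $b\prec x$ that you single out as delicate is neither established by your argument nor needed --- $u_m\prec r\preceq\sigma\prec x$ already gives $u_m<x$, and $x<\sigma'\le r_{w-1}\le r_m$ gives the other inequality --- so the ``degenerate configurations'' in your final paragraph require no separate treatment (and the case of no $\CR$-element before $b$ cannot occur, since $r$ itself is one).
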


\begin{proof}
Assume to the contrary that there is a lazy linear extension $L$ containing the pattern. Since $[r \dots b \dots r_{w-1}]$ holds in
$L$, by \cref{lem:bxb}, there is $x$ with $\CC(x) \neq \CC(r_{w-1})$ between
$r$ and $r_{w-1}$ in $L$ such that $x < r_{w-1}$. Since $\CC(x) \neq \CC(r_{w-1})$, 
there is  $1 \leq j \leq w-1$ such that $\CC(x) = \CC(u_j)$,
which implies $u_j < x$. Thus:
\begin{tightcenter}
\includegraphics[page=21]{allproofs.pdf}
\end{tightcenter}
Since
$u_j < x < r_{w-1} \le r_j$, there is a path from $u_j$ to $r_j$ in $\GP$. Thus, edge $(u_j, r_j)$ is
transitive; a contradiction.
\qed\end{proof}

\begin{theorem}\label{thm:lazy}
The maximum size of a rainbow formed by the edges of $\GP$ in a lazy linear 
extension of a poset $\langle P,< \rangle$ of width $w$ is at most $w^2-w$.
\end{theorem}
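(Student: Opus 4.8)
The plan is to suppose, towards a contradiction, that a lazy linear extension $L$ of a width-$w$ poset $\langle P,<\rangle$ contains a rainbow $R$ with $|R| = w^2-w+1$, and to expose inside it the pattern forbidden by \cref{lem:incoming}. The skeleton is a counting argument in the spirit of the $w^2$ bound, sharpened by laziness.

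First I would set up the counting. By \cref{prp:w2}, for each chain $\CR$ the edges of $R$ whose head lies in $\CR$ have pairwise distinct tail-chains, so at most $w$ edges of $R$ enter $\CR$; since $|R| = w^2-w+1 > w(w-1)$, pigeonhole yields a chain $\CR$ receiving exactly $w$ edges of $R$, i.e.\ a complete incoming rainbow $T^*_\CR$. Next I would locate the self edge: within any incoming $\CR$-rainbow contained in $R$, a self edge, if present, must be the innermost one, since otherwise some $\CR$-element would lie strictly between the endpoints of that self edge in $L$ and hence, being in the same chain, strictly between them in $\langle P,<\rangle$, making the self edge transitive — impossible in a cover graph. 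Thus $T^*_\CR$ contains exactly one $\CR$-self edge $e=(u_w,r_w)$ with $u_w\in\CR$ (the $w$ tails exhaust all chains), $e$ is the innermost of the $w$ edges, and the remaining $w-1$ edges form an incoming $\CR$-rainbow of size $w-1$ with all tails outside $\CR$, nesting $e$. The mirror transitivity argument shows $e$ is also the innermost \emph{outgoing} edge of $\CR$ in $R$, so every edge of $R$ nested inside $e$ has both endpoints outside $\CR$.

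Then comes the use of laziness. I would choose $\CR$ so that, among all chains carrying a complete incoming rainbow in $R$, its self edge $e=(u_w,r_w)$ is nested as deeply as possible. No $\CR$-element lies strictly between $u_w$ and $r_w$ in $L$ (that would make $e$ transitive). If some vertex $b\notin\CR$ lies strictly between $u_w$ and $r_w$, or if some edge of $R$ lies inside $e$ (an endpoint of which, necessarily outside $\CR$, then plays the role of $b$), then the size-$(w-1)$ incoming $\CR$-rainbow together with $r:=u_w\in\CR$ and $b$ realizes exactly the configuration forbidden by \cref{lem:incoming} — a contradiction. The same move rules out a \emph{second} chain with a complete incoming rainbow: its self edge would then nest $e$, so an endpoint of $e$, lying outside that chain, serves as the forbidden $b$. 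Hence at most one chain $\CR$ carries a complete incoming rainbow, and since $w+(w-1)^2 = w^2-w+1$, every other chain receives exactly $w-1$ edges of $R$, while $e$ is the innermost edge of all of $R$ with $u_w,r_w$ consecutive in $L$.

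It remains to kill this last, degenerate configuration, which I expect to be the main obstacle. Here one inspects the lazy choice made immediately before $u_w$ and immediately after $r_w$: since the element of $\CR$ that would have to be placed next (namely $r_w$, resp.\ $r_{w-1}$) is forced by laziness unless it is unavailable, being unavailable produces either a $\CR$-edge nested inside $e$, or a non-$\CR$ vertex squeezed between $u_w$ and $r_w$, or between $r_w$ and $r_{w-1}$; in each case one again reads off the pattern of \cref{lem:incoming}, which yields the final contradiction and establishes the bound $w^2-w$.
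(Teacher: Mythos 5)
Your setup and the first two thirds of the argument track the paper's proof of \cref{thm:lazy} closely and are sound: the pigeonhole step producing a complete incoming rainbow $T^*_\CR$, the observation (via \cref{prp:bbb}) that its self edge $e=(u_w,r_w)$ is innermost, the applications of \cref{lem:incoming} showing that no vertex outside $\CR$ sits inside $e$ and that no second chain carries a complete incoming rainbow, and the count $w+(w-1)^2=w^2-w+1$ forcing every other chain to receive exactly $w-1$ edges. All of that is essentially the paper's argument, with the extremal choice of $\CR$ as a harmless variant.

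The gap is in your final paragraph. Every invocation of \cref{lem:incoming} you make there uses the incoming $\CR$-rainbow with $r:=u_w$, and therefore needs a vertex $b\notin\CR$ strictly between $u_w$ and $r_{w-1}$. Your inspection of the lazy choices only shows that \emph{if} such a $b$ appears you are done; it produces nothing in the case where the segment of $L$ from $r_w$ to $r_{w-1}$ contains only elements of $\CR$ --- in particular when $r_{w-1}$ immediately follows $r_w$, which laziness happily permits (the next element of $\CR$ may simply be available and is then placed at once). That configuration is consistent with everything you have established, so the proof does not close. The paper kills it by switching chains at exactly this point: it sets $\CB=\CC(u_{w-1})$, the tail chain of the second-innermost edge (with $\CB\neq\CR$), observes that the size-$(w-1)$ incoming $\CB$-rainbow $T_\CB$ guaranteed by the counting cannot contain a $\CB$-self edge --- such an edge would nest $(u_{w-1},r_{w-1})$ and sandwich $u_{w-1}\in\CB$ between two elements of $\CB$, violating \cref{prp:bbb} --- and then applies \cref{lem:incoming} to $T_\CB$ with $r:=u_{w-1}\in\CB$ and $b:=r_{w-1}\in\CR$. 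Using $r_{w-1}$ itself as the foreign vertex inside a \emph{different} chain's incoming rainbow is the idea your sketch is missing; without it (or a substitute), the bound $w^2-w$ is not established.
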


\begin{proof}
Assume to the contrary that there is a lazy linear extension $L$ that~contains a
$(w^2-w+1)$-rainbow $T$. By \cref{prp:w2} and the pigeonhole
principle, $T$ contains at least~one complete incoming rainbow of size $w$; denote it by $T^*_\CR$
and the corresponding~chain by $\CR$. By \cref{prp:bbb}, the $\CR$-self edge of $T^*_\CR$ is
innermost in $T^*_\CR$. Thus, if $(u_1,r_1),\dots,(u_{w},r_{w})$ are the edges of $T^*_\CR$ and
$u_w \in \CR$, then without loss of generality, we may assume that the following holds in $L$.

\begin{tightcenter}
\includegraphics[page=17]{allproofs.pdf}
\end{tightcenter}
We next show that $(u_{w},r_{w})$ is the innermost  and $(u_{w-1}, r_{w-1})$ is the second innermost edge in $T$.
Assume to the contrary that there exists an edge $(x,y)$ in $T$ that does not belong to $T^*_\CR$ (that is, $\CC(y) \neq \CR$) and which is nested by $(u_{w-1}, r_{w-1})$. Regardless of whether $(x,y)$ nests $(u_{w},r_{w})$ or not, we deduce the following. 
\begin{tightcenter}
\includegraphics[page=22]{allproofs.pdf}
\end{tightcenter}
Together with $u_w \in \CR$ and $y \notin \CR$, we apply \cref{lem:incoming}, which yields a contradiction.
Since $(u_{w},r_{w})$ and $(u_{w-1}, r_{w-1})$ are the two innermost edges of $T$, it follows that $T$ does not contain another complete incoming rainbow of size $w$.

Hence, each of the remaining $w-1$ incoming rainbows has size exactly $w-1$. Consider vertex $u_{w-1}$ and let without loss of generality $\CC(u_{w-1}) = \CB$. By \cref{prp:bbb}, $\CB \neq \CR$. We claim that the incoming $\CB$-rainbow $T_\CB$ does not contain the $\CB$-self edge. Assuming the contrary, this $\CB$-self edge nests $(u_{w-1},r_{w-1})$ because $(u_{w},r_{w})$ and $(u_{w-1}, r_{w-1})$ are the two innermost edges of $T$.
Since $\CC(u_{w-1}) = \CB$, we obtain a contradiction by \cref{prp:bbb}.
Thus, $T_\CB$ is a $\CB$-rainbow of size $w-1$ containing no $\CB$-self edge.
All edges of $T_\CB$ nest $(u_{w-1}, r_{w-1})$, which yields the forbidden pattern of \cref{lem:incoming}
formed by vertices of $T_\CB$, $u_{w-1} \in \CB$, and $r_{w-1} \in \CR$; a contradiction.
\qed
\end{proof}

\noindent 
\arxapp{\cref{thm:lazy-bound} in \cref{app:lazy} shows}{In~\cite{arxiv} we show} that our
analysis is tight, i.e., there are~posets of width $w$ and corresponding lazy
linear extensions containing $(w^2-w)$-rainbows.

\section{MRU Extensions}
\label{sec:mru}

We now define a special type of lazy linear extensions for a width-$w$ poset
$\langle P, < \rangle$, which we call \df{most recently used}, or simply \df{MRU}. 
For $i=1,\ldots,n-1$, assume that we have computed a linear extension $L$ for $i$ vertices of $\GP$, 
which are denoted by $v_1, \dots, v_i$.
To determine the next vertex of $L$, 
we compute set $S$ of Eq.~(\ref{eq:s}). Among all vertices in $S$,
we select one from the most recently used chain (if any).
Formally, we select a vertex $u \in S$ such that $\CC(u) = \CC(v_j)$
for the largest $1 \le j \le i$. If such vertex does not exist, 
we choose $v_{i+1}$ arbitrarily from $S$\arxapp{; see \cref{algo:MRU} in \cref{app:pseudocode}}{}.
For the example of \cref{fig:digraph}, observe that 
$v_1 \prec v_4 \prec v_2 \prec v_3 \prec v_6 \prec v_5 \prec v_7 \prec v_8$ is an MRU extension. 

For a linear extension $L$ of poset $\langle P, < \rangle$, and two elements $x$
and $y$ in $P$, let $\CC[x, y]$ be the subset of chains whose
elements appear between $x$ and $y$ (inclusively) in $L$, that is,
$\CC[x, y] = \{\CC(z) : x \preceq z \preceq y\}$.

\begin{lemma}\label{prp:local-source-1}
Let $L$ be an MRU extension of a width-$w$ poset $\langle P, < \rangle$ containing 
pattern $[r_1 \dots r_2 \dots b]$, such that $\CC(r_1) = \CC(r_2) \neq \CC(b)$ 
and there is no element in $L$ between $r_1$ and $r_2$ from chain $\CC(r_1)$. 
If $\CC[r_1,r_2]=\CC[r_1,b]$, then $r_2 < b$. 
\end{lemma}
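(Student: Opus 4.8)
The plan is to argue by strong induction on the position of $b$ in $L$, keeping $r_1$ and $r_2$ fixed throughout; write $\CR := \CC(r_1) = \CC(r_2)$ and $\CB := \CC(b)$. First I would extract two facts from the hypotheses. Since $\CB = \CC(b) \in \CC[r_1,b] = \CC[r_1,r_2]$ and $\CB \neq \CR$, some $\CB$-element lies strictly between $r_1$ and $r_2$ in $L$; hence $b$ is not the minimum of its chain, so $b$ has a chain-predecessor $\hat b$ in $\CB$, and $\hat b$ is in fact the $\CB$-element immediately preceding $b$ in $L$ because the elements of a chain occur in $L$ in their chain order. As the $\CB$-element lying between $r_1$ and $r_2$ precedes $b$, it is at most $\hat b$ in $\CB$, so $\hat b$ comes after $r_1$ in $L$. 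Finally, $\hat b < b$ forces $b$ not to be a source of $\GP$, so there is an in-neighbour of $b$ in $\GP$; let $p$ be the one appearing last in $L$.

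If $r_2 \preceq p$, we are essentially done. If $p = r_2$ then $(r_2,b)$ is an edge of $\GP$; if $r_2 \prec p$ and $\CC(p) = \CR$ then $r_2 < p < b$ (same chain, then the edge $(p,b)$); and if $r_2 \prec p$ with $\CC(p) \neq \CR$, then the pattern $[r_1 \dots r_2 \dots p]$ again satisfies all hypotheses of the lemma — the only point to verify, $\CC[r_1,r_2] = \CC[r_1,p]$, holds because $r_2 \preceq p \preceq b$ squeezes this set between $\CC[r_1,r_2]$ and $\CC[r_1,b]$, which coincide — and since $p$ precedes $b$, the induction hypothesis gives $r_2 < p < b$.

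The substantive case is $p \prec r_2$, that is, \emph{every} in-neighbour of $b$ in $\GP$ precedes $r_2$ in $L$; here I would derive a contradiction. I first claim $\hat b$ precedes $r_2$. Otherwise $r_2 \prec \hat b$; then at the moment right after $\hat b$ is placed, every in-neighbour of $b$ is already placed (each precedes $r_2 \prec \hat b$), so $b$ is a source of the not-yet-placed subgraph, and it is the least unplaced element of $\CB$ since no $\CB$-element lies between $\hat b$ and $b$ in $L$; as the chain just used is $\CB$, the MRU rule would place $b$ immediately after $\hat b$. Then $(\hat b,b)$ is an edge of $\GP$ (nothing lies between them in $L$), and the induction hypothesis applied to $[r_1 \dots r_2 \dots \hat b]$ — valid, with $\hat b$ preceding $b$ — yields $r_2 < \hat b$, contradicting that every in-neighbour of $b$ precedes $r_2$. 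Hence $r_1 \prec \hat b \prec r_2 \prec b$.

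It then remains to inspect the step of the MRU construction at which $r_2$ is selected. By the case assumption all in-neighbours of $b$ have already been placed, so $b$ is a source of the remaining subgraph at that step; moreover $\hat b$ is placed and no $\CB$-element lies between $\hat b$ and $b$ in $L$, so $b$ is the least unplaced element of $\CB$ — thus chain $\CB$ currently has a source, and it was last used at $\hat b$. On the other hand, since no $\CR$-element lies between $r_1$ and $r_2$ in $L$, chain $\CR$ was last used at $r_1$, which precedes $\hat b$. So the MRU rule, which picks a source from the most recently used chain that still has one, could not have chosen $r_2 \in \CR$ at this step, a contradiction, which closes the induction. I expect this final case to be the main obstacle: everything hinges on combining the precise wording of the MRU rule with the asymmetry that $\hat b$ sits strictly between $r_1$ and $r_2$ while no $\CR$-element does, so that as soon as $b$ becomes a source, chain $\CB$ outranks chain $\CR$ and $b$ would necessarily be scheduled before $r_2$.
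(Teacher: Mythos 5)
Your proof is correct and follows essentially the same route as the paper's: both induct on the position of $b$ in $L$ (the paper phrases this as taking the first counterexample after $r_2$) and exploit that, at the step where $r_2$ is selected, chain $\CC(b)$ was used more recently than $\CC(r_2)$, so the MRU rule forces $b$ to have an in-neighbour placed after $r_2$, to which the induction hypothesis applies. Your write-up is somewhat more elaborate (the explicit case split on the last in-neighbour $p$ and the auxiliary argument about the chain-predecessor $\hat b$), but the underlying mechanism is identical.
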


\begin{proof}
Assume to the contrary that there is some $b$ for which $r_2 < b$ does~not hold. 
Without loss of generality, let $b$ be the first (after $r_2$) of those elements in $L$.
Since $\CC[r_1,r_2]=\CC[r_1,b]$, there
are elements between $r_1$ and $r_2$ in $L$ from chain $\CC(b)$. Let $b_1$ be
the last such element in $L$. Hence, $r_1 \prec b_1 \prec r_2 \prec b$. 
Consider the incremental construction of $L$.
Since there is no element between $r_1$ and $r_2$ in $L$ from chain
$\CC(r_1)$,  
the chain of $b$ was ``more recent'' than the one of $r_2$, when $r_2$ was chosen as the next element. 
Thus, there is an edge $(x,b)$ in $\GP$ with $r_2 \prec x$ in $L$. 
Since $b$ is the first element that is not comparable to $r_2$,
then $r_2 < x$ holds. Hence, $r_2 < b$; a contradiction to our assumption that $r_2 < b$ does~not hold.
\qed\end{proof}
\begin{corollary}\label{cor:local-source-1}
Let $L$ be an MRU extension of a width-$w$ poset $\langle P, < \rangle$ containing 
pattern $[r_1 \dots r_2]$, such that $\CC(r_1) = \CC(r_2)$ 
and there is no element in $L$ between $r_1$ and $r_2$ from chain $\CC(r_1)$. 
 If $|\CC[r_1,r_2]|=w$, then $r_2$ is comparable to all subsequent elements in~$L$.
\end{corollary}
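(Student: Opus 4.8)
The plan is to deduce the corollary directly from \cref{prp:local-source-1} by a two‑case argument on the chain of a subsequent element. Let $b$ be any element with $r_2 \prec b$ in $L$; since $r_2$ precedes $b$, showing that $r_2$ and $b$ are comparable is the same as showing $r_2 < b$. I would split according to whether $\CC(b)=\CC(r_2)$ or not.

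In the easy case $\CC(b)=\CC(r_2)=\CC(r_1)$, the two elements $r_2$ and $b$ lie on a common chain of the fixed decomposition and are distinct (as $r_2\prec b$), so the defining property of $\CC$ forces $r_2<b$ or $b<r_2$; the latter is impossible because $r_2\prec b$ in the linear extension, hence $r_2<b$. This case needs no appeal to the lemma and is purely about the chain decomposition.

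The main case is $\CC(b)\neq\CC(r_2)=\CC(r_1)$, where I would invoke \cref{prp:local-source-1} on the pattern $[r_1 \dots r_2 \dots b]$ (valid since $r_1\prec r_2$ by the given pattern and $r_2\prec b$ by the choice of $b$). The hypotheses $\CC(r_1)=\CC(r_2)\neq\CC(b)$ and ``no element of chain $\CC(r_1)$ lies between $r_1$ and $r_2$'' are exactly those carried over from the corollary together with the current case assumption. The only point that requires a line of justification is the equality $\CC[r_1,r_2]=\CC[r_1,b]$: since $|\CC[r_1,r_2]|=w$, the set $\CC[r_1,r_2]$ is all of $\{1,\dots,w\}$, and from $r_1\preceq r_2\preceq b$ we get $\CC[r_1,b]\supseteq\CC[r_1,r_2]=\{1,\dots,w\}$, so in fact $\CC[r_1,b]=\CC[r_1,r_2]$. \cref{prp:local-source-1} then yields $r_2<b$, completing this case and hence the proof.

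I do not expect a genuine obstacle here: the corollary is essentially a repackaging of \cref{prp:local-source-1}, with the width‑$w$ hypothesis $|\CC[r_1,r_2]|=w$ serving precisely to make the set‑equality hypothesis of the lemma hold for \emph{every} subsequent $b$ at once. The only things to be careful about are not to overlook the same‑chain case (where the lemma is inapplicable) and to phrase the $\CC[r_1,r_2]=\CC[r_1,b]$ step so that it is clear it follows from maximality of $|\CC[r_1,r_2]|$ rather than from any property of $b$.
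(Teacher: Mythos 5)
Your proposal is correct and matches the paper's (implicit) argument: the corollary is stated without proof precisely because it follows from \cref{prp:local-source-1} in the way you describe, with $|\CC[r_1,r_2]|=w$ forcing $\CC[r_1,b]=\CC[r_1,r_2]$ for every subsequent $b$, and the same-chain case handled trivially by the chain decomposition. Nothing is missing.
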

\noindent Next we describe a forbidden pattern which is central in our proofs.
\begin{lemma}\label{lem:bWb}
An MRU extension $L$ of a width-$w$ poset $\langle P, < \rangle$ does not contain the following pattern, even if $u_k = b_1$

\begin{tightcenter}
\includegraphics[page=16]{allproofs.pdf}
\end{tightcenter}

%
%
\begin{compactitem}[-]
	\item $\CC(u_i) \neq \CC(u_j)$ for $1 \le i, j \le w$ with $i \neq j$,
	\item $(u_1,r_1),\dots,(u_k,r_k)$ form an incoming $\CR$-rainbow of size $k$ 
	for some $1 \le k \le w$,
	\item between $b_1$ and $b_2$ in $L$, there is an element from $\CR$
	but no elements from $\CB = \CC(b_1) = \CC(b_2)$.
\end{compactitem}	
\end{lemma}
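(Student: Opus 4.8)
The plan is to argue by contradiction, following the template of the proof of \cref{lem:incoming} but replacing the plain laziness used there by the sharper MRU guarantees of \cref{prp:local-source-1} and \cref{cor:local-source-1}. Suppose an MRU extension $L$ contains the pictured pattern: the incoming $\CR$-rainbow $(u_1,r_1),\dots,(u_k,r_k)$ of size $k$, the pairwise chain-distinct vertices $u_1,\dots,u_w$ — so that $\{\CC(u_1),\dots,\CC(u_w)\}$ is the \emph{entire} set of $w$ chains, with $\CB=\CC(u_m)$ and $\CR=\CC(u_\ell)$ for some $m,\ell$ — and the vertices $b_1,b_2\in\CB$ with an $\CR$-vertex, but no $\CB$-vertex, strictly between them in $L$.

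First I would extract a witness as in \cref{lem:incoming}, anchored at the innermost rainbow edge. By the shape of the pattern, $b_1$, $b_2$, and the $\CR$-vertex $\rho$ separating them all lie inside the innermost edge $(u_k,r_k)$, so $[\rho\dots b_2\dots r_k]$ is a sub-pattern of the form $[\rho\dots\beta\dots\rho']$ with $\rho,r_k\in\CR$ and $b_2\notin\CR$. Applying \cref{lem:bxb} gives an $x$ with $\CC(x)\neq\CR$, lying strictly between $\rho$ and $r_k$ in $L$, and with $x<r_k$ in $<$. Since the chains $\CC(u_i)$ are exhaustive, $\CC(x)=\CC(u_j)$ for a unique $j$; and since $u_j$ precedes $\rho$, hence $x$, in $L$ while sharing its chain, $u_j<x$. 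If $j\le k$ we finish as in \cref{lem:incoming}: $r_k$ is innermost, so $r_k\preceq r_j$ and thus $r_k\le r_j$ in $<$, whence $u_j<x<r_k\le r_j$, exhibiting a directed path from $u_j$ to $r_j$ in $\GP$ and contradicting that $(u_j,r_j)$ is a cover edge.

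The genuinely new part is excluding $j>k$, i.e.\ that the witness lands on one of the ``extra'' chains $\CC(u_{k+1}),\dots,\CC(u_w)$ that carry no rainbow edge, and this is where the wall hypothesis is indispensable. Because $b_1$ and $b_2$ are $L$-consecutive on $\CB$ with some $\CR$-vertex but no $\CB$-vertex between them, \cref{prp:local-source-1} applies to the pair $b_1,b_2$ (taking as third vertex the first one after $b_2$ whose chain already belongs to $\CC[b_1,b_2]$, so that the set-equality hypothesis holds), or \cref{cor:local-source-1} applies when $\CC[b_1,b_2]$ already comprises all $w$ chains; either way one obtains a comparability of $b_2$, and hence (via $b_1<b_2$ on $\CB$) of $b_1$, with a vertex that lies further right in $L$ but — since $\CR\in\CC[b_1,b_2]$ — no later than $r_k$. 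Threading this comparability back through the nested rainbow, together with $\CB=\CC(u_m)$, forces one of the rainbow cover edges $(u_i,r_i)$ to become transitive (or, in degenerate subcases, contradicts \cref{prp:bbb}), which is the desired contradiction; thus $j>k$ cannot occur, and combined with the previous paragraph this completes the proof. The ``even if $u_k=b_1$'' clause costs nothing: that identification merely shortens the pattern while leaving every comparability above intact.

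The step I expect to be the real obstacle is precisely this exclusion of the extra-chain case. It requires a careful case analysis of where $b_1$, $b_2$, and the separating $\CR$-vertex sit relative to $u_{k+1},\dots,u_w$ and to $r_k,\dots,r_1$ — in particular, which of the chains $\CB$ and $\CR$ carry a rainbow edge and which are ``extra'' — so that in each configuration one can pin down a pair of equal-chain vertices that is genuinely $L$-consecutive (the precondition of \cref{prp:local-source-1} and \cref{cor:local-source-1}) and then route the resulting comparability through the rainbow without turning the ``transitive'' edge into a trivial one. A secondary nuisance is the usual strict/non-strict bookkeeping ($<$ versus $\preceq$) whenever $x$, $\rho$, or an endpoint of the pattern coincides with a rainbow vertex.
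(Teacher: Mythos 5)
Your opening move (argue by contradiction, extract a witness $x$ via \cref{lem:bxb} from the sub-pattern $[\rho \dots b_2 \dots r_k]$, and kill the case where $\CC(x)$ carries a rainbow edge by making $(u_j,r_j)$ transitive) is sound and mirrors \cref{lem:incoming}. But the case you yourself flag as ``the real obstacle'' --- the witness landing on one of the chains $\CC(u_{k+1}),\dots,\CC(u_w)$ that carry no rainbow edge --- is precisely the content of the lemma, and your proposal does not close it. The specific repair you sketch fails: \cref{prp:local-source-1} requires $\CC[b_1,b_2]=\CC[b_1,b]$, i.e.\ that \emph{no} vertex of a new chain appears between $b_2$ and $b$; choosing $b$ to be ``the first vertex after $b_2$ whose chain already belongs to $\CC[b_1,b_2]$'' does not enforce this, since a vertex from a chain outside $\CC[b_1,b_2]$ may well precede that $b$, so the hypothesis is simply not met. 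And ``threading this comparability back through the nested rainbow'' is not an argument --- the whole difficulty is that the chain of the offending vertex has no rainbow edge available to be made transitive.

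The paper resolves this with an idea absent from your proposal: an induction on $|\CC|-|\CC[b_1,b_2]|$. If no chain outside $\CC[b_1,b_2]$ appears between $b_2$ and the first $\CR$-vertex $r_0$ after $b_2$, then \cref{prp:local-source-1} (or \cref{cor:local-source-1} in the base case $|\CC[b_1,b_2]|=w$) legitimately applies and gives $b_2<r_0\le r_i$, making $(u_i,r_i)$ transitive for the $\CB$-source $u_i$ of the rainbow. Otherwise the first such vertex $g_1$ lies on a chain $\CG\notin\CC[b_1,b_2]$; since every chain without a rainbow edge already lies in $\CC[b_1,b_2]$ (the vertices $u_{k+1},\dots,u_w$ sit between $b_1$ and $b_2$), $\CG$ must carry a rainbow edge $(u_\ell,r_\ell)$, and the last $\CG$-vertex $g_2$ with $u_\ell\preceq g_2\prec b_1$ yields a strictly wider instance of the same pattern with $\CC[g_2,g_1]\supsetneq\CC[b_1,b_2]$, to which the inductive hypothesis applies. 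Without this widening-and-recursing mechanism (or an equivalent substitute), your argument does not go through.
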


\begin{proof}
Since there are no elements between
$b_1$ and $b_2$ in $L$ from $\CB$ and since $\CC(u_i) \neq \CC(u_j)$ for $1 \le i, j \le w$ with $i \neq j$, one of $u_1,\ldots, u_k$ belongs
to $\CB$. Let $u_i$ be this element with $1 \leq i \leq k$, that is,
$\CC(u_i)=\CB$. Since $(u_1,r_1),\dots,(u_k,r_k)$ form an incoming $\CR$-rainbow, 
$(u_i,r_i)$ is an edge of $\GP$. Notice that
$[u_i \dots b_1 \dots b_2 \dots r_i]$ holds in $L$ and that $u_i=b_1$ may hold
if $i=k$.

Our proof is by induction on $|\CC| - |\CC[b_1, b_2]|$, which ranges between $0$
and $w-2$.
In the base case $|\CC| - |\CC[b_1, b_2]| = 0$, that is, $|\CC[b_1,
b_2]| = w$. By \cref{cor:local-source-1}, $b_2$ is comparable to all subsequent elements in $L$. In particular, $b_2 < r_i$, which implies that $(u_i,
r_i)$ is transitive in $\GP$, since $u_i \le b_1 < b_2 < r_i$; a contradiction.

Assume $|\CC| - |\CC[b_1, b_2]| > 0 $. Let $r_0$ be the first vertex from 
$\CR$ after $b_2$ in $L$, that is, $r_0 \preceq r_k$. If
there are no elements between $b_2$ and $r_0$ from $\CC \setminus \CC[b_1,b_2]$ 
(that is, $\CC[b_1,b_2] = \CC[b_2,r_0]$), then by
\cref{prp:local-source-1} it follows that $b_2 < r_0$, which implies $u_i \le
b_1 < b_2 < r_0 \le r_i$. Thus, edge $(u_i, r_i)$ is transitive in $\GP$;
a contradiction. Therefore, we may assume that there are elements between $b_2$
and $r_0$ in $L$ from $\CC \setminus \CC[b_1, b_2]$. Let $g_1$ be the
first such element; denote $\CC(g_1) = \CG$. 
Since between $b_1$ and $b_2$ in $L$ there is an element
from $\CR$ (that is, $\CR \in \CC[b_1,b_2]$), $\CG \neq \CR$
holds. Similarly, $\CG \neq \CB$. Let $(u_\ell,r_\ell)$ be the
edge of the incoming $\CR$-rainbow with $\CC(u_\ell) = \CG$; notice that
such an edge exists as $\CG \in \CC \setminus \CC[b_1,b_2]$. Since $r_0$
is the first element from $\CR$ after $b_2$ in $L$, $r_0 \preceq
r_\ell$.  Thus, $[u_\ell \dots b_1 \dots b_2 \dots g_1 \dots r_0
\dots r_\ell]$ holds in $L$ such that 
$\CC(u_\ell)=\CG \notin \{\CR, \CB\}$. Let $g_2$ be the last element between $u_\ell$ and $b_1$
from $\CG$, that is, $u_\ell \preceq g_2 \prec b_1$ in $L$. Now, consider the
pattern:
\begin{tightcenter}
\includegraphics[page=11]{allproofs.pdf}
\end{tightcenter}
which satisfies the conditions of the lemma, since between $g_2$ and $g_1$ in
$L$ there is an element of $\CR$ (namely, the one between $b_1$ and $b_2$ in
$L$) and no elements of $\CG$ (by the choice of $g_1$ and $g_2$). Further,
$|\CC| - |\CC[g_2, g_1]| < |\CC| - |\CC[b_1, b_2]|$, since $\{\CG\} =
\CC[g_2, g_1] \setminus \CC[b_1, b_2]$. By the inductive hypothesis, the
aforementioned pattern is not contained in $L$. Thus, also the initial one is
not contained.
\qed\end{proof}	

In the next five lemmas we study configurations that cannot appear in a rainbow 
formed by the edges of $\GP$ in an MRU extension.

\begin{lemma}\label{lem:incoming_selfedge}
Let $\CR$ and $\CB$ be different chains of a width-$w$ poset.
Then a rainbow in an MRU extension of the poset does not contain all edges from
\[
T^*_\CR ~~\cup~~ \{(b_1, b_2)\},
\]
where $b_1, b_2 \in \CB$ and $T^*_\CR$ is a complete
incoming $\CR$-rainbow.
\end{lemma}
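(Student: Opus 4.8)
The plan is to argue by contradiction, using \cref{lem:bWb} as the main tool. Suppose some MRU extension $L$ of the width-$w$ poset contains a rainbow whose edge set includes a complete incoming $\CR$-rainbow $T^*_\CR = \{(u_1,r_1),\dots,(u_w,r_w)\}$ together with an additional edge $(b_1,b_2)$ with $b_1,b_2\in\CB$, $\CB\neq\CR$. Since $T^*_\CR$ is complete, one of the sources $u_1,\dots,u_w$ lies in $\CB$, say $u_i\in\CB$; and since all $w$ targets $r_1,\dots,r_w$ lie in $\CR$, by \cref{prp:bbb} the $\CR$-self edge (the one with $u_j\in\CR$) must be innermost among the edges of $T^*_\CR$. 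The key point is to locate the new edge $(b_1,b_2)$ relative to $T^*_\CR$ in the rainbow order: since all these edges are pairwise nested, $(b_1,b_2)$ sits somewhere in the nesting chain, and both $b_1$ and $b_2$ lie strictly between two consecutive targets of $T^*_\CR$ (or inside the innermost one), hence in particular between $b_1$ and $b_2$ there is at least one element of $\CR$ (namely one of the $r_j$'s or the fact that $(b_1,b_2)$ nests the $\CR$-self edge forces an $\CR$-element between them).

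First I would make the relative position precise. Because the $\CR$-self edge $(u_j,r_j)$ with $u_j,r_j\in\CR$ is innermost in $T^*_\CR$, and $(b_1,b_2)$ belongs to the same rainbow, $(b_1,b_2)$ either nests $(u_j,r_j)$ or is nested by it. If $(b_1,b_2)$ nests $(u_j,r_j)$, then $u_j$ (an $\CR$-element) lies between $b_1$ and $b_2$. If instead $(u_j,r_j)$ nests $(b_1,b_2)$, then $(b_1,b_2)$ is nested by every edge of $T^*_\CR$, in particular by the edge $(u_i,r_i)$ whose source is in $\CB$; but then $u_i,b_1,b_2,r_i$ appear in that order with $u_i\in\CB$, and this already contradicts \cref{prp:bbb} unless there is an $\CR$-element between $b_1$ and $b_2$ — and there is, since $r_j$ (or any target) lies between $b_2$ and $r_i$... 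I need to be a little careful here, so the cleaner route is: pick the edge $(u_i,r_i)$ of $T^*_\CR$ whose source $u_i\in\CB$; all other edges of $T^*_\CR$ either nest or are nested by it, and in any case the $w-1$ sources $\{u_\ell:\ell\neq i\}$ occupy $w-1$ distinct chains different from $\CB$. Restrict attention to the sub-rainbow consisting of those edges $(u_\ell,r_\ell)$ of $T^*_\CR$ that nest $(b_1,b_2)$ together with $(b_1,b_2)$ itself; these form an incoming $\CR$-rainbow of some size $k$, and $[u_\ell\dots b_1\dots b_2\dots r_\ell]$ holds for each such $\ell$.

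The crux is to verify the three hypotheses of \cref{lem:bWb}: that the sources involved lie in pairwise distinct chains (inherited from $T^*_\CR$ being a complete incoming rainbow), that the relevant edges form an incoming $\CR$-rainbow of size $k\le w$, and — the delicate condition — that between $b_1$ and $b_2$ in $L$ there is an element of $\CR$ but no element of $\CB$. The ``no element of $\CB$ between $b_1$ and $b_2$'' part is where I expect to spend the most effort: it need not hold for the originally given $b_1,b_2$, so the fix is to replace $b_1$ by the last $\CB$-element preceding $b_2$ in $L$ that still precedes all the relevant $\CR$-targets — i.e. shrink the edge $(b_1,b_2)$ inward to $(b_1',b_2)$ where $b_1'\in\CB$ is chosen maximal with $b_1\preceq b_1'\prec b_2$; then $(b_1',b_2)$ still nests the appropriate $\CR$-self edge (this needs the presence of an $\CR$-element between $b_1'$ and $b_2$, which follows because the original $(b_1,b_2)$ nested it and $b_1\preceq b_1'$), and now by construction there is no $\CB$-element strictly between $b_1'$ and $b_2$. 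With $b_1'$ in place of $b_1$, all three conditions of \cref{lem:bWb} are met, so the pattern is forbidden, contradicting that $L$ contains the rainbow.

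Thus the main obstacle is purely bookkeeping: correctly pinning down which edges of $T^*_\CR$ nest $(b_1,b_2)$, using \cref{prp:bbb} to force the $\CR$-self edge innermost, and then performing the ``shrink $b_1$ inward'' adjustment so that the no-$\CB$-element condition of \cref{lem:bWb} holds while retaining an $\CR$-element between $b_1'$ and $b_2$. Once the configuration is set up, invoking \cref{lem:bWb} closes the argument immediately. I would present the proof in three short paragraphs mirroring the three steps above: (1) locate $u_i\in\CB$ and use \cref{prp:bbb} to fix the innermost structure; (2) shrink $(b_1,b_2)$ to $(b_1',b_2)$ and record the $\CR$-element-between / no-$\CB$-element-between properties; (3) apply \cref{lem:bWb} to the sub-rainbow of nesting $\CR$-edges together with $(b_1',b_2)$ to derive the contradiction.
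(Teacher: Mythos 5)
Your overall strategy coincides with the paper's: argue by contradiction and feed the configuration into \cref{lem:bWb}. However, there is a genuine gap at exactly the point you flag with ``I need to be a little careful here'' and then never resolve: you must rule out the case in which the $\CR$-self edge $(u_w,r_w)$ of $T^*_\CR$ nests $(b_1,b_2)$, because in that case no element of $\CR$ need lie between $b_1$ and $b_2$, and the third hypothesis of \cref{lem:bWb} simply cannot be verified. Your attempted escape via \cref{prp:bbb} does not work: from $u_i\prec b_1\prec b_2\prec r_i$ with $u_i,b_1,b_2\in\CB$ the only available cover edges are $(b_1,b_2)$ and $(u_i,r_i)$ with $r_i\in\CR$, whereas \cref{prp:bbb} needs a cover edge joining the \emph{first and third} of three same-chain elements; the pattern $[u_i\dots b_1\dots b_2]$ with edge $(b_1,b_2)$ is not forbidden by it. The paper closes this case by first showing, exactly as in the proof of \cref{thm:lazy} (i.e., via \cref{lem:incoming}), that $(u_{w-1},r_{w-1})$ and $(u_w,r_w)$ are the two innermost edges of the \emph{entire} rainbow $T$; hence $(b_1,b_2)$ must nest the self edge, placing $u_w,r_w\in\CR$ between $b_1$ and $b_2$ and making \cref{lem:bWb} applicable. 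Without an appeal to \cref{lem:incoming} (or an equivalent argument) your proof does not go through.

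A secondary point: you identify the wrong condition as delicate. The ``no element of $\CB$ between $b_1$ and $b_2$'' requirement is immediate from \cref{prp:bbb}, since $(b_1,b_2)$ is an edge of $\GP$ with both endpoints in $\CB$; any $\CB$-element between them would give the forbidden pattern $[b_1\dots b'\dots b_2]$. Your ``shrink $b_1$ inward to $b_1'$'' construction is therefore vacuous ($b_1'=b_1$ always), and had it not been, it would have risked destroying the very $\CR$-element-between property you need. The hard part of this lemma is entirely the positioning of $(b_1,b_2)$ relative to the innermost edges of $T^*_\CR$, which is the step your proposal leaves open.
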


\begin{proof}
	Assume to the contrary that a rainbow $T$ contains an incoming $\CR$-rainbow formed by edges
	$(u_1,r_1), \dots, (u_w,r_w)$ and an edge $(b_1,b_2)$ with $b_1, b_2 \in \CB$.	
	As in the proof of \cref{thm:lazy}, we can
	show that $(u_{w-1}, r_{w-1})$ and $(u_w, r_w)$
	are~the two innermost edges of $T$, and
	$\CC(u_w) = \CR$.
	Assume without loss of generality that $u_{k} \prec b_1 \prec u_{k+1}$ in $L$ for some
	$1\leq k \leq w-1$, which implies that $r_{k+1} \prec b_2 \prec r_k$.
	Thus, the following holds in $L$.
	\begin{tightcenter}
	\includegraphics[page=18]{allproofs.pdf}
	\end{tightcenter}
	%
	By \cref{prp:bbb}, there are no elements from $\CB$ between $b_1$ and $b_2$. Hence, the
	conditions of \cref{lem:bWb} hold for the pattern; a contradiction.
\qed\end{proof}	

\begin{lemma}\label{lem:2incoming}
Let $\CR$ and $\CB$ be different chains of a width-$w$ poset.
Then a rainbow in an MRU extension of the poset does not contain all edges from
\[
T^*_\CR \setminus \{(r_1, r_2)\} ~~\cup~~ T^*_\CB \setminus \{(b_1, b_2)\},
\]
where $r_1, r_2 \in \CR$, $b_1, b_2 \in \CB$, and $T^*_\CR, T^*_\CB$ are complete
incoming $\CR$-rainbow and $\CB$-rainbow, respectively.
\end{lemma}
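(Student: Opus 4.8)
The plan is to argue by contradiction in the same style as Lemmas~\ref{lem:incoming_selfedge} and~\ref{thm:lazy}: assume a rainbow $T$ in an MRU extension $L$ contains $T^*_\CR \setminus \{(r_1,r_2)\}$ together with $T^*_\CB \setminus \{(b_1,b_2)\}$, where $(r_1,r_2)$ is the $\CR$-self edge and $(b_1,b_2)$ the $\CB$-self edge. Since both $T^*_\CR$ and $T^*_\CB$ have $w$ edges and both contribute a self edge that is missing, we still have $2(w-1)$ edges of $T$ to place. The first step is to normalize the picture: by \cref{prp:bbb}, in any incoming rainbow the self edge is innermost among its chain's edges, so the $w-1$ surviving edges of $T^*_\CR$ are those $(u_i,r_i)$ with $\CC(u_i)\neq\CR$, and likewise for $\CB$. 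I would then look at how the $\CR$-endpoints and the $\CB$-endpoints interleave along $L$. Both families are incoming rainbows of size $w-1$, each of whose left-endpoints hit $w-1$ distinct chains; crucially, each family uses a left-endpoint from the \emph{other} chain (since $\CB\neq\CR$ and the left endpoints span $w-1$ chains that exclude only their own). So there is an edge $(u_i,r_i)\in T^*_\CR$ with $\CC(u_i)=\CB$ and an edge $(u_j,b_j)\in T^*_\CB$ with $\CC(u_j)=\CR$.

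**Extracting the forbidden pattern.**
The heart of the argument is to produce, from the nesting structure of $T$, an instance of the forbidden pattern of \cref{lem:bWb}. Consider the $\CB$-edge $(u_i,r_i)$ with $\CC(u_i)=\CB$ lying inside $T^*_\CR$; its two endpoints are in $\CB$ and $\CR$ respectively. Now $T$ also contains all $w-1$ edges of $T^*_\CB\setminus\{(b_1,b_2)\}$, whose right endpoints all lie in $\CB$. Because $T$ is a rainbow, these $\CB$-right-endpoints are totally ordered by nesting; some of them lie between $u_i$ and $r_i$ in $L$ (I expect to show at least the innermost one does, using that $(u_i,r_i)$ is not the innermost edge of $T$ — this is where I'd mimic the "$(u_w,r_w),(u_{w-1},r_{w-1})$ are the two innermost edges'' argument, noting that an $\CR$-self edge cannot be innermost here since it is missing from $T^*_\CR$, forcing some $\CB$ edge inward). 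I'd pick two consecutive (in $L$) elements $b',b''$ of $\CB$ appearing among these right-endpoints with an element of $\CR$ between them (such an $\CR$-element exists because $r_i$, or one of the surviving $\CR$-right-endpoints, is sandwiched there), set $b_1:=b'$, $b_2:=b''$, and take as the incoming $\CR$-rainbow of size $k$ the appropriate initial segment of $T^*_\CR$'s surviving edges whose left endpoints precede $b_1$; by \cref{prp:bbb} there are no $\CB$-elements strictly between $b'$ and $b''$. This exhibits exactly the configuration of \cref{lem:bWb}, giving the contradiction.

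**The main obstacle.**
The delicate point is the bookkeeping of the interleaving: I must guarantee that the two consecutive $\CB$-endpoints I select are genuinely nested by some edge of the $\CR$-rainbow whose remaining left endpoints line up as required by \cref{lem:bWb} (i.e. $[u_1\dots u_k \dots b_1 \dots b_2 \dots r_k \dots r_1]$ with the $u$'s on distinct chains), and that an $\CR$-element really does sit between them. Handling the degenerate case $u_k=b_1$ — allowed by \cref{lem:bWb} — is what lets the argument go through when the $\CB$-left-endpoint of an $\CR$-edge coincides with the chosen $b_1$; I'd treat that case explicitly. A secondary subtlety is symmetry: the roles of $\CR$ and $\CB$ are interchangeable, so if the first attempt fails to produce the pattern on the $\CB$-side I'd repeat it on the $\CR$-side, using the $\CR$-right-endpoints of $T^*_\CR$'s survivors and the $\CR$-left endpoint $u_j$ of the $\CB$-edge $(u_j,b_j)$; one of the two orientations must yield a valid \cref{lem:bWb} instance because the nesting order forces an alternation of $\CR$- and $\CB$-blocks near the center of $T$. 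Once the pattern is in hand, \cref{lem:bWb} closes the proof immediately.
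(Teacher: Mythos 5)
Your high-level plan coincides with the paper's: derive a contradiction by exhibiting the forbidden pattern of \cref{lem:bWb}, using the fact that $T^*_\CR\setminus\{(r_1,r_2)\}$ has a left endpoint on chain $\CB$ and hunting for two consecutive elements of $\CB$ that sandwich an element of $\CR$ inside a suitable prefix of the $\CR$-rainbow. However, the step you yourself flag as ``the main obstacle'' is exactly where the proposal breaks down, and the fix is not mere bookkeeping. First, your choice of $b'$ and $b''$ as consecutive \emph{right endpoints of} $T^*_\CB$ together with the claim that ``by \cref{prp:bbb} there are no $\CB$-elements strictly between $b'$ and $b''$'' is unsound: \cref{prp:bbb} only excludes a $\CB$-element between the two endpoints of an \emph{edge of $\GP$} lying within $\CB$, and two right endpoints of $T^*_\CB$ that are adjacent in the nesting order are in general not joined by an edge, so arbitrary elements of $\CB$ may lie between them. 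Second, the auxiliary argument you want to mimic (that the self edge and the second-innermost edge of a complete incoming rainbow are the two innermost edges of $T$) relies on the rainbow being complete, i.e.\ containing its self edge; here both self edges are excluded, so that argument does not transfer.

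The missing ingredient is an application of \cref{lem:incoming}. The paper takes the innermost edges $(u_{w-1},r_{w-1})$ of $T_\CR$ and $(v_{w-1},b_{w-1})$ of $T_\CB$, assumes w.l.o.g.\ that the latter nests the former, and applies \cref{lem:incoming} to $T_\CB$ (a size-$(w-1)$ incoming $\CB$-rainbow with no left endpoint in $\CB$) to conclude that \emph{no} element of $\CB$ occurs between $v_{w-1}$ and $r_{w-1}$ in $L$. This is what produces a genuine $\CB$-free interval: taking $b_\ell$ to be the last element of $\CB$ at or before $v_{w-1}$ (it exists because some $u_i\in\CB$ is a left endpoint of $T_\CR$, as you observed) and $b_r$ the first element of $\CB$ after $r_{w-1}$, one gets two elements of $\CB$ that are consecutive in $L$ with $r_{w-1}\in\CR$ between them; the edges of $T_\CR$ nesting both then supply the incoming $\CR$-rainbow needed for \cref{lem:bWb}. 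Note also that $b_\ell$ and $b_r$ are arbitrary elements of the poset, not necessarily endpoints of $T^*_\CB$, so restricting attention to the right endpoints of $T^*_\CB$ as you do cannot recover this configuration. Without this use of \cref{lem:incoming} the existence of the required $\CB$-free gap containing an $\CR$-element is not established, so the proposal as written does not close.
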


\begin{proof}
	Let $T_\CR$ be an incoming $\CR$-rainbow of size $w-1$ without the $\CR$-self edge;
	define $T_\CB$ symmetrically.	
	Assume to the contrary that a rainbow $T$ in an MRU extension $L$ contains both
	$T_\CR$ and $T_\CB$.
Let $(u_{w-1}, r_{w-1})$  and $(v_{w-1},b_{w-1})$ be the innermost edges of $T_\CR$ and $T_\CB$ in $T$,
respectively. Without loss of generality, assume that $(v_{w-1},b_{w-1})$ nests $(u_{w-1}, r_{w-1})$.
This implies the following in $L$:
\begin{tightcenter}
\includegraphics[page=8]{allproofs.pdf}
\end{tightcenter}
%
By \cref{lem:incoming} applied to $T_\CB$, there are no
elements from $\CB$ between $v_{w-1}$ and $r_{w-1}$ in $L$. Consider edge
$(u_i, r_i)$ of $T_\CR$ such that $u_i \in \CB$. Element $u_i$
ensures that there are some elements preceding $v_{w-1}$ in $L$ that belong to
$\CB$. Let $b_\ell$ be the last such element
in $L$, that is, $b_\ell \preceq v_{w-1}$. Symmetrically, let $b_r$ be the first
element from $\CB$ following $r_{w-1}$ in $L$, that is, 
$r_{w-1} \prec b_r \preceq b_{w-1}$, and we have:
\begin{tightcenter}
\includegraphics[page=9]{allproofs.pdf}
\end{tightcenter}
%
By the choice of $b_\ell$ and $b_r$, we further know that between $b_\ell$
and $b_r$ there are no elements from $\CB$, but there is an element
from $\CR$, namely $r_{w-1}$. Let $(u_1,r_1),\ldots,(u_k,r_k)$ be the
edges of $T_\CR$ that nest both $b_\ell$ and $b_r$ in $L$. Assuming that $u_w =
r_{w-1}$, we conclude that the following holds in $L$:
\begin{tightcenter}
\includegraphics[page=10]{allproofs.pdf}
\end{tightcenter}
%
Since between $b_\ell$ and $b_r$ there are no elements from $\CB$, but there is an 
element from $\CR$, we have the forbidden pattern of \cref{lem:bWb}; a contradiction.
\qed\end{proof}

\begin{lemma}\label{lem:2same}
Let $\CR, \CB, \CG$ be pairwise different chains of a width-$w$ poset.
Then a rainbow in an MRU extension of the poset does not contain all edges from
\[
T^*_\CR \setminus \{(g_1, r)\} ~~\cup~~ T^*_\CB \setminus \{(g_2, b)\},
\]
where $g_1, g_2 \in \CG$, $r \in \CR$, $b \in \CB$, and $T^*_\CR, T^*_\CB$ are complete
incoming $\CR$-rainbow and $\CB$-rainbow, respectively.
\end{lemma}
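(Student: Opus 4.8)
The plan is to mimic the structure of the proof of \cref{lem:2incoming}, but to deal with the new feature that the missing edges of $T^*_\CR$ and $T^*_\CB$ are no longer the self-edges but the two edges emanating from chain $\CG$. Write $T_\CR = T^*_\CR \setminus \{(g_1,r)\}$ and $T_\CB = T^*_\CB \setminus \{(g_2,b)\}$, so each is an incoming rainbow of size $w-1$ whose source chains are exactly $\CC \setminus \{\CG\}$; in particular each of $T_\CR, T_\CB$ contains its own self-edge (an $\CR$-self edge in $T_\CR$, a $\CB$-self edge in $T_\CB$), and each contains an edge whose source lies in the other distinguished chain. Assume for contradiction that a rainbow $T$ in an MRU extension $L$ contains all these $2(w-1)$ edges. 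Let $(u_{w-1}, r_{w-1})$ and $(v_{w-1}, b_{w-1})$ be the innermost edges of $T_\CR$ and $T_\CB$ in $T$; without loss of generality $(v_{w-1}, b_{w-1})$ nests $(u_{w-1}, r_{w-1})$, giving the pattern $[v_{w-1} \dots u_{w-1} \dots r_{w-1} \dots b_{w-1}]$ in $L$.

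First I would record, via \cref{lem:incoming} applied to the size-$(w-1)$ incoming $\CB$-rainbow $T_\CB$ and its innermost edge, that there are no elements of $\CB$ strictly between $v_{w-1}$ and $r_{w-1}$ in $L$ (here we must be careful: the endpoint $b_{w-1}$ of $T_\CB$ need not lie in $\CB$, since the $\CB$-self edge is present but the outer edges of $T_\CB$ land in $\CB$ from various chains — so $r_{w-1}$ itself is the element of $\CB$ we track if $\CC(r_{w-1})=\CB$; in general I apply \cref{lem:bxb}/\cref{lem:incoming} to the correct window). The crucial point is that, as in \cref{lem:2incoming}, the source chain $\CB$ appears among the $w-1$ sources of $T_\CR$, so some edge $(u_i,r_i)$ of $T_\CR$ has $u_i \in \CB$, forcing an occurrence of $\CB$ before $v_{w-1}$; let $b_\ell$ be the last element of $\CB$ up to $v_{w-1}$, and let $b_r$ be the first element of $\CB$ after $r_{w-1}$. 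Then between $b_\ell$ and $b_r$ there is an element of $\CR$ (namely $r_{w-1}$, or the $\CR$-endpoint we are tracking) and no element of $\CB$. Taking the edges of $T_\CR$ that nest both $b_\ell$ and $b_r$ — together with the self-edge of $T_\CR$ playing the role of $u_w = r_{w-1}$ in the statement of \cref{lem:bWb} — we obtain exactly the forbidden pattern of \cref{lem:bWb}, a contradiction.

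The main obstacle, and the place where this proof genuinely differs from \cref{lem:2incoming}, is the bookkeeping of self-edges versus cross-edges: here $T_\CR$ keeps its $\CR$-self edge but loses the edge from $\CG$, and $T_\CB$ keeps its $\CB$-self edge but loses the edge from $\CG$, so the chain $\CG$ is absent from the sources of \emph{both} rainbows, whereas each of $\CR$ and $\CB$ appears as a source in the \emph{other} rainbow. I must verify that this asymmetry does not break the argument — specifically, that some $u_i \in \CB$ still occurs in $T_\CR$ (it does, since the sources of $T_\CR$ are $\CC \setminus \{\CG\} \ni \CB$), and that the innermost-edge analysis from \cref{thm:lazy} still pins down which two edges of $T_\CR \cup T_\CB$ are innermost. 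Once the roles of $b_\ell, b_r$ are fixed and the window between them is shown to contain an $\CR$-element but no $\CB$-element, the appeal to \cref{lem:bWb} is immediate; the only real work is choosing the right endpoints and checking the hypotheses of \cref{lem:incoming} and \cref{lem:bWb} apply verbatim to the size-$(w-1)$ rainbows at hand. \qed
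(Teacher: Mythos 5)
Your proposal follows the right general template (innermost self-edges, a window $[b_\ell,b_r]$ in $\CB$ containing an $\CR$-element but no $\CB$-element, then \cref{lem:bWb}), and this is indeed the skeleton of the paper's argument. But there is a genuine gap at exactly the point you flag as "the main obstacle" and then wave away. The pattern of \cref{lem:bWb} requires $w$ vertices $u_1,\dots,u_w$ from $w$ pairwise distinct chains, i.e.\ \emph{every} chain must be represented. In \cref{lem:2incoming} the missing chain among the sources of $T_\CR$ is $\CR$ itself, and the paper fills that slot with $u_w=r_{w-1}\in\CR$, which conveniently sits inside the window. Here, however, $T_\CR$ retains its $\CR$-self edge and is missing the source from $\CG$; setting $u_w=r_{w-1}$ would put two of the $u_i$ in $\CR$ and still leave $\CG$ unrepresented, so the hypotheses of \cref{lem:bWb} are simply not met. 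The paper resolves this with a case split you do not have: if some vertex of $\CG$ lies between $v_{w-1}$ and $b_{w-1}$, it supplies the missing chain and \cref{lem:bWb} applies; if not, one instead invokes \cref{lem:bxb} to get $x\notin\CB$ with $x<b_{w-1}$ between $v_{w-1}$ and $b_{w-1}$, notes $x\notin\CG$ as well, and then the edge $(v_i,b_i)$ of $T_\CB$ with $\CC(v_i)=\CC(x)$ is transitive since $v_i<x<b_{w-1}<b_i$. Without this second branch your argument does not close.

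Two smaller but real misuses of the lemmas: you cannot apply \cref{lem:incoming} to $T_\CB$ to exclude $\CB$-elements between $v_{w-1}$ and $r_{w-1}$, because that lemma requires $\CC(r)\neq\CC(u_i)$ for all sources, which fails here since $T_\CB$ contains its $\CB$-self edge; the correct (and sufficient) tool is \cref{prp:bbb} applied to the self-edge $(v_{w-1},b_{w-1})$, which gives $b_\ell=v_{w-1}$ and $b_r=b_{w-1}$. Also, your parenthetical worry that "$b_{w-1}$ need not lie in $\CB$" is based on a misreading: every target of an incoming $\CB$-rainbow lies in $\CB$ by definition, so the window is simply $[v_{w-1},b_{w-1}]$.
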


\begin{proof}
Assume to the contrary that a rainbow $T$ contains both $T_\CR$ and $T_\CB$ as in the 
statement of the lemma. 
Let $(u_1,r_1),\dots,(u_{w-1},r_{w-1})$ be the
edges of $T_\CR$ and $(v_1,b_1), \dots, (v_{w-1},b_{w-1})$ be the edges of $T_\CB$,
where $(u_{w-1}, r_{w-1})$  and $(v_{w-1},b_{w-1})$ are the $\CR$- and $\CB$-self edges, respectively. By
\cref{prp:bbb}, $(u_{w-1}, r_{w-1})$ and $(v_{w-1}, b_{w-1})$ are innermost edges in
$T_\CR$ and $T_\CB$. Without loss of generality, assume that~$(v_{w-1},b_{w-1})$ nests
$(u_{w-1}, r_{w-1})$, and that $v_{w-1}$ appears between vertices $u_{k}$ and
$u_{k+1}$ of $T_\CR$, which implies that $r_{k+1} \prec b_{w-1} \prec r_k$. 
Hence, the following holds in $L$:
\begin{tightcenter}
\includegraphics[page=12]{allproofs.pdf}
\end{tightcenter}
%
By \cref{prp:bbb}, there is no vertex of $\CB$ between $v_{w-1}$ and $b_{w-1}$ in $L$. If there is a vertex from $\CG$ between $v_{w-1}$ and $b_{w-1}$ in
$L$, then we have~the~forbidden pattern of \cref{lem:bWb}, since $\CC(u_i) \neq \CG$ for all $1 \leq i \leq w-1$.
\begin{tightcenter}
\includegraphics[page=19]{allproofs.pdf}
\end{tightcenter}
Otherwise,  
by
\cref{lem:bxb}, there is some $x \notin \CB$
between $v_{w-1}$ and $b_{w-1}$ in $L$, such that $x < b_{w-1}$. 
As mentioned above, $x \notin \CG$ either. Thus, the incoming $\CB$-rainbow contains
edge $(v_i, b_i)$, which nests $(v_{w-1}, b_{w-1})$, such that
$\CC(v_i)=\CC(x)$. 
Since $v_i < x < b_{w-1} < b_i$, the edge $(v_i,b_i)$ is transitive; a contradiction.
\qed\end{proof}
	
\begin{lemma}\label{lem:case2}
	Let $\CR, \CB, \CG$ be pairwise different chains of a width-$w$ poset.
	Then a rainbow in an MRU extension of the poset does not contain all edges from
	\[
	T^*_\CB \setminus \{(b_1, b_2)\} ~~\cup~~ T^*_\CR \setminus \{(m_r, r)\} ~~\cup~~ T^*_\CG \setminus \{(m_g, g)\},
	\]
	where $b_1, b_2 \in \CB$, $m_r \in V \setminus \CR$, $r \in \CR$, $m_g \in V \setminus \CG$, $g \in \CG$, and 
	$T^*_\CB, T^*_\CR, T^*_\CG$ are complete incoming $\CB$-rainbow, $\CR$-rainbow $\CG$-rainbow, respectively.
\end{lemma}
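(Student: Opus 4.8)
The plan is to argue by contradiction, in the spirit of the proofs of \cref{lem:2incoming} and \cref{lem:2same}. Suppose a rainbow $T$ in an MRU extension $L$ contains the stated edges, and set $T_\CB=T^*_\CB\setminus\{(b_1,b_2)\}$, $T_\CR=T^*_\CR\setminus\{(m_r,r)\}$, $T_\CG=T^*_\CG\setminus\{(m_g,g)\}$, each an incoming rainbow of size $w-1$. Since $m_r\notin\CR$ and $m_g\notin\CG$, the rainbows $T_\CR$ and $T_\CG$ keep their self edges $\rho=(r^-,r^+)$ and $\gamma=(g^-,g^+)$, which are their innermost edges by \cref{prp:bbb}, whereas $T_\CB$ drops its self edge, so its innermost edge $\beta=(v^*,b^*)$ has $v^*\notin\CB$ and $b^*\in\CB$. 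A useful first step is to dispose of the case $\CC(m_r)=\CC(m_g)$: then the three chains $\CR$, $\CG$, $\CC(m_r)$ are pairwise distinct (because $\CC(m_r)\neq\CR$ and $\CC(m_g)\neq\CG$), and $T_\CR\cup T_\CG$ is exactly a configuration ruled out by \cref{lem:2same}, with $\CR,\CG,\CC(m_r)$ in the roles of its three chains. Hence we may assume $\CC(m_r)\neq\CC(m_g)$, so that the tails of $T_\CR$ and $T_\CG$ together meet every chain.

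By the $\CR\leftrightarrow\CG$ symmetry of the statement, assume $\rho$ nests $\gamma$; then every edge of $T_\CR\cup T_\CG$ nests $\gamma$, and $\beta$ occupies exactly one of three positions in the nested family $T$: (i) $\gamma$ nests $\beta$; (ii) $\rho$ nests $\beta$ and $\beta$ nests $\gamma$; (iii) $\beta$ nests $\rho$. In cases (i) and (ii) there is a self edge --- $\gamma$ in (i), $\rho$ in (ii) --- that is innermost in its own rainbow yet contains, strictly between its endpoints, a vertex of a chain different from its own (an endpoint of $\beta$ in case (i), and $g^-$ in case (ii)); for that self edge $(s^-,s^+)$ I would run the argument of \cref{lem:2same}: \cref{prp:bbb} forbids a vertex of $\CC(s^-)$ strictly between $s^-$ and $s^+$, a tail of $T_\CB$ in chain $\CC(s^-)$ precedes $s^-$ (it exists since $\CC(s^-)\neq\CB$, cannot lie between $s^-$ and $s^+$ by \cref{prp:bbb}, and cannot equal $s^-$ as the edges of $T$ are independent), and from there one either reads off the forbidden pattern of \cref{lem:bWb} (the tails of $T_\CR$ and $T_\CG$ supplying witnesses in all $w$ chains) or, via \cref{lem:bxb}, produces a vertex $z\notin\CC(s^-)$ with $z<s^+$ and then a tail in chain $\CC(z)$ --- from $T_\CR$, or from $T_\CG$ or $T_\CB$ if $\CC(z)=\CC(m_r)$ --- completing a transitive path, a contradiction. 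In case (iii), $\gamma$ is the innermost edge of $T$ but its interior may be empty; here I would instead exploit that $\beta$ is a self-edge-free incoming $\CB$-rainbow enclosing $\rho$, so that $\CR$-vertices lie strictly inside $\beta$, and run the argument of \cref{lem:2incoming} around $\beta$ --- locating via \cref{lem:incoming} the last $\CB$-vertex before and the first $\CB$-vertex after $\beta$, and invoking \cref{lem:bWb} for the pair they form --- falling back, if $\gamma$'s endpoints are consecutive in $L$, to peeling $\gamma$ off $T$ and repeating.

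The step I expect to be the main obstacle is closing case (iii) together with the chain-bookkeeping it forces. When $\CC(m_r)$ or $\CC(m_g)$ happens to be $\CB$, the rainbow $T_\CR$ (resp.\ $T_\CG$) provides no $\CB$-tail, so the $\CB$-vertices bracketing $\beta$ that the \cref{lem:2incoming}-style argument needs, as well as the chain missing from the witness set of \cref{lem:bWb}, must be recovered from the other two rainbows or from the heads $b_1,\dots,b_{w-1}$ of $T_\CB$; and when the vertex returned by \cref{lem:bxb} coincides with a self-edge endpoint the naive transitive path degenerates. I anticipate that making all of this go through requires a finer sub-case split (on which of $\CR,\CB,\CG$ equals $\CC(m_r)$ or $\CC(m_g)$, and on the exact nesting order of $\rho,\gamma,\beta$), the MRU-specific source lemmas \cref{prp:local-source-1} and \cref{cor:local-source-1} to pin down positions, and in the most stubborn configurations the same induction on $|\CC|-|\CC[\cdot,\cdot]|$ that underlies the proof of \cref{lem:bWb}.
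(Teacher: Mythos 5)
Your overall frame is sound and matches the paper's opening moves: the reduction $\CC(m_r)\neq\CC(m_g)$ via \cref{lem:2same}, the normalization in which the $\CR$-self edge $(r_1,r_2)$ nests the $\CG$-self edge $(g_1,g_2)$, and the observation that a $\CB$-vertex strictly inside $(r_1,r_2)$ leads to a contradiction. Indeed, your cases (i) and (ii) are both instances of a $\CB$-vertex (the head of $\beta$) lying between $r_1$ and $r_2$, and the paper disposes of them in one stroke by applying \cref{lem:bWb} to the pair $(r_1,r_2)$ together with the incoming rainbow $T_\CB$, whose tails cover every chain except $\CB$, the missing chain being supplied by that very vertex; your detour through a ``2same-style'' argument around $\gamma$ or $\rho$ can probably be made to work, but it needs care, since \cref{lem:bWb} cannot be invoked with a pair and a rainbow associated with the same chain.

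The genuine gap is your case (iii), which is where the entire content of the lemma lives and which you yourself flag as ``the main obstacle.'' Your plan there --- bracketing $\beta$ with $\CB$-vertices, ``running the argument of \cref{lem:2incoming}'', and falling back to ``peeling $\gamma$ off $T$ and repeating'' --- does not close the case: the peeling step has no termination argument and discards exactly the edge that makes the contradiction possible. The paper's closing move is different and essential. Having shown that no $\CB$-vertex lies in $(r_1,r_2)$, one brackets the $\CR$-self edge by two \emph{consecutive elements of the chain} $\CB$, obtaining $b'\prec r_1\prec r_2\prec b''\preceq b_{u_k}$ (the existence of $b'$ is where $\CC(m_r)\neq\CC(m_g)$ is actually used), pins down their position relative to the innermost edge $(u_k,b_{u_k})$ of $T_\CB$ via \cref{lem:incoming}, and then applies \cref{lem:bWb} to the pair $(b',b'')$ with the rainbow $T_\CR$ to conclude that no vertex of chain $\CC(m_r)$ lies between $b'$ and $b''$. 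The punchline is an application of \cref{lem:bxb} to the pattern $[r_1\dots g_1\dots r_2]$ --- this is precisely where the nesting of the $\CG$-self edge inside the $\CR$-self edge is exploited --- yielding $x\notin\CR$ with $r_1\prec x\prec r_2$ and $x<r_2$; since $x\notin\CC(m_r)$ by the previous step, $T_\CR$ contains an edge $(y,r_y)$ with $\CC(y)=\CC(x)$ and $y<x<r_2<r_y$, making $(y,r_y)$ transitive. This chain of deductions is absent from your proposal, so as written it is a plausible plan rather than a proof.
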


\begin{proof}
	Assume to the contrary that a rainbow $T$ contains three incoming rainbows, $T_\CB$, $T_\CR$, and $T_\CG$, as in the 
	statement of the lemma.	
	Without loss of generality, assume that the $\CG$-self edge $(g_1, g_2)$ is nested 
	by the $\CR$-self edge, $(r_1, r_2)$; that is, $r_1 \prec g_1 \prec g_2 \prec r_2$.
	Denote the edges of $T_\CB$ by $(u_i, b_{u_i})$ for $1 \le i \le w-1$, and assume that the following holds in $L$ for some $k \le w-1$.
\begin{tightcenter}
\includegraphics[page=13]{allproofs.pdf}
\end{tightcenter}
%
	Suppose there exists a vertex $x \in \CB$ such that $r_1 \prec x \prec r_2$;
	then $r_1$ and $r_2$ together with $x$ and edges of $T_\CB$ form the forbidden pattern of \cref{lem:bWb}.
	Thus, there are no vertices from $\CB$ between $r_1$ and $r_2$ in $L$, and $(u_k, b_{u_k})$ is the 
	innermost edge of $T_\CB$ in $T$.
	Therefore, we can find two consecutive vertices in chain $\CB$, $b'$ and $b''$, such that 	
	$b' \prec r_1 \prec r_2 \prec b'' \preceq b_{u_k}$.	
	Here $b'$ exists because by \cref{lem:2same} at least one of the two edges, $(b, r), (b, g)$, is in $T$ as part of $T_\CR$, $T_\CG$, respectively.	
	Further, by \cref{lem:incoming}, the interval between $u_k$ and $b_{u_k}$ does not contain
	pattern $[u_k \dots b \dots x \dots b_{u_k}]$, where $b \in \CB, x \notin \CB$. Thus, 
	$b' \prec u_k$ and the interval of $L$ between $b''$ and $b_{u_k}$ contains vertices only from $\CB$
	($b'' = b_{u_k}$ is possible). 
	\begin{tightcenter}
	\includegraphics[page=14]{allproofs.pdf}
	\end{tightcenter}
	Now if there exists a vertex from $\CC(m_r)$ between $b'$ and $b''$, then $[b' \dots r_1 \dots b'']$ together
	with the edges of $T_\CR$ form the forbidden pattern of \cref{lem:bWb}. 
	Thus, there are no vertices from $\CC(m_r)$ between $b'$ and $b''$.
	
	Finally, consider vertices $r_1$ and $r_2$ that are consecutive in $\CR$. By \cref{lem:bxb}~and the fact that $r_1 \prec g_1 \prec r_2$, 
	there is $x \notin \CC(m_r)$ between $r_1$ and $r_2$ such that~$x < r_2$. Since $x \notin \CC(m_r)$, rainbow $T_\CR$ contains edge $(y, r_y)$ for some $r_y \in \CR$ 
	such that $\CC(y) = \CC(x)$. Edge $(y, r_y)$ is transitive, as $y < x < r_2 < r_y$; a contradiction.
\qed\end{proof}

\begin{lemma}\label{lem:case3}
	Let $\CR, \CB, \CG$ be pairwise different chains of a width-$w$ poset.
	Then a rainbow in an MRU extension of the poset does not contain all edges from
	\[
	T^*_\CB \setminus \{(m_b, b)\} ~~\cup~~ T^*_\CR \setminus \{(m_r, r)\} ~~\cup~~ T^*_\CG \setminus \{(m_g, g)\},
	\]
	where $m_b \in V \setminus \CB$, $b \in \CB$, $m_r \in V \setminus \CR$, $r \in \CR$, $m_g \in V \setminus \CG$, $g \in \CG$, and 
	$T^*_\CB, T^*_\CR, T^*_\CG$ are complete incoming $\CB$-rainbow, $\CR$-rainbow $\CG$-rainbow, respectively.
\end{lemma}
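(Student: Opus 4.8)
The plan is a proof by contradiction that follows the template of \cref{lem:case2}. Suppose some rainbow $T$ of an MRU extension $L$ contains the three incoming rainbows; write $(b_1,b_2)$, $(r_1,r_2)$, $(g_1,g_2)$ for the self-edges they contain and $\CC(m_b),\CC(m_r),\CC(m_g)$ for the chains of their deleted sources. By \cref{prp:bbb} each self-edge is the innermost edge of its own rainbow, so the three self-edges are pairwise nested; since the statement is symmetric in $\CR,\CB,\CG$, I relabel the chains so that $b_1\prec r_1\prec g_1\prec g_2\prec r_2\prec b_2$ in $L$. As $(b_1,b_2)$ is a cover edge of $\CB$, no element of $\CB$ lies strictly between $b_1$ and $b_2$, hence none lies strictly between $r_1$ and $r_2$; by \cref{prp:bbb} no element of $\CR$ lies between $r_1$ and $r_2$, so $(r_1,r_2)$ is a cover edge of $\CR$, and $g_1\in\CG$ lies strictly between $r_1$ and $r_2$.

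The key intermediate claim is that no element of $\CC(m_r)$ lies strictly between $b_1$ and $b_2$. If $\CC(m_r)=\CB$ this is immediate. Otherwise $T_\CR$ contains a $\CB$-source edge $(\beta,r_\beta)$; since no element of $\CB$ sits between $b_1$ and $b_2$ one checks $\beta\prec b_1\prec b_2\prec r_\beta$, so $(\beta,r_\beta)$ nests $(b_1,b_2)$. Now suppose some $x\in\CC(m_r)$ lies strictly between $b_1$ and $b_2$ — which certainly happens if $\CC(m_r)=\CG$, witnessed by $g_1$. I would split the $w-1$ edges of $T_\CR$ into those nesting $(b_1,b_2)$ — a non-empty incoming $\CR$-rainbow, by the above — and those nested by $(b_1,b_2)$ (this group contains $(r_1,r_2)$, and the sources of its edges all lie between $b_1$ and $b_2$). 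Together with $x$, the sources of these $w-1$ edges give $w$ vertices of pairwise distinct chains, because $T_\CR$ realizes every chain but $\CC(m_r)$. Applying \cref{lem:bWb} with the pair $(b_1,b_2)$, the incoming $\CR$-rainbow given by the nesting edges of $T_\CR$, the $w$ vertices just described, and an element of $\CR$ between $b_1$ and $b_2$ (namely $r_1$ or $r_2$) — its remaining hypothesis ``no $\CB$ between $b_1$ and $b_2$'' holding since $(b_1,b_2)$ is a cover edge — yields a contradiction. Hence no such $x$ exists; in particular $\CC(m_r)\neq\CG$, and no element of $\CC(m_r)$ lies strictly between $r_1$ and $r_2$.

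It remains to conclude exactly as in \cref{lem:case2}. Since $(r_1,r_2)$ is a cover edge of $\CR$ and $r_1\prec g_1\prec r_2$, \cref{lem:bxb} applied to the pattern $[r_1\dots g_1\dots r_2]$ produces an element $x'$ with $r_1\prec x'\prec r_2$, $\CC(x')\neq\CR$ and $x'<r_2$. As $x'$ lies strictly between $r_1$ and $r_2$, we also have $\CC(x')\neq\CB$ and $\CC(x')\neq\CC(m_r)$, so $T_\CR$ contains an edge $(y,r_y)$ with $\CC(y)=\CC(x')$ and $r_y\in\CR$; it is a non-self edge, hence nests $(r_1,r_2)$, so $y\prec r_1\prec x'$ gives $y<x'<r_2<r_y$, making $(y,r_y)$ transitive in $\GP$ — a contradiction.

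The step I expect to be the main obstacle is the intermediate claim: unlike in \cref{lem:case2}, where $T_\CB$ lacks its self-edge and therefore automatically nests the innermost self-edge, here all three rainbows retain their self-edges, so $T_\CR$ need not nest $(b_1,b_2)$, and one must carefully organize the outer/inner partition of $T_\CR$ and place the $w$-th source of chain $\CC(m_r)$ strictly inside $(b_1,b_2)$ so that the precise pattern required by \cref{lem:bWb} is produced — including the correct choice between $r_1$ and $r_2$ as the distinguished $\CR$-element when the deleted source chain forces it. A further safety net, useful whenever two of $\CC(m_b),\CC(m_r),\CC(m_g)$ coincide, is that two of $T_\CB,T_\CR,T_\CG$ are then a complete incoming rainbow with its edge out of one common chain removed, which is forbidden by \cref{lem:2same}; so one may assume $\CC(m_b),\CC(m_r),\CC(m_g)$ pairwise distinct if convenient.
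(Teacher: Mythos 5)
Your proof is correct and follows essentially the same route as the paper's: order the three self-edges by nesting, use \cref{lem:bWb} (applied to the middle rainbow's edges split by the outermost self-edge) to rule out any vertex of the missing source chain inside that self-edge, and then apply \cref{lem:bxb} to the middle self-edge to manufacture a transitive edge. The only differences are cosmetic — you derive the needed chain-distinctness facts on the fly instead of invoking \cref{lem:2same} upfront, and you spell out why at least one edge of the middle rainbow nests the outer self-edge, a point the paper leaves implicit.
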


\begin{proof}
	Assume to the contrary that a rainbow $T$ contains three incoming rainbows $T_\CB$, $T_\CR$, and $T_\CG$, as in the 
	statement of the lemma for some MRU extension~$L$ of the poset.
	By \cref{lem:2same}, $\CC(m_b)$, $\CC(m_r)$, and $\CC(m_g)$ are pairwise distinct chains.
	
	Without loss of generality, assume that the $\CR$-self edge, $(r_1, r_2)$, nests the $\CB$-self edge, $(b_1, b_2)$, 
	which in turn nests the $\CG$-self edge, $(g_1, g_2)$. Namely, $r_1 \prec b_1 \prec g_1 \prec g_2 \prec b_2 \prec r_2$.
	Denote the edges of $T_\CB$ by $(u_i, b_{u_i})$ for $1 \le i \le w-1$, and assume that 
	\begin{tightcenter}
\includegraphics[page=15]{allproofs.pdf}
\end{tightcenter}
	holds in $L$ for some $k \le w-1$.
	If there is a vertex from $\CC(m_b)$ between $r_1$ and $r_2$ in $L$, then the forbidden pattern of
	\cref{lem:bWb} is formed by $[r_1 \dots b_1 \dots r_2]$ and~edges of $T_\CB$.
	Otherwise by \cref{lem:bxb}, 
	there is some $x \notin \CC({m_b})$ between $b_1$ and $b_2$ such that $x < b_2$.
	Since $|T_\CB|=w-1$, $T_\CB$ contains edge $(y, b_y)$
	for some~$b_y \in \CB$
	such that $\CC(y)= \CC(x)$. Since $y<x<b_2<b_y$, $(y, b_y)$ is transitive; a contradiction. 
\qed\end{proof}	
	
\noindent Now we state the main result of the section.
	
\begin{theorem}\label{thm:mru}
The maximum size of a rainbow formed by the edges of $\GP$ in an MRU 
extension of a poset $\langle P,< \rangle$ of width $w$ is at most $(w-1)^2+1$.
\end{theorem}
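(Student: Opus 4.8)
The plan is to argue by contradiction. Suppose an MRU extension $L$ of a width-$w$ poset $\langle P,<\rangle$ contains a rainbow of size $(w-1)^2+2=w^2-2w+3$; since any subset of a rainbow is a rainbow, I may fix a rainbow $T$ with exactly $w^2-2w+3$ edges. Classify the edges of $T$ by the chain containing their head. By \cref{prp:w2}, the edges of $T$ with head in a fixed chain have tails in pairwise distinct chains, so at most $w$ of them; let $a_\CC$ be this count for chain $\CC$, let $t$ be the number of chains with $a_\CC=w$ and $s$ the number with $a_\CC=w-1$. Every other chain has $a_\CC\le w-2$, so $w^2-2w+3=\sum_\CC a_\CC\le tw+s(w-1)+(w-t-s)(w-2)=w^2-2w+2t+s$, which yields $2t+s\ge 3$. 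I also record: if $a_\CC=w$, the $w$ edges of $T$ into $\CC$ have tails in all $w$ chains, hence include the $\CC$-self edge and form a complete incoming rainbow $T^*_\CC\subseteq T$; if $a_\CC=w-1$, the $w-1$ edges into $\CC$ form an incoming $\CC$-rainbow of size $w-1$ that is either $T^*_\CC\setminus\{\CC\text{-self edge}\}$ (when no tail lies in $\CC$) or $T^*_\CC\setminus\{(m,c)\}$ for a unique $m\notin\CC$ (when a tail lies in $\CC$).

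Next I split on $2t+s\ge 3$. If $t\ge 2$, pick complete rainbows $T^*_\CR,T^*_\CB\subseteq T$ with $\CR\ne\CB$; the $\CB$-self edge $(b_1,b_2)$ lies in $T^*_\CB$, so $T$ contains $T^*_\CR\cup\{(b_1,b_2)\}$, contradicting \cref{lem:incoming_selfedge}. If $t=1$, then $s\ge 1$; let $T^*_\CR\subseteq T$ be the complete rainbow and $\CB\ne\CR$ with $a_\CB=w-1$. If the $\CB$-rainbow contains the $\CB$-self edge, then $T$ again contains $T^*_\CR$ together with a $\CB$-self edge, contradicting \cref{lem:incoming_selfedge}; otherwise the $\CB$-rainbow equals $T^*_\CB\setminus\{\CB\text{-self edge}\}$, and since $T^*_\CR\setminus\{\CR\text{-self edge}\}\subseteq T^*_\CR\subseteq T$, the rainbow $T$ contains $(T^*_\CR\setminus\{\CR\text{-self edge}\})\cup(T^*_\CB\setminus\{\CB\text{-self edge}\})$, contradicting \cref{lem:2incoming}.

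It remains to treat $t=0$, so $s\ge 3$; fix three distinct chains with $a=w-1$. If at least two of their size-$(w-1)$ incoming rainbows omit their own self edge, label two such chains $\CR,\CB$: then $T$ contains both $T^*_\CR\setminus\{\CR\text{-self edge}\}$ and $T^*_\CB\setminus\{\CB\text{-self edge}\}$, contradicting \cref{lem:2incoming}. Otherwise at most one omits its self edge, so at least two contain it; label two such chains $\CR,\CB$ and the third $\CG$. Then the $\CR$- and $\CB$-incoming rainbows are $T^*_\CR\setminus\{(m_r,r)\}$ and $T^*_\CB\setminus\{(m_b,b)\}$ with $m_r\notin\CR$ and $m_b\notin\CB$. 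If the $\CG$-rainbow omits the $\CG$-self edge, it equals $T^*_\CG\setminus\{\CG\text{-self edge}\}$ and $T$ contains the forbidden configuration of \cref{lem:case2} (with $\CG$ in the role of the chain missing its self edge, and $\CR,\CB$ missing a non-self edge each); if the $\CG$-rainbow contains the $\CG$-self edge, it equals $T^*_\CG\setminus\{(m_g,g)\}$ with $m_g\notin\CG$ and $T$ contains the forbidden configuration of \cref{lem:case3}. In every case we reach a contradiction, so no rainbow of size $(w-1)^2+2$ exists and the bound $(w-1)^2+1$ follows.

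The cases $t\ge 2$, $(t=1,s\ge 1)$, and $(t=0,s\ge 3)$ are exhaustive under $2t+s\ge 3$, so, granting the forbidden-configuration lemmas, the theorem is pure pigeonhole plus a case distinction. The combinatorial work all sits inside \cref{lem:incoming_selfedge,lem:2incoming,lem:2same,lem:case2,lem:case3}, which in turn rest on \cref{lem:bWb}; the one bookkeeping point I would verify most carefully is tracking, for each chain with exactly $w-1$ incoming edges in $T$, whether the missing edge is its own self edge or an edge from exactly one other chain, since matching each case to the correct lemma hinges on this distinction.
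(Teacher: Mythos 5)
Your proof is correct and follows essentially the same strategy as the paper: pigeonhole on the sizes of the incoming rainbows, then a case analysis that matches each configuration to \cref{lem:incoming_selfedge,lem:2incoming,lem:case2,lem:case3}. One point in your favor: you start from a rainbow of size $(w-1)^2+2$, which is the correct negation, whereas the paper literally writes ``a rainbow of size $(w-1)^2+1$''; with that size the paper's pigeonhole steps do not strictly follow (e.g., one chain could receive $w$ edges and every other chain $w-2$, defeating the claim $|T_\CR|\ge w-1$), so your explicit count $2t+s\ge 3$ is the airtight version of what the paper intends. Your case split on $(t,s)$ also handles $w=2$ uniformly, where the paper instead falls back on \cref{thm:lazy}; otherwise the two arguments coincide.
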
	

\begin{proof}
When $w=2$, the theorem holds for any lazy linear extension by \cref{thm:lazy} 
and thus for MRU. Hence, we focus on the case $w \ge 3$.
Assume to the contrary that an MRU extension contains a rainbow $T$ of
size $(w-1)^2+1$. Let $T_\CB$, $T_\CR$, $T_\CG$ be the largest incoming rainbows in $T$
corresponding to chains $\CB$, $\CR$, and $\CG$, respectively.
Assume without loss of generality that $|T_\CB| \ge |T_\CR| \ge |T_\CG|$.
By the pigeonhole principle, we have $|T_\CB| \ge |T_\CR| \ge w-1$.
We claim that $|T_\CB| = w-1$. Indeed, if $|T_\CB| = w$, then by \cref{lem:incoming_selfedge},
$T_\CR$ does not contain the $\CR$-self edge. Thus, $T$ contains $T^*_\CB$ and $T^*_\CR \setminus \{(r_1, r_2)\}$
with $r_1, r_2 \in \CR$; a contradiction by \cref{lem:2incoming}.

Thus, $|T_\CB| = |T_\CR| = |T_\CG| = w-1$ follows, and we 
distinguish cases based on the number of self edges in $T_\CB$, $T_\CR$, and $T_\CG$.
If \emph{each} of them contain its self edge, 
then we have the forbidden configuration of \cref{lem:case3}.
If \emph{two} of $T_\CB$, $T_\CR$, and $T_\CG$ contain a self edge, then we have the forbidden configuration of \cref{lem:case2}.
Finally, if \emph{at most one} of $T_\CB$, $T_\CR$, and $T_\CG$ contains a self edge, say $T_\CB$, then
$T_\CR$ and $T_\CG$ form the forbidden configuration of \cref{lem:2incoming}.
This concludes the proof.	
\qed\end{proof}	

\noindent \arxapp{\cref{thm:mru-bound} in \cref{app:mru} shows}{In~\cite{arxiv} we show} that our
analysis is tight, i.e., there are~posets of width $w$ and corresponding
MRU extensions containing $((w-1)^2+1)$-rainbows.

\section{A Counterexample to Conjecture~1}
\label{sec:counterexample}

Here we sketch our approach to disprove \cref{conj:hp99}. 
We describe a poset in terms of its cover graph
$G(p,q)$; see \cref{fig:ce-main}. For $p \geq q-3$, graph $G(p,q)$ consists of
$2p+q$ vertices $a_1, \dots, a_p$, $b_1, \dots, b_q$, and $c_1,
\dots, c_p$ that form three chains of lengths $p$, $q$, and $p$, respectively.
For all $1 \le i \le p$ and for all $1 \le j \le q$, the edges $(a_i, a_{i+1})$,
$(b_j, b_{j+1})$ and $(c_i, c_{i+1})$ form the intra-chain edges of $G(p,q)$.
Graph $G(p,q)$ also contains the following inter-chain edges:
\begin{inparaenum}[(i)]
	\item $(a_i, c_{i+3})$ and $(c_i, a_{i+3})$ for all $1 \le i + 3 \le p$, and
	\item $(a_i, b_i)$ and $(c_i, b_i)$ for all $1 \le i \le q$.
\end{inparaenum}
We denote by $\widetilde{G}(p,q)$ the graph obtained by adding $(b_1, a_p)$ and $(b_1, c_p)$ to  $G(p,q)$.

\begin{figure}[!ht]
	\centering
	\includegraphics[page=1,scale=0.8]{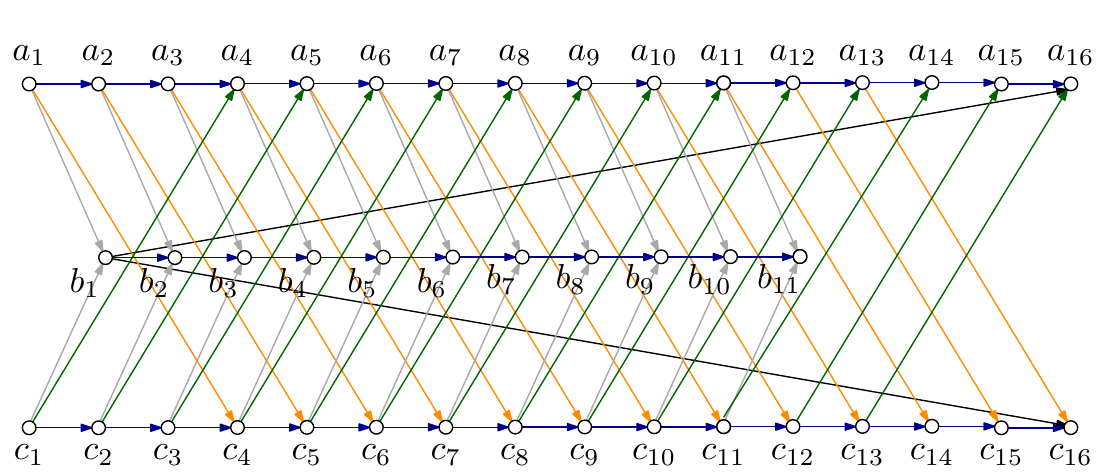}
  	\caption{Illustration of graph $\widetilde{G}(p,q)$ with $p=16$ and $q=11$.}
	\label{fig:ce-main}
\end{figure}

\begin{restatable}{theorem}{counterfirst}\label{thm:34}
	$\widetilde{G}(31,22)$ requires $4$ queues in every linear extension.
\end{restatable}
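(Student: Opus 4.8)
The plan is to show directly that \emph{every} linear extension $L$ of $\widetilde{G}(31,22)$ contains a $4$-rainbow; by the rainbow characterization of queue number this means $\widetilde{G}(31,22)$ needs at least $4$ queues, refuting \cref{conj:hp99} for width $3$. So assume $L$ has no $4$-rainbow and derive a contradiction, working with the three chains $\mathcal{A}=\{a_1,\dots,a_p\}$, $\mathcal{B}=\{b_1,\dots,b_q\}$, $\mathcal{C}=\{c_1,\dots,c_p\}$ ($p=31$, $q=22$) and using throughout that $L$ respects every cover relation and that $\GP$ has no transitive edges. First I would record the forced order relations: $a_1,c_1\prec b_1$ and $b_1\prec a_p,c_p$ (from the edges $(a_1,b_1),(c_1,b_1),(b_1,a_p),(b_1,c_p)$); $a_i,c_i\prec b_i$ for $i\le q$; $b_1\prec b_2\prec\dots\prec b_q$; and the braid relations $a_i\prec c_{i+3}$, $c_i\prec a_{i+3}$. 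The key structural observation is that $(b_1,a_p)$ and $(b_1,c_p)$ wedge $b_1$ strictly between $\{a_1,c_1\}$ and $\{a_p,c_p\}$, so $b_1$ is the natural pivot of the argument, and the later of $(b_1,a_p),(b_1,c_p)$ in $L$ is the intended outermost edge of the $4$-rainbow.

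Next I would prove the elementary \emph{synchronization} fact: along any prefix of $L$, if $a_s$ and $c_t$ are the last $\mathcal{A}$- and $\mathcal{C}$-vertices in the prefix, then $|s-t|\le 3$. This is immediate by contraposition from the braid relations (if $a_i$ is absent then, since $a_i\prec c_{i+3}$, so is $c_{i+3}$; take $i=s+1$). Applied just before $b_1$, this gives a ``frontier'' $(s,t)$ with $1\le s,t\le p-1$ and $|s-t|\le 3$; after possibly exchanging $\mathcal{A}$ and $\mathcal{C}$ (they play symmetric roles) we may assume $a_p$ is the later of $a_p,c_p$ in $L$, so the edge $(b_1,a_p)$ encloses in $L$ every vertex strictly between $b_1$ and $a_p$. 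It then remains to find a $3$-rainbow all of whose endpoints lie strictly between $b_1$ and $a_p$: together with $(b_1,a_p)$ this is a $4$-rainbow.

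For the inner $3$-rainbow I would use two ``engines'' and argue that in every configuration one of them applies inside the window enclosed by $(b_1,a_p)$. \emph{Engine I: braid plus an intra-chain edge.} A four-point check shows that at a level $i$ the braid edges $(a_i,c_{i+3})$ and $(c_i,a_{i+3})$ nest exactly when the relative order of the $\mathcal{A}$- and $\mathcal{C}$-vertices agrees at levels $i$ and $i+3$; whenever they nest, I would then locate an intra-chain edge $(a_{i+1},a_{i+2})$ or $(c_{i+1},c_{i+2})$ enclosed by the inner braid edge (synchronization gives this in the generic case, and the sub-cases where it fails are precisely those where the relative order stays monotone over a long stretch, which feed into Engine~II). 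Since the relative order at a level takes only two values and there are $\Theta(p)$ levels in each residue class modulo $3$, counting forces many ``agreeing'' levels, and with $p=31$ one of them lands in the relevant window. \emph{Engine II: fan edges along $\mathcal{B}$.} If instead the $\mathcal{A}/\mathcal{C}$ relative order oscillates rapidly, some $a_i$ precedes $c_j$ with $j<i$ (or symmetrically); then the right endpoints along $\mathcal{B}$ are ordered oppositely to the left endpoints, so the fan edges $(a_i,b_i),(c_j,b_j),\dots$ nest, and three of them --- which exist because $q=22$ leaves enough fan edges of both colours crossing $b_1$ --- together with a wraparound edge form the $4$-rainbow. In both engines the absence of transitive edges in $\GP$ is exactly what excludes the degenerate sub-configurations, and the hypothesis $p\ge q-3$ together with the concrete $p=31$, $q=22$ is what guarantees every index window is wide enough.

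The hard part is the bookkeeping that welds these engines into a complete argument: the braid is genuinely flexible, so there is no single quadruple of edges that works for all $L$. One must track simultaneously where $b_1$ sits (its frontier $(s,t)$), where $a_p$ and $c_p$ fall relative to $b_2,\dots,b_q$, and how the $\mathcal{A}/\mathcal{C}$ relative order behaves across levels, and verify that in each resulting case Engine~I or Engine~II yields a $3$-rainbow that the chosen wraparound edge actually encloses. I would organize the proof as a short sequence of claims --- the synchronization lemma, a lemma that each engine works under its hypothesis, and a final case split on the global behaviour of the relative order near $b_1$ --- rather than as a single monolithic computation, and I expect the appendix proof of \cref{thm:34} to follow essentially this structure.
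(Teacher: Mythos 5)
Your plan correctly identifies the two mechanisms that are actually in play (a long wraparound edge from $b_1$ enclosing an inner rainbow, versus nested fan edges into the $\mathcal{B}$-chain), and your synchronization observation is sound, but as written the argument has genuine gaps and is a strategy outline rather than a proof. The most serious problem is the case in which $b_1$ is placed late in $L$: your frontier index $s$ can be as large as $p-1$, in which case only a handful of vertices lie strictly between $b_1$ and the later of $a_p,c_p$, no $3$-rainbow can live in that window, and the wraparound edge cannot serve as the outermost edge of the $4$-rainbow. You delegate this case to ``Engine II,'' but that engine is never established: for two fan edges $(x_i,b_i)$ and $(y_j,b_j)$ with $j<i$ to nest you need a cross-chain inversion $x_i\prec y_j$, which the braid edges restrict to $i\le j+2$, so producing three mutually nested fan edges requires a specific local oscillation pattern that you do not exhibit. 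Moreover, the counting claim inside Engine I is false as stated: the relative order of $a_i$ and $c_i$ along a residue class modulo $3$ can alternate at every step, so pigeonholing does not ``force many agreeing levels.'' Finally, even in the favourable case you never actually produce the inner $3$-rainbow; you only assert that one of the engines applies, and you explicitly defer ``the bookkeeping,'' which is where all of the content of this theorem lies.

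For comparison, the paper resolves exactly these difficulties by a different division of labour. It first proves (\cref{lem:ce-upper}) that $G(14,6)$ forces a $4$-rainbow in any linear extension with $c_{14}\prec b_1$ --- this is the ``$b_1$ late'' case, and there the outer two edges of the rainbow are fan edges $(a_i,b_i)$, $(c_j,b_j)$ wrapped around a $2$-rainbow inside a $T_c$ gadget (\cref{prp:ce-t}), not the wraparound edge. Applying this at both ends of $\widetilde{G}(31,22)$ yields $b_1\prec c_{14}$ and $b_{18}\prec c_{31}$, after which the edge $(b_1,c_{31})$ provably nests the middle copy of $G(6,2)$ on levels $17,\dots,22$, and a separate exhaustive case analysis (\cref{lem:ce-lower}, resting on \cref{prp:ce-x}) shows that this copy always contains a $3$-rainbow. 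To complete your proof you would have to carry out the equivalents of those case analyses and repair the two engine arguments accordingly.
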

\begin{sketch}
\arxapp{We}{In \cite{arxiv} we} provide lower bounds on the queue number for simple subgraphs of $\widetilde{G}(p,q)$ 
\arxapp{(\cref{prp:ce-t,prp:ce-x})}{} and then for more complicated ones 
\arxapp{(\cref{lem:ce-upper,lem:ce-lower})}{} for appropriate values of $p$ and $q$.
We distinguish two cases depending
on the length of edge $(b_1, c_p)$ in a linear extension $L$ of $\widetilde{G}(p,q)$. Either the edge is ``short'' (that is,
$b_1$ is close to $c_p$ in $L$) or ``long''. 
In the first case, the
existence of a $4$-rainbow is derived from the properties of the subgraphs. In 
the latter case, edge $(b_1, c_p)$ nests a
large subgraph of $\widetilde{G}(p,q)$, which needs $3$ queues.
\qed
\end{sketch}

To prove that \cref{conj:hp99} does not hold for $w>3$, we employ an auxiliary lemma 
implicitly used in~\cite{DBLP:conf/gd/KnauerMU18}; see \arxapp{\cref{lem:w1} in \cref{app:counterexample}}{\cite{arxiv}} for details.


\begin{restatable}{theorem}{counterfull}\label{thm:w1}
	For every $w \geq 3$, there is a width-$w$ poset with queue number $w+1$.
\end{restatable}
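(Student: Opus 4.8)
The plan is to bootstrap from \cref{thm:34}, which already gives a width-$3$ poset with queue number $4$, i.e.\ the case $w=3$. For $w \ge 4$ I would take the width-$3$ counterexample $\widetilde G(31,22)$ and ``inflate'' it by adding $w-3$ fresh chains, each of which is forced to contribute one extra queue in every linear extension. The auxiliary lemma attributed to Knauer et al.\ (\cref{lem:w1} in the appendix) is precisely the tool for this: informally, it should state that if a width-$k$ poset $Q$ has queue number $q$, then one can build a width-$(k+1)$ poset $Q'$ with queue number $q+1$ by attaching a long chain whose elements are comparable to suitable elements of $Q$ in a way that forces, in any linear extension, one edge incident to the new chain to nest an arbitrarily deep rainbow already present in (a scaled copy of) $Q$. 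Iterating this $w-3$ times starting from $\widetilde G(31,22)$ yields, for every $w \ge 3$, a width-$w$ poset with queue number at least $(w-3) + 4 = w+1$.

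The key steps, in order, would be: (1) invoke \cref{thm:34} for the base case $w=3$; (2) state the padding lemma \cref{lem:w1} precisely — given a width-$k$ poset with a linear extension forced to contain a $t$-rainbow, produce a width-$(k+1)$ poset forced to contain a $(t+1)$-rainbow — taking care that the construction preserves the property ``in \emph{every} linear extension'', not just one; (3) check that the width genuinely increases by exactly one at each step (the new chain must be an antichain-augmenting chain, i.e.\ its generic element is incomparable to a full antichain of the old poset, while comparabilities with the old poset are arranged only so as to force the nesting); (4) apply induction: assuming a width-$(w-1)$ poset with queue number $w$, apply \cref{lem:w1} to obtain a width-$w$ poset with queue number $w+1$; (5) observe that an upper bound of $w^2$ (or the improved $(w-1)^2+1$ from \cref{thm:mru}) is not needed here — we only want the lower bound $w+1$ — so the statement follows.

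The main obstacle I expect is step (2)/(3): making the padding construction rigorous so that it forces the queue number up by exactly one \emph{in every} linear extension while increasing the width by exactly one. The delicate point is that when we add a new chain, a clever linear extension might ``hide'' the new chain's edges outside the old rainbow, or the added comparabilities might inadvertently create a larger antichain or collapse the old forced rainbow. The fix, following the spirit of \cref{thm:34}, is to make the new chain very long relative to the old poset and to route its comparabilities so that one of its cover edges must span (nest) a long contiguous block of the old poset in any topological order — a counting/pigeonhole argument then guarantees that block still contains a $w$-rainbow, which together with the spanning edge gives a $(w+1)$-rainbow.

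Concretely, the proof would read: ``By \cref{thm:34}, the statement holds for $w=3$. For the inductive step, suppose $\langle P,<\rangle$ is a width-$(w-1)$ poset whose cover graph has queue number at least $w$; in particular every linear extension of $P$ contains a $w$-rainbow. Apply \cref{lem:w1} to $\langle P,<\rangle$ to obtain a width-$w$ poset $\langle P',<'\rangle$ such that every linear extension of $P'$ restricts to a linear extension of a copy of $P$ — hence contains a $w$-rainbow — together with an additional edge of the new chain nesting that entire rainbow, giving a $(w+1)$-rainbow. Therefore the queue number of $\langle P',<'\rangle$ is at least $w+1$, and its width is exactly $w$ by construction. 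The claim follows by induction.'' The heavy lifting — the actual construction of $P'$ and the verification that it behaves as described — is carried out in \cref{lem:w1} and its proof in the appendix.
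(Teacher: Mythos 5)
Your proposal matches the paper's proof: \cref{thm:w1} is obtained exactly by taking \cref{thm:34} as the base case $w=3$ and iterating \cref{lem:w1} to raise the width and the queue number by one at each step. The only divergence is in how you imagine \cref{lem:w1} works internally (one long chain plus a pigeonhole argument); the paper instead duplicates the poset and threads a new incomparable vertex $v$ between a global source $s$ and a global sink $t$, so that in every linear extension either $(s,v)$ nests the whole first copy or $(v,t)$ nests the whole second copy --- but since you invoke the lemma as a black box, this does not affect your argument.
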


\section{Conclusions}
\label{sec:conclusions}
In this paper, we explored the relationship between the queue number and the width
of posets. We disproved \cref{conj:hp99} and we focused
on two natural types of linear extensions, lazy and MRU.
That led to an improvement of the upper bound on the queue number of posets. 
A natural future direction is reduce the gap between the lower bound, $w+1$, 
and the upper bound, $(w-1)^2 + 1$, on the queue number of posets of width $w > 2$.
In particular, we do not know whether the queue number of width-$3$ posets is four or five, and
whether a subquadratic upper bound is possible.
It is also intriguing to ask whether \cref{conj:hp99} holds for \emph{planar} width-$w$ posets whose best-known 
upper bound is currently $3w-2$~\cite{DBLP:conf/gd/KnauerMU18}.

Another related open problem is on the \df{stack number} of directed acyclic graphs (DAGs).
The stack number is defined analogously to the queue number except that no two
edges in a single stack \df{cross}.
Heath et al.~\cite{DBLP:journals/siamcomp/HeathPT99a,DBLP:journals/siamcomp/HeathP99} asked
whether the \emph{stack number} of upward planar DAGs is bounded by a constant.
While the question has been settled for some
subclasses of planar digraphs~\cite{DBLP:journals/jgaa/FratiFR13}, the general problem remains unsolved.
This is in contrast with the stack number of undirected planar graphs, which has been shown
recently to be exactly four~\cite{DBLP:journals/corr/abs-2004-07630}.


\bibliographystyle{splncs03}
\bibliography{general,queues}

\arxapp{
\clearpage
\appendix
\section*{\LARGE Appendix}

\section{Pseudocode for the Algorithms}
\label{app:pseudocode}

In this section, we provide pseudocode for computing a lazy linear 
extension (\cref{algo:lazy}) and an MRU extension (\cref{algo:MRU}) of a poset of width $w$.

\begin{algorithm}[!h]
	\caption{Lazy Linear Extension}
	\label{algo:lazy}
	
	\SetKwInOut{Input}{Input}
	\SetKwInOut{Output}{Output}
	
	\Input{The cover graph $G=(V, E)$ of a width-$w$ poset $\langle P,<\rangle$ with $n$ elements and a chain partition $\CC$}
	\Output{A linear extension $L:v_1 \prec v_2  \prec \dots \prec v_n$ of $G$.}
	\BlankLine
	
	\For(){$i=1$ to $n$}{
		$v_i \leftarrow \emptyset$\;
		\tcp{find vertices from $V \setminus L$ having no incoming edges from $V \setminus L$}
		$S \leftarrow \{v\in V\setminus L: \; \nexists  (u,v)\in E \text{ with } u \in V \setminus L\}$\;
		\ForEach (\tcc*[f]{iterating over candidates}){$u \in S$}{
			\If{$\CC(u) = \CC(v_{i-1})$}{
				$v_i \leftarrow u$\;
			}
		}
		\lIf{$v_i = \emptyset$}{$v_i \leftarrow arbitrary(S)$}
		$L \leftarrow L \oplus \{v_i\}$\;
	}
	\Return{$L$\;}
\end{algorithm}

\begin{algorithm}[!h]
	\caption{MRU Extension}
	\label{algo:MRU}
	
	\SetKwInOut{Input}{Input}
	\SetKwInOut{Output}{Output}
	\SetKw{KwDownTo}{down to}
	
	\Input{The cover graph $G=(V, E)$ of a width-$w$ poset $\langle P,<\rangle$ with $n$ elements and a chain partition $\CC$}
	\Output{A linear extension $L:v_1 \prec v_2  \prec \dots \prec v_n$ of $G$.}
	\BlankLine
	
	\For(){$i=1$  \KwTo $n$}{
		$v_i \leftarrow \emptyset$\;
		\tcp{find vertices from $V \setminus L$ having no incoming edges from $V \setminus L$}
		$S \leftarrow \{v\in V\setminus L: \; \nexists  (u,v)\in E \text{ with } u \in V \setminus L\}$\;
		\For(\tcc*[f]{iterating over reversed $L$}){$j=i-1$ \KwDownTo $1$}{		
			\ForEach (\tcc*[f]{iterating over candidates}){$u \in S$}{
				\If(\tcc*[f]{check corresponding element of $L$}){$\CC(u) = \CC(v_j)$}{
					$v_i \leftarrow u$\;
					$break$\;
				}
			}
		}
		\lIf{$v_i = \emptyset$}{$v_i \leftarrow arbitrary(S)$}
		$L \leftarrow L \oplus \{v_i\}$\;
	}
	\Return{$L$\;}
\end{algorithm}

\newpage

\section{Lower Bounds}
\label{app:lowerbounds}

\subsection{A Lower Bound for General Linear Extensions}
\label{app:general}

In the following, we prove that a linear extension of a poset of width $w$ may
result in a rainbow of size $w^2$ for the edges of its cover graph, which
suggests that the bound by Heath and
Pemmaraju~\cite{DBLP:journals/siamdm/HeathP97} is worst-case optimal. Notice that 
the same claim is made by Knauer et al.~\cite{DBLP:conf/gd/KnauerMU18}. However, 
the poset that they claim to require $w^2$ queues (in some linear extension of it) is defined on $2w$ elements. As a result, its cover graph cannot have more than $w$ 
independent edges. Thus, also the largest rainbow that can be formed by any linear 
extension is of size at most $w$, that is, $w$ is an upper bound on the 
queue number of this poset.

\begin{theorem}\label{thm:general-extension}
For every even $w \geq 2$, there is a width-$w$ poset and a linear extension 
of it which results in a rainbow of size $w^2$ for the edges of its cover graph.
\end{theorem}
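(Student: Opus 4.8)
The goal is to construct, for every even $w \geq 2$, a width-$w$ poset whose cover graph carries $w^2$ independent edges that can all be simultaneously nested by some linear extension. The plan is to build the poset from $2w$ chains, grouped into $w$ ``bottom'' chains $B_1, \dots, B_w$ and $w$ ``top'' chains $T_1, \dots, T_w$, and to arrange a complete bipartite-like pattern of cover relations from the bottoms to the tops. Since we want width exactly $w$ (not $2w$), the bottom chains must be pairwise comparable among themselves in a way that collapses them into a total order when restricted to bottoms, and likewise for the tops; the subtlety is that a bottom element and a top element must be incomparable unless they are joined by one of our designated cover edges (or forced transitively). Concretely, I would make each bottom chain long enough that its elements can be interleaved in a linear extension so that all $w$ bottoms contribute one vertex to each of $w$ ``rounds,'' and symmetrically for tops; the $w^2$ edges then go from the $j$-th vertex of bottom chain $B_i$ to a matching top vertex, chosen so that the edge endpoints, read in the target linear extension, appear as $[x_1, \dots, x_{w^2}, y_{w^2}, \dots, y_1]$.

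The key steps, in order: (1) Fix the target linear extension $L$ first — list all $w^2$ left endpoints in some order, then all $w^2$ right endpoints in the reverse order — and then reverse-engineer which partial order on $2w$ chains makes $L$ a valid linear extension while keeping the width at $w$. (2) Verify that the designated edges are genuinely cover edges, i.e., non-transitive: this is where the gadget must be designed carefully, adding ``just enough'' comparabilities to force $L$ but not so many that some designated edge $(x,y)$ admits an intermediate element $z$ with $x < z < y$. The parity condition ``$w$ even'' almost certainly enters here — it lets the $w \times w$ grid of edges be partitioned (e.g.\ into a perfect matching structure or a pairing of rounds) so that the comparabilities needed to pin down $L$ do not shortcut any edge. (3) Exhibit an antichain of size $w$ to show width $\ge w$, and argue via a chain decomposition into $w$ chains (merging each $B_i$ with a corresponding $T_{\sigma(i)}$ along the comparabilities from step 1) that the width is exactly $w$, invoking Dilworth as in the preliminaries. (4) Conclude that the $w^2$ edges form a $w^2$-rainbow in $L$, and note this matches the upper bound from \cref{prp:w2}.

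The main obstacle I expect is step (2): simultaneously (a) forcing $L$ to be the unique-enough linear extension that realizes the full nesting, (b) keeping every one of the $w^2$ designated pairs a cover relation, and (c) not accidentally creating an antichain larger than $w$. These three constraints pull against each other — more comparabilities help (a) and (c) but threaten (b); fewer help (b) but may break (a) or (c). Resolving the tension is precisely where the even-$w$ hypothesis should be exploited, presumably by a symmetric ``two halves'' construction in which the bottoms split into two groups of $w/2$ that are comparable to complementary halves of the tops, so that each designated edge skips over only incomparable elements. I would also double-check the boundary case $w=2$ by hand against the known fact that width-$2$ posets have queue number $2$, to make sure the construction's $w^2 = 4$ rainbow is consistent (it is — queue number $2$ bounds rainbows under a \emph{good} extension, not an arbitrary one).
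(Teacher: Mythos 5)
Your plan has the right silhouette---long chains cut into a lower and an upper half, all $w^2$ rainbow edges crossing from the lower to the upper half, and a target linear extension that lists the $w^2$ sources first and then the $w^2$ targets in mirrored order---and this is indeed how the paper proceeds (it uses $w$ chains of $2w$ vertices each, with $w(w-1)$ inter-chain edges plus one intra-chain ``middle'' edge per chain). But what you have written is a plan, not a proof. The entire content of the theorem is the explicit choice of cover relations satisfying your constraints (a)--(c), and you explicitly defer that choice: you name the tension between forcing the extension, keeping every designated pair a cover relation, and controlling the width, and then leave it unresolved, with only a guess that evenness of $w$ will save the day. The paper resolves it by concrete index formulas (e.g.\ the outgoing edge from chain $\CC_i$ to chain $\CC_{i-k}$ runs from the $k$-th vertex of $\CC_i$ to the $(2w-i+1)$-st vertex of $\CC_{i-k}$) followed by a direct check that no designated edge is transitive; nothing in your write-up substitutes for that verification.

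Beyond incompleteness, two structural points in your plan are off. First, you require the bottom elements to be ``pairwise comparable among themselves,'' i.e.\ totally ordered, and likewise the tops; then every antichain contains at most one bottom and one top, so the poset has width at most $2$, contradicting width $w$ for $w>2$ (and contradicting your own ``rounds'' picture, which needs a $w$-antichain inside a round). Second, by \cref{prp:w2} a rainbow contains at most one edge for each ordered pair of chains of the decomposition, self-pairs included, so a $w^2$-rainbow in a width-$w$ poset must contain exactly one self-edge per chain. A purely bipartite edge set between $2w$ distinct chains therefore cannot be the final picture: the merging of $B_i$ with $T_{\sigma(i)}$ that you mention in step (3) is not an afterthought for the width count but is forced, and the $w$ edges $B_i \to T_{\sigma(i)}$ must be exhibited as intra-chain cover edges of the merged chains (the paper's ``middle edges''). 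Until the construction is written down explicitly and its transitive reduction verified, the theorem is not proved.
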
	

\begin{proof}
For even $w \geq 2$, we construct a poset $\langle P_w, < \rangle$ of width $w$
and we demonstrate a linear extension of it, which results in a queue layout of
$\GPW{P_w}$ with $w^2$ queues. We describe $\langle P_w, < \rangle$ in terms of its
cover graph $\GPW{P_w}$, which contains $w$ chains $\CC_1,\ldots,\CC_w$ of length
$2w$ that form paths in $\GPW{P_w}$.
We denote the $j$-th vertex of the $i$-th chain $\CC_i$ by $v_{i,j}$, where $1
\leq i \leq w$ and $1 \leq j \leq 2w$. Since each chain is a path in $\GPW{P_w}$, 
$(v_{i,j}, v_{i,j+1})$ is an edge in $\GPW{P_w}$ for every $1 \leq i \leq w$ and $1
\leq j \leq 2w-1$.
The first and the last $w$ vertices of each such path partition the vertex-set
of $\GPW{P_w}$ into two sets $S$ and $T$, respectively, that is, 
$S=\cup_{i=1}^w\{v_{i,1},\ldots,v_{i,w}\}$ and
$T=\cup_{i=1}^w\{v_{i,w+1},\ldots,v_{i,2w}\}$.
Observe that each chain has exactly one edge, called \df{middle-edge}, connecting a vertex in $S$ to a vertex in $T$. 
We describe the inter-chain edges of $\GPW{P_w}$ in an iterative way. Assume that we have introduced the inter-chain edges that form the connections between the first $i-1$ chains and let $\CC_i$ be the next chain to consider. 
First, we introduce the outgoing inter-chain edges from the vertices of $\CC_i$ as follows. For $k=1,\ldots, i-1$, we connect the $k$-th vertex $v_{i,k}$ of chain $\CC_i$ to the $(2w-i+1)$-th vertex $v_{i-k,2w-i+1}$ of chain $\CC_{i-k}$, that is, we introduce $(v_{i,k},v_{i-k,2w-i+1})$ in $\GPW{P_w}$. 
We next introduce the incoming inter-chain edges to vertices of $\CC_i$ as
follows. For $k=1,\ldots, i-1$, we connect the $(w-i+k)$-th vertex of $k$-th
chain $\CC_k$ to the $(2w-k+1)$-th vertex of chain $\CC_i$, that is, we
introduce edge $(v_{k,w-i+k},v_{i,2w-k+1})$ in $\GPW{P_w}$. This completes the
construction of $\GPW{P_w}$ and thus of poset $\langle P_w, < \rangle$.

\begin{figure}[!t]
	\center
	\includegraphics[page=1,width=0.7\textwidth]{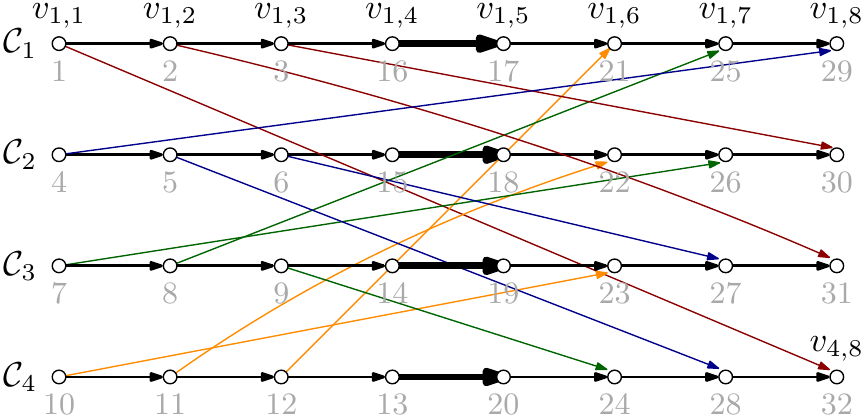}
	\caption{Illustration for the proof of \cref{thm:general-extension}: 
	The cover graph $\GPW{P_w}$ of a poset $\langle P_w, < \rangle$ with $w=4$ and a linear extension (indicated with gray numbers) 
	of it which yields a rainbow of size $16$.}
	\label{fig:general-extension}
\end{figure}

By construction, the inter-chain edges of $\GPW{P_w}$ connect only vertices from
$S$ to vertices in $T$, and from each chain there is only one (outgoing)
inter-chain edge to every other chain. This implies that an inter-chain edge
cannot be transitive in $\GPW{P_w}$. %
On the other hand, an intra-chain edge $(u,v)$ also cannot be transitive because
its source $u$ needs an outgoing inter-chain edge (which classifies $u$ in $S$)
and its target $v$ an incoming inter-chain edge (which classifies $v$ in $T$).
This implies that $(u,v)$ is a middle edge. In this case, however, our
construction ensures that there are inter-chain edges attached to neither $u$
nor $v$. Thus, $\GPW{P_w}$ is transitively reduced. Since $\GPW{P_w}$ is by
construction acyclic, we conclude that $\langle P_w, < \rangle$ is a poset.
Since any two vertices in the same chain are comparable, the width of $\langle
P_w, < \rangle$ equals to the number of sources (or sinks) of chains, which is
$w$.

To complete the proof, we next describe a linear extension of $\GPW{P_w}$ which necessarily yields a $w^2$-rainbow. 
For $i=1,\ldots,w$ and for $j=1,\ldots,w-1$, the $j$-th vertex $v_{i,j}$ of chain $\CC_i$ is the $((i-1)(w-1)+j)$-th vertex in the extension.
For $i=1,\ldots,w$, the $w$-th vertex $v_{i,w}$ of chain $\CC_i$ is the $(w(w-1)+w-(i-1))$-th vertex in the extension.
For $i=1,\ldots,w$ and for $j=1,\ldots,w$, the $(w+j)$-th vertex $v_{i,w+j}$ of chain $\CC_i$ is the $(w^2 + jw + (i-1))$-th vertex in the extension.  
In this linear extension, all inter-chain edges (which are in total $w(w-1)$) and all middle edges (which are in total $w$) form a rainbow of size $w^2$.
\qed\end{proof}

\subsection{A Lower Bound for Lazy Linear Extension}
\label{app:lazy}

\begin{theorem}\label{thm:lazy-bound}
For every $w \geq 2$, there exists 
a width-$w$ poset, which has a lazy linear extension resulting in a rainbow of size $w^2-w$ for the edges of its cover graph.
\end{theorem}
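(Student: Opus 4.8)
For every $w \geq 2$, there exists a width-$w$ poset which has a lazy linear extension resulting in a rainbow of size $w^2-w$ for the edges of its cover graph.

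My plan is to construct a poset together with a lazy linear extension whose pattern mirrors the "staircase" structure of the general-extension lower bound (\cref{thm:general-extension}), but modified so that every element chooses its successor according to the lazy rule, and so that exactly one incoming rainbow per chain is forced to \emph{lose} its self edge --- giving $w$ incoming rainbows of size $w-1$ each, for a total of $w(w-1) = w^2-w$. Concretely, I would use $w$ chains $\CC_1,\dots,\CC_w$, each a path, and split each path's vertices into a "left" block and a "right" block as in \cref{thm:general-extension}. The left blocks are laid out in the order $\CC_1,\CC_2,\dots,\CC_w$ (each block emitted contiguously, which is consistent with laziness since within a block we always stay on the same chain), and the right blocks are laid out in the reverse order $\CC_w,\dots,\CC_1$, again each contiguously. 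Between the left block of $\CC_i$ and the right block of $\CC_j$ I place an inter-chain cover edge for every ordered pair $(i,j)$ with $i \neq j$; these $w(w-1)$ edges will all nest, and I must choose the endpoints (which vertex of the left block of $\CC_i$, which vertex of the right block of $\CC_j$) so that (a) none of these edges is transitive, and (b) the laziness rule, applied greedily from a minimal element, actually produces this ordering.

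The key steps, in order: (1) Fix the vertex counts per chain and the precise endpoints of the inter-chain edges so that the nesting order of the $w(w-1)$ inter-chain edges is exactly the lexicographic order on pairs $(i,j)$ --- the outermost edge going from the earliest left block to the latest right block. (2) Verify the cover graph is transitively reduced and acyclic, hence defines a poset of width exactly $w$; this is the same style of argument as in \cref{thm:general-extension} (an inter-chain edge goes left-to-right and there is at most one between any ordered pair of chains, so it cannot be transitive, and intra-chain edges that straddle the left/right split carry no inter-chain attachments). (3) Check that the constructed total order is a valid linear extension (all edges point forward). (4) The crucial step: verify that this order is \emph{lazy}. For this I run the incremental construction: whenever we finish the left block of $\CC_i$, the last vertex of that block must have \emph{no} available same-chain successor (its chain's right block is not yet releasable because some inter-chain edge into it originates in a not-yet-placed left block), forcing a free choice which we resolve by jumping to the start of the left block of $\CC_{i+1}$; similarly for transitions among right blocks. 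I need to arrange the edges so that exactly these forced-free-choice moments occur at block boundaries and nowhere inside a block. (5) Finally, exhibit the $(w^2-w)$-rainbow: the $w(w-1)$ inter-chain edges, all mutually nesting by construction.

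The main obstacle I anticipate is step (4): making laziness hold exactly. Laziness is a local greedy constraint, so I cannot freely prescribe the order --- I must engineer the edge set so that the greedy rule is *forced* into it. The delicate point is ensuring that within a left block, the next vertex of the same chain is always still a source of the remaining subgraph (so laziness keeps us on that chain), while at the block boundary it is *not* (so we may jump). This requires that the inter-chain edges out of chain $\CC_i$'s left block all emanate from vertices that become sources early, and that the right block of $\CC_i$ stays "blocked" (has an incoming edge from an unplaced vertex) until all earlier left blocks are done --- which in turn dictates the indices $w-i+k$ and $2w-k+1$ type offsets seen in \cref{thm:general-extension}. I expect the clean way to handle this is to reuse the exact edge construction of \cref{thm:general-extension} after checking that the linear extension given there is already lazy except that, to drop one self edge per chain, I delete the $w$ middle edges (the intra-chain edges straddling the split) from the rainbow count --- equivalently, shift one chain's right block by one position so that its straddling intra-chain edge no longer nests the others. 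Then the rainbow shrinks from $w^2$ to exactly $w^2 - w$, and the resulting order, being a minor perturbation of an already-lazy-looking staircase, can be checked to satisfy the lazy rule directly.
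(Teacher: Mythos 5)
Your proposal takes a genuinely different route from the paper, but it has a gap at exactly the step you flag as the main obstacle, and the shortcut you offer to close it does not work. The paper does not build a single global staircase; it constructs $\GPR{w}$ \emph{recursively} from $\GPR{w-1}$ by adding one new chain plus $6w-4$ carefully placed guard vertices (two at the head and two at the tail of each old chain, plus two special fan vertices), maintaining an invariant on which chains contain the first and last vertices of $L_{w-1}$. The new level contributes a $(2w-2)$-rainbow nesting the inherited $((w-1)^2-(w-1))$-rainbow, giving $w^2-w$. The whole point of the guard vertices is to control, locally and verifiably, when each chain's continuation becomes a source, so that the lazy rule is genuinely forced into the desired order; the recursion means this only ever has to be checked for the $O(w)$ newly added vertices.

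The concrete gap in your plan is step (4). You assert that the linear extension of \cref{thm:general-extension} is ``already lazy-looking'' and needs only a ``minor perturbation.'' It is not: in that construction all incoming inter-chain edges target right-block vertices $v_{i,2w-k+1}$, so after the prefix $v_{i,1},\dots,v_{i,w-1}$ of chain $\CC_i$ has been emitted, the vertex $v_{i,w}$ already has all of its in-neighbours placed and is a source of the remaining subgraph. The lazy rule then \emph{forces} the extension to continue with $v_{i,w}$ rather than jump to $\CC_{i+1}$, so the staircase order violates laziness at every single block boundary. Repairing this requires adding, for every chain, incoming edges to $v_{i,w}$ (and to the heads of the right blocks) from vertices of later chains that have not yet been placed --- a substantive redesign of the cover graph in which one must then re-verify transitive reduction, the width, and the nesting of all $w(w-1)$ inter-chain edges, none of which you carry out. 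Your rainbow accounting ($w(w-1)$ inter-chain edges, middle edges excluded) is consistent with the target $w^2-w$, and the approach is not obviously doomed --- but as written the proof of the central claim, that a lazy extension realizing this rainbow exists, is missing. If you want to pursue the direct construction, you will need to introduce blocking gadgets playing the role of the paper's $v_{i,1},v_{i,2},v_{i,3},v_{i,4},\overline{v}_{1,2},\overline{v}_{w-1,3}$ and trace the greedy process step by step; at that point the recursive formulation is arguably the cleaner way to organize the same bookkeeping.
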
	

\begin{proof}
For $w \geq 2$, we construct a poset $\PPR{w}$ of width $w$ and we demonstrate a lazy linear extension $L_w$ of it,
which results in a queue layout of $\GPR{w}$ with $w^2-w$ queues. We describe
$\PPR{w}$ in terms of its cover graph $\GPR{w}$. We define
$\GPR{w}$ recursively based on the graph $\GPR{w-1}$ of width $w-1$, for which we
assume that it admits a lazy linear extension $L_{w-1}$, such that the edges of
$\GPR{w-1}$ form a rainbow of size exactly $(w-1)^2-(w-1)$ in $L_{w-1}$. Since
$\GPR{w-1}$ has width $w-1$, its vertex-set can be partitioned into $w-1$
chains $\CC_1,\ldots,\CC_{w-1}$~\cite{Di50}. As an invariant property in the
recursive definition of $\GPR{w}$, we assume that the first and the last vertices
in $L_{w-1}$ belong to two different chains of the partition, say w.l.o.g.\ to
$\CC_1$ and $\CC_{w-1}$, respectively.

In the base case $w=2$, cover graph $\GPR{2}$ consists of five vertices
$v_1,\ldots,v_5$ and four edges $(v_1,v_2)$, $(v_1,v_5)$, $(v_3,v_4)$ and
$(v_4,v_5)$. It is not difficult to see that $\GPR{2}$ has width $2$ and for the
chain partition $\CC_1=\{v_1,v_2\}$, $\CC_2=\{v_3,v_4,v_5\}$ the linear
extension $v_1 \prec \ldots \prec v_5$ is a lazy linear extension of it, which
satisfies the invariant property and results in a $2$-rainbow formed by
$(v_1,v_5)$ and $(v_3,v_4)$.

Graph $\GPR{w}$ is obtained by augmenting $\GPR{w-1}$ with $6w-4$ vertices.
Hence, $\GPR{w}$ contains $3w^2-w-5$ vertices in total. We further enrich the
chain partition $\CC_1,\ldots,\CC_{w-1}$ of $\GPR{w-1}$ by one additional chain
$\CC_w$ in $\GPR{w}$; see Fig.~\ref{fig:lazybound}. In particular, chain $\CC_w$
contains $2(w-1)$ vertices $v_{w,1}, \ldots, v_{w,2w-2}$ vertices that form a
path in this order in $\GPR{w}$. Chain $\CC_1$ of $\GPR{w-1}$ in enriched with
five additional vertices $v_{1,1}$, $v_{1,2}$, $\overline{v}_{1,2}$, $v_{1,3}$
and $v_{1,4}$ in $\GPR{w}$, such that $v_{1,1}$ is connected to $v_{1,2}$,
$v_{1,2}$ is connected $\overline{v}_{1,2}$ and $\overline{v}_{1,2}$ is
connected to the first vertex of chain $\CC_i$ in $L_{w-1}$ for all $1 \leq i
\leq w-1$, the last vertex of $\CC_1$ in $L_{w-1}$ is connected to $v_{1,3}$,
and $v_{1,3}$ is connected to $v_{1,4}$. For $i=2,\ldots,w-2$, chain $\CC_i$ of
$\GPR{w-1}$ is enriched with four vertices $v_{i,1}$, $v_{i,2}$, $v_{i,3}$ and
$v_{i,4}$ in $\GPR{w}$, such that $v_{i,1}$ is connected to $v_{i,2}$, $v_{i,2}$
is connected to the first vertex of chain $\CC_i$ in $L_{w-1}$, the last vertex
of chain $\CC_i$ in $L_{w-1}$ is connected to $v_{i,3}$, and vertex $v_{i,3}$ is
connected to $v_{i,4}$. Finally, chain $\CC_{w-1}$ is enriched with five
vertices $v_{w-1,1}$, $v_{w-1,2}$, $\overline{v}_{w-1,3}$, $v_{w-1,3}$ and
$v_{w-1,4}$, such that vertex $v_{w-1,1}$ is connected to $v_{w-1,2}$,
$v_{w-1,2}$ is connected to the first vertex of $\CC_{w-1}$ in $L_{w-1}$, the
last vertex of $\CC_{w-1}$ in $L_{w-1}$ is connected to $\overline{v}_{w-1,3}$,
$\overline{v}_{w-1,3}$ is connected to $v_{i,3}$ for all $1 \leq i \leq w-1$ and
$v_{i,3}$ is connected to $v_{w-1,4}$ for all $1 \leq i \leq w$. We complete the
construction of $\GPR{w}$ by adding the following edges (colored orange in
\cref{fig:lazybound}):
\begin{inparaenum}[(i)]
\item $(v_{i,1},v_{w,w+i-1})$ for all $1 \leq i \leq w-1$,
\item $(v_{w,i},v_{w-i,4})$ for all $1 \leq i \leq w-1$.
\end{inparaenum}

\begin{figure}[!t]
	\center
	\includegraphics[page=2,width=\textwidth]{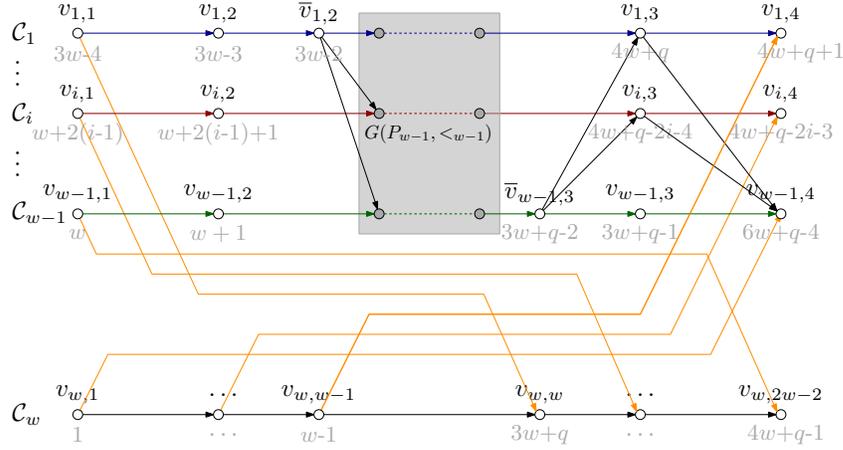}
	\caption{Illustration for \cref{thm:lazy-bound};
	$q$ denotes the number of vertices of $\GPR{w-1}$, that is, $q=3(w-1)^2-(w-1)-5$.} 
	\label{fig:lazybound}
\end{figure}

The construction ensures that $\GPR{w}$ contains no transitive edges and that its
width is $w$, since all the newly added vertices either are comparable to
vertices of $\CC_1,\ldots,\CC_{w-1}$ or belong to the newly introduced chain
$\CC_w$. Hence, $P_w$ is a well-defined width-$w$ poset. Now, consider the
following linear extension $L_w$ of $\GPR{w}$:

\begin{equation*}
    \begin{split}
        [v_{w,1}, \dots, v_{w,w-1}, 
        v_{w-1,1},  v_{w-1,2}, \dots v_{1,1}, v_{1,2}, \overline{v}_{1,2}, L_{w-1}, \overline{v}_{w-1,3}, v_{w-1,3}, & ~ \\
        v_{w,w}, \dots, v_{w,2w-2}, 
        v_{w-2,3}, v_{w-2,4}, \ldots, v_{1,3}, v_{1,4}, v_{w-1,4}& ~
    \end{split}
\end{equation*}

It can be easily checked that $L_w$ is a lazy linear extension of $\GPR{w}$,
under our invariant property that the first and the last vertices of $L_{w-1}$
belong to two different chains in $\{\CC_1,\ldots,\CC_{w-1}\}$, which we assume
to be $\CC_1$ and $\CC_{w-1}$, respectively. Note that since the first vertex of
$L_w$ belongs to $\CC_w$ while its last vertex to $\CC_{w-1}$, the invariant
property is maintained in the course of the recursion. We complete the proof by
observing that the $w-1$ edges stemming from the first $w-1$ vertices of $\CC_w$
towards the last vertices of the chains $\CC_1,\ldots,\CC_{w-1}$ and the $w-1$
edges stemming from the first $w-1$ vertices of $\CC_1,\ldots,\CC_{w-1}$ towards
the last $w-1$ vertices of chain $\CC_{w}$ form a rainbow of size $2w-2$ in
$L_w$  (see the orange edges in Fig.~\ref{fig:lazybound}), which nests the
rainbow of size $(w-1)^2-(w-1)$ of~$L_{w-1}$. Thus, we have identified a
rainbow of total size $w^2-w$ in $L_w$, as desired.
\qed\end{proof}

\subsection{A Lower Bound for MRU Extension}
\label{app:mru}

\begin{theorem}\label{thm:mru-bound}
	For every $w \geq 2$, there exists 
	a width-$w$ poset, which has an MRU extension resulting in a rainbow of size $(w-1)^2+1$ for the edges of its cover graph.
\end{theorem}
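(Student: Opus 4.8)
The plan is to build the required poset recursively, in the spirit of the construction used for \cref{thm:lazy-bound}, but with a gadget tuned so that the \emph{forced} choices of the MRU rule — rather than the free choices available to a general lazy extension — reproduce the desired order. Write $\PPR{w}$ for the poset we seek, with cover graph $\GPR{w}$ and MRU extension $L_w$. For the base case $w=2$ I would reuse the five‑vertex graph $\GPR{2}$ from the proof of \cref{thm:lazy-bound} (vertices $v_1,\dots,v_5$, edges $(v_1,v_2),(v_1,v_5),(v_3,v_4),(v_4,v_5)$, chains $\{v_1,v_2\}$ and $\{v_3,v_4,v_5\}$): one checks directly that $[v_1,\dots,v_5]$ is produced by the MRU algorithm and that $(v_1,v_5)$ nests $(v_3,v_4)$, a rainbow of size $(2-1)^2+1=2$. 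For the inductive step I would assume that $\GPR{w-1}$ has an MRU extension $L_{w-1}$ whose edges contain a rainbow of size $(w-2)^2+1$, together with an invariant that records enough of the MRU state at the two ends of $L_{w-1}$ (at least the chains of the first and last elements of $L_{w-1}$, and the order in which the chains were most recently touched) to keep the recursion going.

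In the inductive step I would add one new chain $\CC_w$ and a linear number of fresh ``glue'' vertices appended to, and prepended to, the old chains, arranged so that
\[
L_w = [\,\text{prefix} \cdot L_{w-1} \cdot \text{suffix}\,],
\]
where the prefix first lays down the early vertices of $\CC_w$ and then one or two fresh vertices on each old chain (ending with a vertex that is covered by the head of every old chain in $L_{w-1}$), and the suffix lays down the remaining vertices of $\CC_w$ interleaved with fresh ``tail'' vertices on the old chains. I would then add two fans of inter‑chain cover edges: one fan of $w-1$ edges from the early vertices of $\CC_w$ to the new tail vertices of the old chains, and a second fan of $w-2$ edges from the new head vertices of the old chains into the late portion of $\CC_w$. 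By construction these $2w-3$ edges pairwise nest and all of them nest the embedded rainbow of $L_{w-1}$, giving a total rainbow of size $(w-2)^2+1+(2w-3)=(w-1)^2+1$; telescoping from the base case reproduces exactly this bound at every level. As in \cref{thm:lazy-bound} one then checks that none of the added edges is transitive and that the width is exactly $w$ (each new vertex is either comparable with an entire old chain or lies on $\CC_w$).

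The genuinely delicate point — and the step I expect to be the main obstacle — is verifying that $L_w$ is actually the output of the MRU \emph{algorithm}, not merely a lazy extension. Since MRU is deterministic once its rare genuine ties are broken, the glue must be designed so that at each step the ``most recently used chain with an available source'' is precisely the intended next chain: the early‑$\CC_w$ block must be forced to stop (so the first vertex of the late block of $\CC_w$ must receive an incoming edge from a head vertex placed only within or after the prefix), each restart of an old chain must be blocked until its turn (so the next cover edge out of each prepended head vertex must lead into $L_{w-1}$), and the two transitions across the seam with $L_{w-1}$ must be compatible with the MRU state asserted by the induction hypothesis. Making the boundary invariant simultaneously implied by the construction and usable at the next level is the crux of the argument; it is also the reason the per‑level gain is $2w-3$ rather than the $2w-2$ of the lazy construction, since one edge of the inner fan must be sacrificed to keep the MRU bookkeeping consistent across the recursive seam. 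Combining the recursion with the base case then yields, for every $w\ge 2$, a width‑$w$ poset with an MRU extension containing a rainbow of size $(w-1)^2+1$, matching the upper bound of \cref{thm:mru}.
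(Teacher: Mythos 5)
Your overall strategy coincides with the paper's: the same five-vertex base case, a recursive step that adds one new chain $\CC_w$ plus a linear number of glue vertices, two nested fans of $w-1$ and $w-2$ inter-chain edges contributing $2w-3$ per level, and a boundary invariant on $L_{w-1}$ that records the MRU state at the seams. The arithmetic $(w-2)^2+1+(2w-3)=(w-1)^2+1$ and even your observation that one edge of the inner fan must be sacrificed relative to the lazy construction of \cref{thm:lazy-bound} match the paper exactly.

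The problem is that your proposal stops precisely at the step you yourself flag as the crux: no gadget is exhibited, so there is nothing against which to verify that the claimed order is an output of the MRU algorithm, and the ``boundary invariant'' is never made precise. This is a genuine gap rather than a formality, because the entire content of the theorem beyond \cref{thm:lazy-bound} is that the MRU rule is forced by the cover edges rather than chosen by the adversary. The paper closes this gap with concrete choices that your plan would still have to supply: a \emph{strengthened} invariant asserting that in $L_{w-1}$ the first vertices of $\CC_1,\dots,\CC_{w-1}$ appear in that order while their last vertices appear in the reverse order (so the most-recently-used ranking of all chains is known exactly at both seams); an extra vertex $v_{w,0}$ at the head of $\CC_w$ that covers $v_{i,2}$ for every old chain, forcing the prefix to drain the early block of $\CC_w$ and then restart the old chains in the intended order; the replacement of $(v_{1,1},v_{w,w})$ by $(\overline{v}_{1,2},v_{w,w})$, which is exactly the sacrificed fan edge you allude to; and the edges $(v_{i,3},v_{j,4})$ for $i<j$ that keep the suffix MRU-consistent and propagate the invariant to $L_w$. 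Until these (or equivalent) choices are written down and checked step by step against the MRU selection rule, the argument establishes a construction plan, not the theorem.
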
	

\begin{proof}
As in the proof of Theorem~\ref{thm:lazy-bound}, we describe poset $\PPR{w}$ in
terms of its cover graph $\GPR{w}$. Similar to the proof of \cref{thm:lazy-bound}
$\GPR{w}$ is defined recursively based on graph
$\GPR{w-1}$ which is of width $w-1$ and thus its vertex-set admits a
partition into $w-1$ chains $\CC_1,\ldots,\CC_{w-1}$. As an invariant property
in the recursive definition of $\GPR{w}$ we now assume that 
$\GPR{w-1}$ admits an MRU extension $L_{w-1}$
resulting in a rainbow of size $(w-1)^2+1$ for the edges of
$\GPR{w-1}$, in which for every $1 \leq i < w$ the first  vertex of
$\CC_i$ appears before the first vertex of $\CC_{i+1}$ in $L_{w-1}$,
while the last vertex of $\CC_i$ appears after the last vertex of $\CC_{i+1}$ in
$L_{w-1}$. Note that this property is stronger than the corresponding
one we imposed for $\GPR{w}$. The base graph  $\GPR{2}$ is exactly
the same as the one in the proof of \cref{thm:lazy-bound}, and it is not difficult to see that $v_1 \prec \ldots
\prec v_5$ is an MRU extension of it satisfying also the stronger
invariant property.

\begin{figure}[!t]
	\center
	\includegraphics[page=3,width=\textwidth]{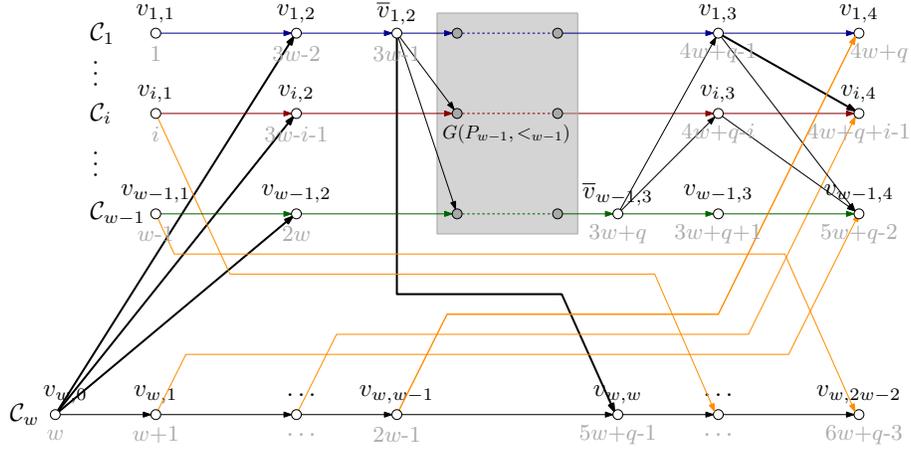}
	\caption{Illustration for \cref{thm:mru-bound};
		$q$ denotes the number of vertices of $\GPR{w-1}$, that is, $q=3(w-1)^2-7$.} 
	\label{fig:mrubound}
\end{figure}

The first step in the construction of graph $\GPR{w}$ based on
$\GPR{w-1}$ is exactly the same as in the proof of \cref{thm:lazy-bound}
but without the edge $(v_{1,1}, v_{w,w})$, which is now
replaced by $(\overline{v}_{1,2}, v_{w,w})$; see Fig.~\ref{fig:mrubound}. In a
second step, we introduce a vertex $v_{w,0}$ being the first vertex in the path
formed by the vertices of chain $\CC_w$. This vertex is also connected to
$v_{i,2}$ for all $1 \leq i \leq w-1$. Finally, we add the following edges to
$\GPR{w}$, namely, for all $1 \leq i  < j \leq w-1$, we connect
$v_{i,3}$ to $v_{j,4}$.  Note that $\GPR{w}$ is acyclic and
transitively reduced as desired, while its width is $w$. We construct an appropriate linear extension $L_w$ of it as follows:

\begin{equation*}
    \begin{split}
        [v_{1,1}, \dots, v_{w-1,1},
        v_{w,0}, v_{w,1}, \dots, v_{w,w-1},
        v_{w-1,2}, v_{w-2,2}, \dots v_{1,2}, \overline{v}_{1,2}, L_{w-1},  & ~ \\
        \overline{v}_{w-1,3}, v_{w-1,3},
        v_{w-2,3}, \dots, v_{1,3},
        v_{1,4}, \dots v_{w-1,4},
        v_{w,w+1}, v_{w,2w-2}]& ~
    \end{split}
\end{equation*}

It can be easily checked that $L_w$ is an MRU extension of
$\GPR{w}$, under strong invariant property. In particular, at vertex
$\overline{v}_{1,2}$ of the aforementioned extension chains $\CC_1,\ldots,\CC_w$
are in this order from the most recent to the least recent one. By the invariant
property, at vertex $\overline{v}_{w-1,3}$ chains $\CC_1,\ldots,\CC_w$ are in
the reverse order, that is, from the least recent to the most recent one.  Since
for the first vertices of every chain in $L_w$ it holds $v_{1,1}
\prec \ldots \prec v_{w-1,1} \prec v_{w,0}$, while for the corresponding last
vertices it holds $v_{1,4} \prec \dots \prec v_{w-1,4} \prec v_{w,2w-2}$, the
strong invariant property is maintained in $L_w$.

We complete the proof by observing that the $w-1$ edges stemming from the first
$w-1$ vertices of $\CC_w$ towards the last vertices of the chains
$\CC_1,\ldots,\CC_{w-1}$ and the $w-2$ edges stemming from the first $w-2$
vertices of $\CC_2,\ldots,\CC_{w-1}$ towards the last $w-2$ vertices of chain
$\CC_{w}$ form a rainbow of size $2w-3$ in $L_w$ (refer to the orange
edges in Fig.~\ref{fig:mrubound}), which nests the rainbow of size $(w-2)^2+1$
of $L_{w-1}$. Hence, we identified a rainbow of total size
$(w-1)^2+1$ in $L_w$, as desired.
\qed\end{proof}

\section{A Note on the Upper Bound of Knauer et al.~\cite{DBLP:conf/gd/KnauerMU18}}
\label{app:knauer}

Here we discuss a problem in the approach of Knauer et al.~\cite{DBLP:conf/gd/KnauerMU18} 
to derive the upper bound of $w^2 - 2 \lfloor w/2 \rfloor$ on the queue number of 
posets of width~$w$. Knauer et al. used a simple form of the lazy linear extension that 
we discuss in \cref{sec:lazy} to prove that the queue number of a poset of width~$2$
is at most~$2$. Using the result, they derived the bound of $w^2 - 2 \lfloor w/2 \rfloor$ 
on the queue number of a poset $\langle P,< \rangle$ of width $w$
by pairing up chains of the chain partition of $\langle P,< \rangle$. The
pairing yields $\lfloor w/2 \rfloor$ pairs, each of which induces a poset of
width~$2$, and thus, admits a lazy linear extension with the maximum rainbow 
of size $2$.

\begin{figure}[!t]
	\center
	\includegraphics[page=3]{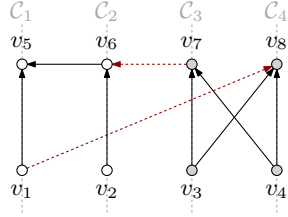}
	\caption{Illustration a poset of width $4$ together with a chain partition $\CC_1,\dots,\CC_4$.} 
	\label{fig:knauer}
\end{figure}

The critical step is to combine the linear extensions of the pairs to a linear
extension of the original poset by ``respecting all these partial linear
extensions'', as stated in~\cite{DBLP:conf/gd/KnauerMU18}. The step is
problematic even for $w=4$. To see this, consider the poset illustrated in
\cref{fig:knauer} through its cover graph. This poset has
width~$4$ and $\CC_1, \ldots, \CC_4$ is a chain partition.
It is not difficult to see that the poset induced by $\CC_1$ and $\CC_2$
admits the following lazy linear extension:
\[L_1:\;v_2 \prec v_6 \prec v_1 \prec v_5.\] 
The poset induced by $\CC_3$ and $\CC_4$ admits the following lazy linear extension: 
\[L_2:\;v_3 \prec v_4 \prec v_8 \prec v_7.\]
According to \cite{DBLP:conf/gd/KnauerMU18}, the two linear extensions, $L_1$ and $L_2$, are combined 
into a linear extension $L$ of the original poset. In particular, the following holds in $L$:

\begin{itemize}[--]
\item $v_1 \prec v_8$, due to edge $(v_1,v_8)$,
\item $v_8 \prec v_7$, since this holds in $L_2$,
\item $v_7 \prec v_6$, due to edge $(v_7,v_6)$.
\end{itemize}
By transitivity, it follows that $v_1 \prec v_6$ in $L$. However,
$v_6 \prec v_1$ in $L_1$, a contradiction.

We conclude that a crucial argument is missing in~\cite{DBLP:conf/gd/KnauerMU18}.
It is not clear how to avoid such a problem for an approach in which
two linear extensions are combined into a single one. It is tempting to argue
about specific lazy linear extensions (such as MRU), but 
unfortunately those are identical for width-$2$ posets.

\section{Details on the Counterexample to Conjecture~1}
\label{app:counterexample}

In this section, we give the details of the proofs of Theorems~\ref{thm:34} and~\ref{thm:w1}. Recall the definitions of cover graphs $G(p,q)$ and $\widetilde{G}(p,q)$ from Section~\ref{sec:counterexample}. 
It is easy to verify that both $G(p,q)$ and $\widetilde{G}(p,q)$ are
transitively reduced, acyclic and of width $3$. For $i=1,\ldots,q-3$, we denote
by $T_a(i)$ the subgraph of $G(p,q)$ induced by the vertices
$a_i,\ldots,a_{i+6}$ and the vertex $c_{i+3}$. Accordingly, $T_c(i)$ is the
subgraph of $G(p,q)$ induced by the vertices $c_i,\ldots,c_{i+6}$ and the vertex
$a_{i+3}$; see \cref{fig:ce-t}. We further denote by $X_a(i)$ the subgraph of
$G(p,q)$ induced by the vertices $a_{i+1}, \dots, a_{i+4}, c_{i}, \dots,
c_{i+5}$ and symmetrically by $X_c(i)$ the subgraph of $G(p,q)$ induced by the
vertices $a_{i}, \dots, a_{i+5}, c_{i+1}, \dots, c_{i+4}$; see \cref{fig:ce-x}.

\begin{figure}[!ht]
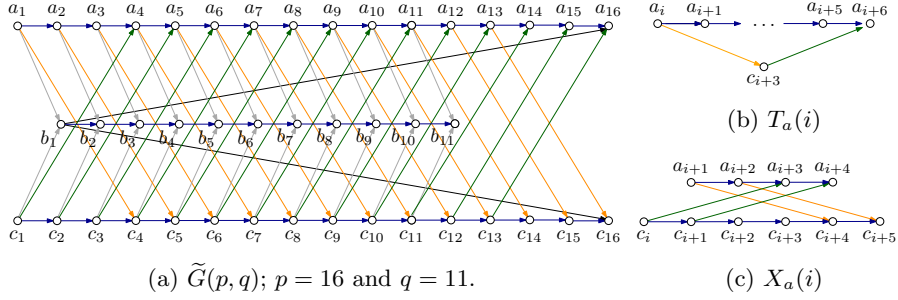

	\centering
	\begin{subfigure}[b]{.68\linewidth}
		\center
		\includegraphics[page=1,width=\textwidth]{pics/definitions}
		\caption{$\widetilde{G}(p, q)$; $p=16$ and $q=11$.}
		\label{fig:ce}
	\end{subfigure}  
	\begin{subfigure}[b]{.31\linewidth}
		\center
		\includegraphics[page=2,width=\textwidth]{pics/definitions}
		\caption{$T_a(i)$}
		\label{fig:ce-t}
		\includegraphics[page=3,width=\textwidth]{pics/definitions}
		\caption{$X_a(i)$}
		\label{fig:ce-x}
	\end{subfigure}    
	\caption{Illustration of graph $\widetilde{G}(p,q)$ and its subgraphs $T_a(i)$ and $X_a(i)$.}
	\label{fig:ce-conf}
\end{figure}

The following lemma guarantees the existence of a $3$-rainbow, when there
exists an edge, say $(u, v)$, that ``nests'' $T_a(i)$ in a linear extension of
$G(p,q)$, that is, when $u \prec a_i < \dots \prec a_{i+6} \prec v$. We denote
this configuration by $[u, T_a(i), v]$.

\begin{lemma}\label{prp:ce-t}
	In every linear extension of $G(p,q)$, each of $T_a(i)$ and $T_c(i)$ requires $2$ queues for all $i=1,\ldots,q-3$.
\end{lemma}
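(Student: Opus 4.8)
The plan is to show that $T_a(i)$ (the argument for $T_c(i)$ being symmetric) cannot be laid out in a single queue, i.e.\ that every linear extension of $G(p,q)$ restricted to the nine vertices $a_i,\dots,a_{i+6},c_{i+3}$ of $T_a(i)$ contains a $2$-rainbow among the edges of this subgraph. Recall which edges $T_a(i)$ carries: the six intra-chain edges $(a_j,a_{j+1})$ for $i\le j\le i+5$, together with the two inter-chain edges $(a_i,c_{i+3})$ and $(c_{i+3},a_{i+6})$ (these exist in $G(p,q)$ by the definition of the $(a_\ell,c_{\ell+3})$ and $(c_\ell,a_{\ell+3})$ edges, with $\ell=i$ and $\ell=i+3$ respectively). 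Since any linear extension must be a topological order of the cover graph, the $a$-vertices appear in the order $a_i\prec a_{i+1}\prec\dots\prec a_{i+6}$, and $c_{i+3}$ must satisfy $a_i\prec c_{i+3}\prec a_{i+6}$.

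First I would locate $c_{i+3}$ relative to the chain of $a$'s: there is a unique index $j$ with $a_{i+j}\prec c_{i+3}\prec a_{i+j+1}$ (possibly $c_{i+3}$ coincides with one of the endpoints, which I would handle as a degenerate case), and $0\le j\le 5$. The key combinatorial observation is that $c_{i+3}$ splits the six-edge path $a_i\to\dots\to a_{i+6}$ into a ``left part'' of edges lying entirely before $c_{i+3}$ and a ``right part'' lying entirely after, and the two inter-chain edges $(a_i,c_{i+3})$ and $(c_{i+3},a_{i+6})$ reach from the far left to $c_{i+3}$ and from $c_{i+3}$ to the far right. If $1\le j\le 4$, then on the left of $c_{i+3}$ there is at least one full intra-chain edge $(a_{i+j-1},a_{i+j})$ with $a_i\prec a_{i+j-1}$, hence the pattern $[a_i,\, a_{i+j-1},\, a_{i+j},\, c_{i+3}]$ shows that $(a_i,c_{i+3})$ nests $(a_{i+j-1},a_{i+j})$, a $2$-rainbow; symmetrically if instead there is a full edge on the right, $(a_i,c_{i+3})$ is nested inside $(c_{i+3},a_{i+6})$-reach only if the right part contains an edge, giving $[c_{i+3},a_{i+j+1},a_{i+j+2},a_{i+6}]$ nested by $(c_{i+3},a_{i+6})$. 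So the only potentially queue-safe placements are the two extremes $j=0$ and $j=5$; but if $j=0$ then $c_{i+3}$ sits between $a_i$ and $a_{i+1}$, so $(a_i,c_{i+3})$ is nested inside $(c_{i+3},a_{i+6})$ is not the issue — rather the five edges $(a_{i+1},a_{i+2}),\dots,(a_{i+5},a_{i+6})$ all lie after $c_{i+3}$, and $(c_{i+3},a_{i+6})$ nests all of them, so certainly a $2$-rainbow $[c_{i+3},a_{i+1},a_{i+2},a_{i+6}]$; the case $j=5$ is symmetric using $(a_i,c_{i+3})$ nesting $[a_i,a_{i+1},a_{i+2},c_{i+3}]$. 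Thus in every case a $2$-rainbow appears, so two queues are necessary.

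I expect the main obstacle to be bookkeeping the boundary cases cleanly: $c_{i+3}$ coinciding with (or immediately adjacent to) $a_i$ or $a_{i+6}$, and making sure the chosen nesting edges are genuinely \emph{independent} (no shared endpoint) — in each case one must verify the two edges of the exhibited $2$-rainbow involve four distinct vertices, which they do since one edge is incident to $c_{i+3}$ and the nested edge lies strictly inside among the $a$-vertices. A compact way to write the final proof is: place $c_{i+3}$ between consecutive $a$-vertices, observe that on at least one side of $c_{i+3}$ the path $a_i\to\dots\to a_{i+6}$ retains a full edge strictly inside the span of the corresponding inter-chain edge ($(a_i,c_{i+3})$ on the left, $(c_{i+3},a_{i+6})$ on the right), and that inter-chain edge then nests that intra-chain edge, yielding the required $2$-rainbow; hence one queue never suffices for $T_a(i)$ or, symmetrically, $T_c(i)$.
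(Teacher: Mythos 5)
Your argument is correct and follows essentially the same route as the paper: both locate $c_{i+3}$ within the chain $a_i\prec\dots\prec a_{i+6}$ and exhibit an intra-chain edge properly nested under one of the two inter-chain edges $(a_i,c_{i+3})$, $(c_{i+3},a_{i+6})$. The paper just organizes the cases more economically — it splits only on whether $a_{i+3}\prec c_{i+3}$ or $c_{i+3}\prec a_{i+3}$, using $(a_{i+2},a_{i+3})$ or $(a_{i+3},a_{i+4})$ respectively, which sidesteps the endpoint-sharing bookkeeping you (rightly) flag for the positions adjacent to $a_i$ and $a_{i+6}$.
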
	

\begin{proof}
	We give a proof only for $T_a(i)$, as the case with $T_c(i)$ is symmetric.
	Let $L$ be a linear extension of $G(p,q)$. 
	Since $(a_i,c_{i+3})$ and $(c_{i+3},a_{i+6})$ are edges of $G(p,q)$, $a_i \prec c_{i+3} \prec a_{i+6}$ holds in $L$.
	If $a_{i+3} \prec c_{i+3}$, then  $[a_{i} \dots\allowbreak a_{i+2} \dots\allowbreak a_{i+3} \dots\allowbreak c_{i+3}]$ 
	holds in $L$ and thus $(a_{i},c_{i+3})$ and $(a_{i+2},a_{i+3})$ form a $2$-rainbow. 
	Otherwise, $[c_{i+3} \dots\allowbreak a_{i+3} \dots \allowbreak a_{i+4} \dots\allowbreak a_{i+6}]$ holds  
	and thus $(c_{i+3},a_{i+6})$ and $(a_{i+3},a_{i+4})$ form a $2$-rainbow.
\qed\end{proof}	


The next lemma establishes some properties of $X_a(i)$.

\begin{lemma}\label{prp:ce-x}
	In every linear extension of $G(p,q)$, in which one of the following holds, $X_a(i)$ requires $3$ queues:
	\begin{enumerate}[(i)]
		\item \label{o:ce-x:1} $a_{i+1} \prec c_{i+1} \prec a_{i+2} \prec c_{i+2}$,
		\item \label{o:ce-x:2} $c_{i+1} \prec a_{i+1} \prec c_{i+2} \prec a_{i+2}$,
		\item \label{o:ce-x:3} $a_{i+3} \prec c_{i+3} \prec a_{i+4} \prec c_{i+4}$,
		\item \label{o:ce-x:4} $c_{i+3} \prec a_{i+3} \prec c_{i+4} \prec a_{i+4}$,
		\item \label{o:ce-x:5} $c_{i} \prec a_{i+1} \prec c_{i+2} \prec a_{i+3} \prec c_{i+4}$.
	\end{enumerate}
\end{lemma}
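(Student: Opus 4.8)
The plan is to handle each of the five orderings by exhibiting three pairwise-nested edges inside $X_a(i)$. Recall that $X_a(i)$ contains the chain vertices $a_{i+1},\dots,a_{i+4}$ and $c_i,\dots,c_{i+5}$, together with the intra-chain edges among them and the inter-chain edges $(a_{i+1},c_{i+4})$, $(c_i,a_{i+3})$, and $(c_{i+1},a_{i+4})$ (the copies of $(a_j,c_{j+3})$ and $(c_j,a_{j+3})$ whose both endpoints lie in the induced subgraph), plus the length-one chain edges $(c_i,c_{i+1})$, $(a_{i+3},a_{i+4})$, etc. The key observation is that the two ``long'' inter-chain edges $(c_i,a_{i+3})$ and $(a_{i+1},c_{i+4})$ each span four chain steps, so each of them is forced to nest a short edge at either end; the third, innermost edge will come from whichever local $2$-pattern the hypothesis gives us.

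First I would treat cases \eqref{o:ce-x:1} and \eqref{o:ce-x:2} together, since they are symmetric under exchanging the $\CR$- and $\CB$-roles of the $a$- and $c$-chains (using $T_c$'s edge $(c_{i+1},a_{i+4})$ in place of $(a_{i+1},c_{i+4})$). In case \eqref{o:ce-x:1}, $a_{i+1}\prec c_{i+1}\prec a_{i+2}\prec c_{i+2}$: the edge $(c_i,a_{i+3})$ forces $c_i\prec a_{i+3}$, and combining with the hypothesis and the chain edges one gets the pattern $[c_i, a_{i+1}, c_{i+1}, a_{i+2}, c_{i+2}, a_{i+3}]$ (using $a_{i+1}\prec a_{i+2}\prec a_{i+3}$ and the edge relations $c_i\prec c_{i+1}\prec c_{i+2}$), whence the three independent edges $(c_i,a_{i+3})$, $(a_{i+1},a_{i+2})$ [or $(c_{i+1},c_{i+2})$] and a suitably chosen innermost edge nest. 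I would need to be careful to pick three \emph{independent} edges; the natural triple is $(c_i,a_{i+3})$ (outermost), $(a_{i+1},c_{i+2})$ or $(c_{i+1},a_{i+2})$ (middle), and the innermost chain edge $(a_{i+2},\cdot)$ or $(c_{i+1},\cdot)$ — I'd verify endpoint-disjointness explicitly. Cases \eqref{o:ce-x:3} and \eqref{o:ce-x:4} are the mirror images obtained by the automorphism $j\mapsto 2(i+3)-j$ reversing the indices (which swaps the two ends of $X_a(i)$), so they reduce to \eqref{o:ce-x:1}--\eqref{o:ce-x:2} after applying it; alternatively argue directly using the edge $(a_{i+1},c_{i+4})$ in place of $(c_i,a_{i+3})$.

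For case \eqref{o:ce-x:5}, $c_i\prec a_{i+1}\prec c_{i+2}\prec a_{i+3}\prec c_{i+4}$: here the outermost edge is $(c_i,a_{i+3})$, which by the hypothesis nests $a_{i+1}\prec c_{i+2}$; then the edge $(a_{i+1},c_{i+4})$ gives $a_{i+1}\prec c_{i+4}$, and since $a_{i+3}\prec c_{i+4}$ by hypothesis, $(a_{i+1},c_{i+4})$ nests $(c_{i+2},a_{i+3})$-related material. The cleanest triple should be $(c_i,a_{i+3})$, $(a_{i+1},c_{i+4})$ — wait, these are not nested as stated; rather I expect $(c_i, a_{i+3})$ outermost, then an edge from $\{a_{i+1},c_i\}$ to $\{c_{i+2}, a_{i+3}\}$ such as $(a_{i+1}, a_{i+3})$'s chain-path or the direct edge, and innermost a short edge around $c_{i+2}$. \textbf{The main obstacle} I anticipate is exactly this bookkeeping: in each of the five sub-cases one must name a concrete triple of independent edges of $G(p,q)$ (not merely a $3$-pattern of six vertices) that are pairwise nested, and check both the nesting inequalities — which follow from the hypothesis plus Proposition-style facts that cover-graph edges respect $\prec$ — and the independence (no shared endpoints). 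Nothing deep is needed beyond the transitively-reduced structure, but getting all five triples right without an off-by-one in the indices is the delicate part; I would draw the Hasse picture of $X_a(i)$ once and read off the five triples from it, then write each verification in two lines.
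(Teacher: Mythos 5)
Your plan has a genuine gap, and two of its concrete ingredients are wrong. First, several of the edges you propose to use are not edges of $G(p,q)$: the inter-chain edges are exactly $(a_j,c_{j+3})$ and $(c_j,a_{j+3})$, so $(a_{i+1},c_{i+2})$, $(c_{i+1},a_{i+2})$ and $(a_{i+1},a_{i+3})$ do not exist, and your inventory of inter-chain edges inside $X_a(i)$ also omits $(a_{i+2},c_{i+5})$, which turns out to be indispensable in several sub-cases. Second, and more importantly, the pattern you claim to ``get'' in case~(i), namely $[c_i, a_{i+1}, c_{i+1}, a_{i+2}, c_{i+2}, a_{i+3}]$, is not forced by the hypothesis: $c_i$ and $a_{i+1}$ are incomparable (the only path out of $c_i$ into the $a$-chain reaches $a_{i+3}$), and likewise $c_{i+2}$ and $a_{i+3}$ are incomparable, so a linear extension may place them in either order. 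Consequently there is no single triple of edges that works for case~(i); one must branch on the positions of $a_{i+3}$ and $a_{i+4}$ relative to $c_{i+2},c_{i+3},c_{i+4},c_{i+5}$, and in each branch name a different $3$-rainbow (this is precisely what the paper's proof does, using among others the edge $(a_{i+2},c_{i+5})$ you dropped). That branching is not ``bookkeeping'' to be deferred --- it is the entire content of the lemma, and your plan neither performs it nor gives a mechanism that would make it unnecessary.

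Two smaller points. Your reduction of (iii)--(iv) to (i)--(ii) by reflecting $X_a(i)$ and reversing the order is a legitimate idea (it is essentially what the paper asserts), but note that the reflection-plus-reversal sends (iii) to the pattern of (ii) and (iv) to that of (i), so the pairing must be checked rather than assumed. For case~(v) your text already observes that $(c_i,a_{i+3})$ and $(a_{i+1},c_{i+4})$ cross rather than nest, and the replacement edges you suggest do not exist; the paper instead handles (v) by locating $a_{i+2}$ relative to $c_{i+1}$ and $c_{i+2}$ and either exhibiting a $3$-rainbow directly or invoking cases (i) and (ii) --- a reduction your plan does not anticipate. As written, the proposal is a sketch whose stated key step fails, so it cannot be accepted as a proof.
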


\begin{proof}
	Let $L$ be a linear extension of $G(p,q)$ satisfying one of (\ref{o:ce-x:1})--(\ref{o:ce-x:4}).  
	We consider each of the cases of the proposition separately in the following.
	\begin{enumerate}[(i)]
		\item Assume $a_{i+1} \prec c_{i+1} \prec a_{i+2} \prec c_{i+2}$. Since $c_{i+2} \prec c_{i+3} \prec c_{i+4}$, if $c_{i+3} \prec a_{i+3}$, then the edges $(c_{i+1},a_{i+4})$, $(a_{i+2},a_{i+3})$ and $(c_{i+2}, c_{i+3})$ form a $3$-rainbow, since $[c_{i+1} \dots\allowbreak a_{i+2} \dots\allowbreak c_{i+2} \dots\allowbreak c_{i+3} \dots\allowbreak a_{i+3} \dots\allowbreak a_{i+4}]$ holds in $L$.
		Hence, we may assume that $a_{i+3} \prec c_{i+3}$ holds in $L$. We distinguish two cases depending on whether $a_{i+3} \prec c_{i+2}$ or $ c_{i+2} \prec a_{i+3}$.
		In the former case, the edges $(a_{i+1},c_{i+4})$, $(c_{i+1},c_{i+2})$ and $(a_{i+2}, a_{i+3})$ form a $3$-rainbow, since $[a_{i+1} \dots c_{i+1} \dots a_{i+2} \dots a_{i+3} \dots c_{i+2}  \dots c_{i+4}]$ holds in $L$.
		In the latter case, in which $c_{i+2} \prec a_{i+3}$, the relative order in $L$ is $[a_{i+1} \dots c_{i+1} \dots a_{i+2} \dots c_{i+2} \dots\allowbreak a_{i+3} \dots c_{i+3}]$. 
		Since $c_{i+3} \prec c_{i+4} \prec c_{i+5}$, we distinguish possible positions for $a_{i+4}$.
		
		\begin{itemize}[--]
			\item If $a_{i+3} \prec a_{i+4} \prec c_{i+3}$, then $(a_{i+2},c_{i+5})$, $(c_{i+2},c_{i+3})$ and $(a_{i+3}, a_{i+4})$ form a $3$-rainbow, since $[a_{i+2} \dots c_{i+2} \dots  a_{i+3} \dots  a_{i+4} \dots  c_{i+3} \dots c_{i+5}]$ holds in $L$.
			\item If $c_{i+3} \prec a_{i+4} \prec c_{i+4}$, then $(a_{i+1},c_{i+4})$, $(c_{i+1},a_{i+4})$ and $(c_{i+2}, c_{i+3})$ form a $3$-rainbow, since $[a_{i+1} \dots c_{i+1} \dots c_{i+2} \dots c_{i+3} \dots a_{i+4} \dots c_{i+4}]$ holds in $L$.
			\item If $c_{i+4} \prec a_{i+4} \prec c_{i+5}$, then $(a_{i+2},c_{i+5})$, $(a_{i+3},a_{i+4})$ and $(c_{i+3}, c_{i+4})$ form a $3$-rainbow, since $[a_{i+2} \dots a_{i+3} \dots c_{i+3} \dots c_{i+4} \dots a_{i+4} \dots c_{i+5}]$ holds in $L$.
			\item If $c_{i+5} \prec a_{i+4}$, then $(c_{i+1},a_{i+4})$, $(a_{i+2},c_{i+5})$ and $(c_{i+2}, c_{i+3})$ form a $3$-rainbow, since $[c_{i+1} \dots a_{i+2} \dots c_{i+2} \dots c_{i+3} \dots c_{i+5} \dots a_{i+4}]$ holds in $L$.
		\end{itemize}
		
		\item Assume $c_{i+1} \prec a_{i+1} \prec c_{i+2} \prec a_{i+2}$. If $a_{i+3} \prec c_{i+3}$, then $(a_{i+1},c_{i+4})$, $(c_{i+2},c_{i+3})$ and $(a_{i+2}, a_{i+3})$ form a $3$-rainbow, since $[a_{i+1} \dots c_{i+2} \dots a_{i+2} \dots\allowbreak a_{i+3} \dots c_{i+3} \dots c_{i+4}]$ holds in $L$.
		Hence, we may assume $c_{i+3} \prec a_{i+3}$. 
		On the other hand, if $a_{i+4} \prec c_{i+4}$, then $(a_{i+2},c_{i+5})$, $(c_{i+3},c_{i+4})$ and $(a_{i+3}, a_{i+4})$ form a $3$-rainbow, since $[a_{i+2} \dots c_{i+3} \dots a_{i+3} \dots a_{i+4} \dots c_{i+4} \dots c_{i+5}]$ holds in $L$. 
		Hence, we may further assume $c_{i+4} \prec a_{i+4}$, which together with our previous assumption implies that the underlying order in $L$ is $[c_{i+1} \dots a_{i+1} \dots\allowbreak c_{i+2} \dots c_{i+3} \dots c_{i+4} \dots a_{i+4}]$. The case is then concluded by the observation that $(c_{i+1},a_{i+4})$, $(a_{i+1},c_{i+4})$ and $(c_{i+2}, c_{i+3})$ form a $3$-rainbow, as desired. 
		\item It can be proved symmetrically to (\ref{o:ce-x:1}).
		\item It can be proved symmetrically to (\ref{o:ce-x:2}).
		\item Assume $c_{i} \prec a_{i+1} \prec c_{i+1}$. By \cref{prp:ce-x}.(\ref{o:ce-x:1}), $a_{i+2} \prec c_{i+1}$ or $a_{i+2} \succ c_{i+2}$. 
		In the former case, edges $(a_{i+1},c_{i+4})$, $(a_{i+2},a_{i+3})$ and $(c_{i+1}, c_{i+2})$ form a $3$-rainbow, since $[a_{i+1} \dots a_{i+2} \dots c_{i+1} \dots c_{i+2} \dots a_{i+3} \dots c_{i+4}]$, holds in $L$ (recall $a_{i+3} \prec c_{i+4}$).
		In the latter case, a $3$-rainbow is formed by the edges $(c_{i},a_{i+3})$, $(c_{i+1}, c_{i+2})$ and $(a_{i+1},a_{i+2})$, since $[c_{i} \dots a_{i+1} \dots c_{i+1} \dots c_{i+2} \dots\allowbreak a_{i+2} \dots a_{i+3}]$ holds in $L$. Thus, we have $c_{i+1} \prec a_{i+1} \prec c_{i+2}$. Again by \cref{prp:ce-x}.(\ref{o:ce-x:2}), $a_{i+2} \prec c_{i+2}$, which yields a $3$-rainbow formed by the edges $(c_{i},a_{i+3}])$, $(a_{i+1}, a_{i+2})$ and $(c_{i+1}, c_{i+2})$, since $[c_{i}, c_{i+1}, a_{i+1}, a_{i+2}, c_{i+2}, a_{i+3}]$ holds in $L$.
	\end{enumerate}
	The above case analysis completes the proof.
\qed\end{proof}

In the following, we prove that for sufficiently large values of $p$ and $q$
graph $\widetilde{G}(p,q)$ does not admit a $3$-queue layout. For a
contradiction, assume that $\widetilde{G}(p,q)$ admits a $3$-queue layout and
let $L$ be its linear extension. Intuitively, we distinguish two cases depending
on the length of edge $(b_1, c_p)$ in $L$. If the edge is ``short'' (that is,
$b_1$ is close to $c_p$ in $L$), then we use \cref{lem:ce-upper} to show the
existence of a $4$-rainbow. In the opposite case, the edge $(b_1, c_p)$ nests a
large subgraph of $\widetilde{G}(p,q)$. By \cref{prp:ce-x}, the subgraph that is
nested requires $3$ queues, which together with the long edge $(b_1, c_p)$
yields a $4$-rainbow. Both cases contradict the assumption that
$\widetilde{G}(p,q)$ admits a $3$-queue layout.

\begin{figure}[!t]
	\centering
	\begin{subfigure}[b]{.32\linewidth}
		\center
		\includegraphics[page=3,width=\textwidth]{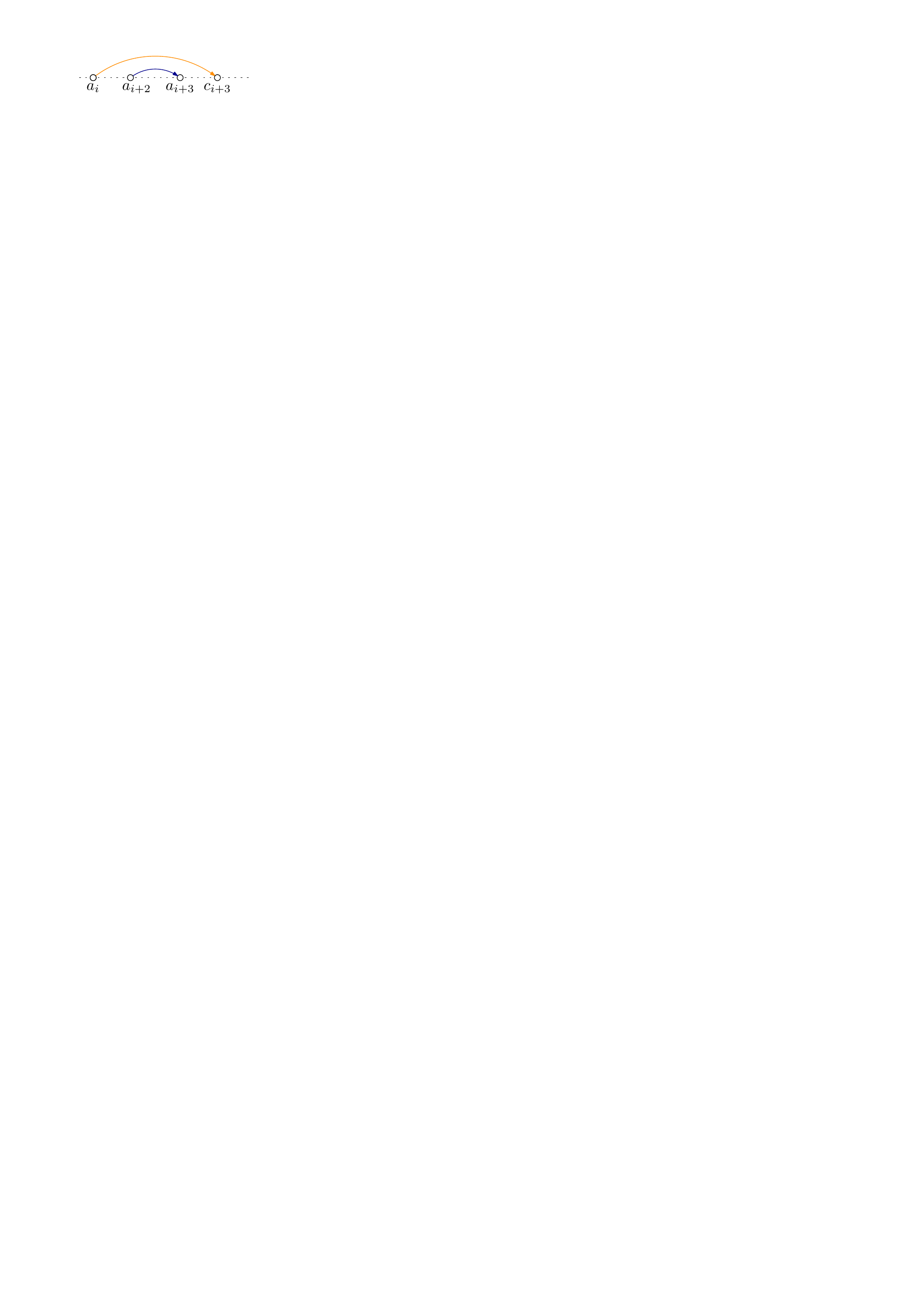}
		\caption{$a_3 \prec c_2$}
		\label{fig:ce-upper-1}
	\end{subfigure} 
	\begin{subfigure}[b]{.32\linewidth}
		\center
		\includegraphics[page=4,width=\textwidth]{pics/lowerbound}
		\caption{$c_4 \prec a_3 \prec c_6$}
		\label{fig:ce-upper-2}
	\end{subfigure}    
	\hfil
	\begin{subfigure}[b]{.32\linewidth}
		\center
		\includegraphics[page=5,width=\textwidth]{pics/lowerbound}
		\caption{$c_2 \prec a_5 \prec c_4$}
		\label{fig:ce-upper-3}
	\end{subfigure}    
	
	\begin{subfigure}[b]{.32\linewidth}
		\center
		\includegraphics[page=6,width=\textwidth]{pics/lowerbound}
		\caption{$c_6 \prec a_5 \prec c_8$}
		\label{fig:ce-upper-4}
	\end{subfigure}  
	\hfil
	\begin{subfigure}[b]{.32\linewidth}
		\center
		\includegraphics[page=7,width=\textwidth]{pics/lowerbound}
		\caption{$c_2 \prec a_3 \prec c_4 \prec a_5 \prec c_6$}
		\label{fig:ce-upper-5}
	\end{subfigure}  
	\hfil
	\begin{subfigure}[b]{.32\linewidth}
		\center
		\includegraphics[page=8,width=\textwidth]{pics/lowerbound}
		\caption{$b_1 \prec c_{14}$, $b_{18} \prec c_{31}$}
		\label{fig:34}
	\end{subfigure}  
	\caption{Illustrations for the proofs of \cref{lem:ce-upper} and \cref{thm:34}.}
	\label{fig:ce-upper}
\end{figure}

\begin{lemma}\label{lem:ce-upper}
	$G(14,6)$ requires $4$ queues in every linear extension with $c_{14} \prec b_1$.
\end{lemma}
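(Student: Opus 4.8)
The plan is as follows. Fix a linear extension $L$ of $G(14,6)$ with $c_{14}\prec b_1$; the goal is to exhibit a $4$-rainbow. First I would record the forced part of the order. Compatibility with the edge directions gives $c_1\prec c_2\prec\cdots\prec c_{14}\prec b_1$, and together with the inter-chain edge $(a_{11},c_{14})$ and the $a$-chain it gives $a_1\prec\cdots\prec a_{11}\prec c_{14}\prec b_1$. The inter-chain edges $(a_i,c_{i+3})$ and $(c_i,a_{i+3})$ yield $a_i\prec c_{i+3}$ and $c_i\prec a_{i+3}$ for all admissible $i$; in particular $c_{j-3}\prec a_j\prec c_{j+3}$ for $4\le j\le 11$, so each $a_j$ is confined to the $c$-interval $(c_{j-3},c_{j+3})$. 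Thus the $a$- and $c$-chains are heavily interleaved inside the prefix of $L$ ending just before $b_1$.

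The plan is then to build a $4$-rainbow whose outermost edge is $(c_1,b_1)$ or $(a_1,b_1)$ -- both are edges of $G(14,6)$ since $1\le q=6$ -- and whose three inner edges form a $3$-rainbow inside one of the subgraphs $X_a(i)$ or $X_c(i)$ with $i$ small, chosen so that all its vertices lie strictly between $c_1$ (respectively $a_1$) and $b_1$. Since $c_1$ precedes $c_2,\dots$ and precedes $a_4,\dots$ (edge $(c_1,a_4)$), while $b_1$ follows $c_1,\dots,c_{14},a_1,\dots,a_{11}$, such an outer edge nests the whole subgraph and shares no endpoint with it, so a $3$-rainbow there upgrades to a $4$-rainbow. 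By \cref{prp:ce-x} (and its mirror image for $X_c$), it suffices to verify in $L$ one of the order conditions (\ref{o:ce-x:1})--(\ref{o:ce-x:5}) for such a subgraph; so the whole problem reduces to showing that the forced orderings always make one of these conditions true.

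To do this I would run a case analysis on the positions of $a_3$ and $a_5$ relative to the even-indexed elements $c_2,c_4,c_6,c_8$. Since $a_3\prec c_6$ and $c_2\prec a_5\prec c_8$ are forced, only the configurations in \cref{fig:ce-upper}(a)--(e) arise. For example, in case~(e) we have $c_2\prec a_3\prec c_4\prec a_5\prec c_6$, which is exactly condition~(\ref{o:ce-x:5}) of \cref{prp:ce-x} for $X_a(2)$; as the vertices $a_3,\dots,a_6,c_2,\dots,c_7$ of $X_a(2)$ all lie strictly between $c_1$ and $b_1$ (note $c_1\prec c_2\prec a_3$), appending $(c_1,b_1)$ to the resulting $3$-rainbow of $X_a(2)$ gives a $4$-rainbow. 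Cases (a)--(d) are handled in the same spirit: each time one pins down the order among a small set of $a$'s and $c$'s -- occasionally via an internal split, e.g.\ on whether $a_4\prec c_3$, and sometimes invoking that $T_a(i)$ or $T_c(i)$ needs two queues by \cref{prp:ce-t} -- until one of conditions~(\ref{o:ce-x:1})--(\ref{o:ce-x:5}) becomes available for a suitable $X_a(\cdot)$ or $X_c(\cdot)$ whose vertices all lie strictly between $c_1$ (or $a_1$) and $b_1$, and then $(c_1,b_1)$ or $(a_1,b_1)$ closes a $4$-rainbow.

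The hard part will be this case analysis: checking that the listed cases are exhaustive given only the forced orderings, and, in each, that the available inequalities genuinely trigger one of the \cref{prp:ce-x} conditions for a subgraph that is simultaneously nested by $(c_1,b_1)$ or $(a_1,b_1)$ and endpoint-disjoint from it. Conceptually the argument should go through because the shift-by-three inter-chain edges keep every $a_j$ inside a short $c$-window, so the interleaving of the two chains is dense enough that it must contain one of the short ``crossing'' patterns that an $X$-subgraph cannot realize with only two queues.
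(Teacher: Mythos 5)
Your setup (the forced interleaving $c_{j-3}\prec a_j\prec c_{j+3}$, the case split on where $a_3$ and $a_5$ fall among $c_2,c_4,c_6,c_8$, and the resolution of the final case $c_2\prec a_3\prec c_4\prec a_5\prec c_6$ via condition~(\ref{o:ce-x:5}) of \cref{prp:ce-x} applied to $X_a(2)$ nested under $(c_1,b_1)$) matches the paper exactly. The gap is in your mechanism for the other four cases. You propose that each of them should also terminate in ``a single outer edge $(c_1,b_1)$ or $(a_1,b_1)$ plus a $3$-rainbow in some $X_a(\cdot)$ or $X_c(\cdot)$,'' but the hypotheses available in those cases (e.g.\ $a_3\prec c_2$ in case~(a), or $c_4\prec a_3\prec c_6$ in case~(b)) do not by themselves trigger any of the five order conditions of \cref{prp:ce-x}; establishing that some $X$-subgraph always needs $3$ queues there would require a further, unbounded-looking sub-case analysis that you have not carried out and that the paper never attempts.

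What the paper actually does in cases (a)--(d) is structurally different: the two \emph{outermost} edges of the $4$-rainbow are a pair of inter-chain edges into the $b$-chain, $(a_i,b_i)$ and $(c_j,b_j)$ with $i\neq j$, which together nest an entire $T_c(\cdot)$ subgraph (e.g.\ $[a_3,c_2,T_c(3),b_2,b_3]$ when $a_3\prec c_2$), and \cref{prp:ce-t} supplies the inner $2$-rainbow. This is why the construction needs $q=6$ and the vertices $b_2,\dots,b_6$ at all; your proposal never uses any $b_i$ with $i\ge 2$, so it cannot reproduce this step. You do mention \cref{prp:ce-t} in passing, but only as an auxiliary tool for ``pinning down orderings,'' not as the source of the inner $2$-rainbow under a nested pair of $b$-edges, which is the actual closing move in four of the five cases. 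As written, the proof is complete only for case~(e).
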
	
\begin{proof}
	Let $L$ be a linear extension of $G(14,6)$ with $c_{14} \prec b_1$; see \cref{fig:ce-upper}.
	Since $c_{14} \prec b_1$,  $[c_1 \dots c_{14} \dots b_1 \dots b_{6}]$ holds in $L$.
	Consider vertex $a_3$. Since $(a_3,c_6)$  belongs to $G(14,6)$, $a_3 \prec c_6$.
	If $a_3 \prec c_2$, then configuration $[a_3, c_2, T_c(3), b_2, b_3]$ follows; see \cref{fig:ce-upper-1}. In other words, 
	$T_c(3)$ induced by the vertices $c_3,\dots, c_9$ and $a_6$ is nested by two independent edges, which yields a $4$-rainbow by \cref{prp:ce-t}. 
	Similarly, if $c_4 \prec a_3 \prec c_6$ then we have a $4$-rainbow by the configuration $[c_4, a_3, T_c(6), b_3, b_4]$; see \cref{fig:ce-upper-2}.
	Hence, only the case $c_2 \prec a_3 \prec c_4$ is left to be considered.
	Now consider vertex $a_5$. Since $(c_2,a_5)$ and $(a_5,c_8)$ belong to $G(14,6)$, $c_2 \prec a_5 \prec c_8$.
	If $c_2 \prec a_5 \prec c_4$, then we have $[a_5, c_4, T_c(5), b_4, b_5]$; see \cref{fig:ce-upper-3}. If 
	$c_6 \prec a_5 \prec c_8$, then we have $[c_6, a_5, T_c(8), b_5, b_6]$; see \cref{fig:ce-upper-4}. 
	In both cases, a $4$-rainbow is implied.
	Hence, only the case $c_4 \prec a_5 \prec c_6$ is left to be considered. This case together with the leftover case $c_2 \prec a_3 \prec c_4$ from above implies that Condition~(\ref{o:ce-x:5}) of \cref{prp:ce-x} is fulfilled for $X_a(2)$; see \cref{fig:ce-upper-5}. But in this case configuration $[c_1,X_a(2),b_1]$ yields a $4$-rainbow, as desired.
\qed\end{proof}

\noindent Similarly, we prove the following property of $G(6,2)$.

\begin{lemma}\label{lem:ce-lower}
	$G(6,2)$ requires $3$ queues in every linear extension.
\end{lemma}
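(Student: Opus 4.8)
The plan is to reduce, by means of \cref{prp:ce-x}, to a short list of ``bad'' orderings of the low-index vertices of the two length-$6$ chains of $G(6,2)$, and then to kill those orderings using the vertices $b_1,b_2$, which lie outside every gadget that \cref{prp:ce-x} controls. First I would fix a linear extension $L$ of $G(6,2)$ and record the relations it is forced to obey: the three chains appear monotonically, and the inter-chain edges give $a_1\prec c_4$, $c_1\prec a_4$, $a_2\prec c_5$, $c_2\prec a_5$, $a_3\prec c_6$, $c_3\prec a_6$, $a_1\prec b_1$, $c_1\prec b_1$, $a_2\prec b_2$, $c_2\prec b_2$, and $b_1\prec b_2$. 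Note that $G(6,2)$ contains $X_a(1)$ as an induced subgraph (on $a_2,a_3,a_4,a_5$ and $c_1,\dots,c_6$) and, symmetrically, $X_c(1)$ (on $a_1,\dots,a_6$ and $c_2,\dots,c_5$), and that these are the only $X$- and $T$-subgraphs inside $G(6,2)$, since there are no seven consecutive vertices on a chain and no $a_7$ or $c_7$. Applying \cref{prp:ce-x} to $X_a(1)$, and its $a\leftrightarrow c$ mirror to $X_c(1)$, yields a $3$-rainbow in $L$ whenever one of
\begin{gather*}
a_2\prec c_2\prec a_3\prec c_3, \qquad c_2\prec a_2\prec c_3\prec a_3, \\
a_4\prec c_4\prec a_5\prec c_5, \qquad c_4\prec a_4\prec c_5\prec a_5, \\
c_1\prec a_2\prec c_3\prec a_4\prec c_5, \qquad a_1\prec c_2\prec a_3\prec c_4\prec a_5
\end{gather*}
holds, so I would assume from now on that none of these six orderings occurs.

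The set of six orderings is invariant under the $a\leftrightarrow c$ symmetry of $G(6,2)$ (swap $a_i$ with $c_i$ for all $i$, fixing $b_1,b_2$), so I may also assume $c_2\prec a_2$. Combined with the exclusion of the first two orderings, this forces $\{a_2,c_2,a_3,c_3\}$ to appear in the order $c_2\prec a_2\prec a_3\prec c_3$ or $c_2\prec c_3\prec a_2\prec a_3$, and the exclusion of the third and fourth leaves four orders for $\{a_4,c_4,a_5,c_5\}$. In each of the (at most eight) surviving cases I would produce a $3$-rainbow of the shape $[x,b_k]\supset[\text{cross edge}]\supset[\text{intra-chain edge}]$, where $x\in\{a_1,a_2,c_1,c_2\}$ is the tail of the outermost $b$-edge; concretely, $[c_2,b_2]\supset[c_3,a_6]\supset[c_4,c_5]$ is a $3$-rainbow whenever $c_2\prec c_3\prec c_4\prec c_5\prec a_6\prec b_2$, and its mirror and index-shifted variants (using $(a_1,b_1),(c_1,b_1),(a_2,b_2)$) handle the remaining subcases. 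The reduction is exactly what guarantees that such a configuration exists: in the surviving orderings the two chains ``start in the wrong order'', leaving some $b_j$ free to cap a rainbow that the $a$/$c$-part alone cannot close — and the $b$-vertices really are needed, since the extension $a_1\prec a_2\prec a_3\prec c_1\prec a_4\prec c_2\prec a_5\prec c_3\prec a_6\prec c_4\prec c_5\prec c_6$ is a $2$-queue layout of $G(6,2)-\{b_1,b_2\}$.

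The main obstacle is the case-by-case bookkeeping. One has to check that intersecting the grid of orders of $\{a_2,c_2,a_3,c_3\}$ and $\{a_4,c_4,a_5,c_5\}$ with the forced relations and the negations of the six displayed orderings leaves only a handful of genuinely distinct situations, and that in each of them the three chosen edges are pairwise independent and correctly nested (distinct endpoints, strictly increasing tails, strictly decreasing heads). Each individual case is short because the inter-chain edges are so rigid, but there are several of them and one is easy to overlook; I would organize the analysis by the position of $b_2$, and then of $b_1$, relative to the two chains, since that is what decides which $b$-edge can serve as the outer edge of the rainbow.
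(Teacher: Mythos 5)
Your reduction step is sound and close in spirit to the paper's own: $X_a(1)$ and $X_c(1)$ do sit inside $G(6,2)$ with all the edges that the proof of \cref{prp:ce-x} needs, so excluding the six orderings and invoking the $a\leftrightarrow c$ symmetry to fix $c_2\prec a_2$ is a legitimate first move (the paper itself applies \cref{prp:ce-x} in essentially this way, though it organizes the cases by the position of $a_2$ among the $c_j$'s rather than by the two interleaving patterns). The gap is in the second half, which is where all the content of the lemma lives. The ``eight surviving cases'' only fix the internal orders of $\{a_2,c_2,a_3,c_3\}$ and $\{a_4,c_4,a_5,c_5\}$; the positions of $a_1,c_1,a_6,c_6,b_1,b_2$ remain free, so each case fragments further and the bookkeeping you defer is essentially the entire proof.

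More seriously, the uniform closing move you commit to --- a rainbow of the shape $b$-edge $\supset$ cross edge $\supset$ intra-chain edge, for any choice of outer $b$-edge --- does not exist in every surviving case. Consider the linear extension $c_1\prec c_2\prec c_3\prec a_1\prec c_4\prec a_2\prec c_5\prec a_3\prec a_4\prec b_1\prec b_2\prec a_5\prec c_6\prec a_6$. It is a valid extension of $G(6,2)$, it satisfies $c_2\prec a_2$, and it avoids all six excluded orderings (the quadruples appear as $c_2\prec c_3\prec a_2\prec a_3$ and $c_4\prec c_5\prec a_4\prec a_5$). Here the only cross edges strictly nested by any of $(a_1,b_1)$, $(c_1,b_1)$, $(a_2,b_2)$, $(c_2,b_2)$ are $(a_1,c_4)$ and $(a_2,c_5)$, whose endpoints are consecutive in the extension, so no third edge fits inside: no rainbow of your proposed shape exists. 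A $3$-rainbow does of course exist --- for instance $(c_3,a_6)\supset(a_3,c_6)\supset(a_4,a_5)$, which uses no $b$-edge at all, or $(c_2,b_2)\supset(a_1,b_1)\supset(c_4,c_5)$, which uses two $b$-edges --- and indeed the paper's case analysis is forced to use rainbows with a cross edge outermost and the $b$-edge in the middle, as well as rainbows avoiding $b$-edges entirely. You would therefore need a substantially richer repertoire of closing configurations than the single template you propose; as written, the plan does not close.
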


\begin{proof}
	Assume to the contrary that $G(6,2)$ admits a queue layout with at most $2$ queues and let $L$ be its liner extension.
	We distinguish the cases based on the relative order of $a_2$ with respect to $c_1,\ldots,c_6$.
	Since the roles of $a$'s and $c$'s in $G(6,2)$ are interchangeable, we can w.l.o.g.\ assume that $c_2 \prec a_2$; hence, $c_2 \prec a_2 \prec c_5$.
	
	\begin{enumerate}[(i)]
		\item Consider first the case, in which $c_2 \prec a_2 \prec c_3$. It follows from \cref{prp:ce-x}.(\ref{o:ce-x:2}) that $a_3 \prec c_3$. Hence, $c_2 \prec a_1$, as otherwise
		the edges $(a_1,c_4)$, $(c_2,c_3)$ and $(a_2, a_3)$ form a $3$-rainbow, since $[a_1 \dots c_2 \dots a_2 \dots a_3 \dots c_3 \dots c_4]$ holds in $L$.
		Similarly, if $b_2 \prec a_4$, then the edges $(c_1, a_4)$,  $(c_2,b_2)$ and $(a_1, a_2)$ form a $3$-rainbow, since $[c_1 \dots c_2 \dots a_1 \dots a_2 \dots b_2 \dots a_4]$ holds in $L$. Thus, $a_4 \prec b_2$.
		Now, if $b_2 \prec c_4$, then the edges $(a_1,c_4)$, $(a_2,b_2)$ and $(a_3, a_4)$ form a $3$-rainbow, since $[a_1 \dots a_2 \dots a_3 \dots a_4 \dots b_2 \dots c_4]$ holds in $L$; 
		otherwise, $[c_2 \dots a_1 \dots a_2 \dots a_3 \dots c_4 \dots b_2]$ holds in $L$, which implies that the edges $(c_2, b_2)$, $(a_1, c_4)$ and $(a_2,a_3)$ from a $3$-rainbow .
		
		\item Consider now the case, in which $c_3 \prec a_2 \prec c_4$. In particular, consider the placement of $b_2$:
		
		\begin{enumerate}
			\item if $a_2 \prec b_2 \prec a_4$ then $b_2 \prec c_4$ (otherwise $[c_1 \dots c_2 \dots c_3 \dots c_4 \dots b_2 \dots a_4]$ yields  $3$-rainbow)
			and $a_4 \prec c_4$ (otherwise $[c_1 \dots c_3 \dots a_2 \dots  b_2 \dots c_4 \dots a_4]$ also yields  $3$-rainbow).
			Hence, the relative order is $[c_1 \dots c_2 \dots c_3 \dots a_2 \dots\allowbreak b_2 \dots a_4 \dots c_4]$.
			Consider the placement of $a_1$ in this relative order.
			If $a_1 \prec c_1$, then the edges $(a_1,c_4)$,  $(c_1,a_4)$, and $(c_2,b_2)$ form a $3$-rainbow, since
			$[a_1 \dots c_1 \dots c_2 \dots b_2 \dots a_4 \dots c_4]$ holds in $L$;
			if $c_1 \prec a_1 \prec c_2$, 
			then the edges $(c_1,a_4)$, $(a_1,a_2)$, and $(c_2, c_3)$ form a $3$-rainbow, since
			$[c_1 \dots a_1 \dots c_2 \dots c_3 \dots a_2 \dots a_4]$ holds in $L$; finally,
			if $c_2 \prec a_1$, then the edges $(c_1,a_4)$, $(c_2,b_2)$, and $(a_1,a_2)$ form a $3$-rainbow, since 
			$[c_1 \dots c_2 \dots a_1 \dots a_2 \dots b_2 \dots a_4]$ holds in $L$.
			
			\item if $a_4 \prec b_2 \prec a_6$, then the edges $(c_3,a_6)$, $(a_2,b_2)$, $(a_3,a_4)$ form a $3$-rainbow, since $[c_3 \dots a_2 \dots a_3 \dots a_4 \dots b_2 \dots a_6]$ holds in $L$;
			\item if $a_6 \prec b_2$, then the edges $(c_2,b_2)$, $(c_3,a_6)$, $(a_3,a_4)$ form a $3$-rainbow, since $[c_2 \dots c_3 \dots a_3 \dots a_4 \dots a_6 \dots b_2]$ holds in $L$.
		\end{enumerate}

\begin{figure}[!t]
	\center
	\includegraphics[page=4,scale=0.7]{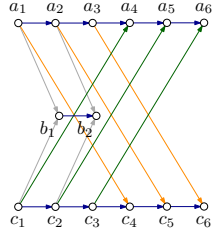}
	\caption{Illustration of graph $G(6,2)$ of \cref{lem:ce-lower}.}
	\label{fig:ce-lower}
\end{figure}

		\item Finally, consider the case, in which $c_4 \prec a_2 \prec c_5$. As above, consider the placement of $b_2$:
		\begin{enumerate}
			\item if $a_2 \prec b_2 \prec a_4$, then the edges $(c_1,a_4)$, $(c_2,b_2)$, and $(c_3,c_4)$ form a $3$-rainbow, since $[c_1 \dots c_2 \dots c_3 \dots c_4 \dots b_2 \dots a_4]$ holds in $L$;
			\item if $a_4 \prec b_2 \prec a_6$, then the edges $(c_3,a_6)$, $(a_2,b_2)$, and $(a_3,_4)$ form a $3$-rainbow, since $[c_3 \dots a_2 \dots a_3 \dots a_4 \dots b_2 \dots a_6]$ holds in $L$;
			\item if $a_6 \prec b_2$, then the edges $(c_2,b_2)$, $(c_3,a_6)$, and $(a_2,a_3)$ form a $3$-rainbow, since $[c_2 \dots c_3 \dots a_2 \dots a_3 \dots a_6 \dots b_2]$ holds in $L$.
		\end{enumerate}
	\end{enumerate}	
	Since all the cases above yield a $3$-rainbow, we obtain a contradiction to the assumption 
	that $G(6,2)$ admits a queue layout with at most $2$ queues.  
\qed\end{proof}	

We are now ready to show that $G(p,q)$ with $p=31$ and $q=22$ is a counterexample to \cref{conj:hp99} when $w=3$.

\counterfirst*

\begin{proof}
	Assume for a contradiction that $\widetilde{G}(31,22)$ admits a $3$-queue layout and let $L$ be its linear extension. 
	If $c_{14} \prec b_1$ in $L$, then the subgraph of $\widetilde{G}(31,22)$
	induced by vertices $a_1, \dots, a_{14}$, $c_1, \dots, c_{14}$, $b_1, \dots,
	b_6$ is isomorphic to $G(14,6)$ and by \cref{lem:ce-upper} requires $4$ queues;
	a contradiction. Hence, $b_1 \prec c_{14}$ holds in $L$.
	
	Symmetric as above, if $c_{31} \prec b_{18}$, the subgraph of
	$\widetilde{G}(31,22)$ induced by vertices $a_{17}, \dots, a_{30}$, $c_{17},
	\dots, c_{30}$, $b_{17}, \dots, b_{22}$ is isomorphic to $G(14,6)$ and by
	\cref{lem:ce-upper} requires $4$ queues; a contradiction. Hence, $b_{18} \prec
	c_{31}$ holds in $L$.
	
	Consider the subgraph of $\widetilde{G}(31,22)$ induced by vertices $a_{17},
	\dots, a_{22}$, $c_{17}, \dots, c_{22}$, $b_{17}, b_{18}$, which is isomorphic
	to $G(6,2)$; see \cref{fig:34}. We show that $b_1$ precedes all the vertices of
	this subgraph, while all the vertices of this subgraph precede $c_{31}$. Since
	$(b_1,c_{31})$ is an edge of $\widetilde{G}(31,22)$, by \cref{lem:ce-lower} we
	derive a contradiction. In particular, $b_1 \prec a_{17}$ (since $b_1 \prec
	c_{14}$ and $(c_{14},a_{17})$ is an edge of $\widetilde{G}(31,22)$), $b_1 \prec
	c_{17}$ (since $b_1 \prec c_{14}$), and clearly $b_1 \prec b_{17}$. Similarly,
	$a_{22} \prec c_{31}$ (since $(a_{22}, c_{25})$ is an edge of
	$\widetilde{G}(31,22)$ and $c_{25} \prec c_{31}$), $c_{22} \prec c_{31}$,
	$b_{18} \prec c_{31}$.
\qed\end{proof}	

To prove that \cref{conj:hp99} does not hold for $w>3$, we need an auxiliary lemma, which is implicitly used in~\cite{DBLP:conf/gd/KnauerMU18}.

\begin{lemma}\label{lem:w1}
	Let $\langle P_w, < \rangle$ be a width-$w$ poset with queue number at least $k$. Then, 
	there exists a poset, $\langle P_{w+1}, <' \rangle$, of width $w+1$ whose queue number is at~least~$k+1$.
\end{lemma}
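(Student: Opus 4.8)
The plan is to construct $\langle P_{w+1}, <'\rangle$ as an ordinal sum of \emph{two} disjoint copies of $P_w$, bridged by a three-element chain that forces one extra level of nesting in every linear extension. Let $A$ and $B$ be disjoint copies of $P_w$; set $a <' b$ for all $a\in A$ and $b\in B$, keep the internal orders of $A$ and of $B$, and add three fresh elements with $c_1 <' c_2 <' c_3$, where $c_1$ is below every other element, $c_3$ is above every other element, and $c_2$ is incomparable to every element of $A\cup B$ (note that $c_2$ must be incomparable to $A$, since otherwise the whole of $A$ would lie strictly between $c_1$ and $c_2$ and $(c_1,c_2)$ would be transitive). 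First I would record the routine facts: $<'$ is a partial order; the cover relations I intend to use — $(c_1,c_2)$, $(c_2,c_3)$, and every internal cover relation of $A$ and of $B$ — are genuine cover relations of $\langle P_{w+1},<'\rangle$ (nothing lies strictly between $c_1$ and $c_2$, or between $c_2$ and $c_3$; internal covers survive because the only new elements are a global minimum, a global maximum, and $c_2$, which is incomparable to $A\cup B$); and the width of $\langle P_{w+1}, <'\rangle$ is exactly $w+1$, because a maximum antichain consists of a maximum antichain of $A$ (equivalently of $B$) together with $c_2$ — the elements $c_1, c_3$ are comparable to everything, and no antichain can meet both $A$ and $B$.

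Next I would prove the queue-number bound. Fix an arbitrary linear extension $L$ of $\langle P_{w+1}, <'\rangle$ and write $x\prec y$ for ``$x$ before $y$ in $L$''. Then $c_1$ precedes and $c_3$ follows every other element, every element of $A$ precedes every element of $B$, and $c_1\prec c_2\prec c_3$. The restriction of $L$ to $A$ is a linear extension of $P_w$, so by hypothesis it contains a $k$-rainbow $R_A$, formed by internal cover edges of $A$ and hence nested inside the span of $A$ in $L$; symmetrically, $B$ carries a $k$-rainbow $R_B$ nested inside the span of $B$, which lies entirely after that of $A$. Now split on the position of $c_2$ in $L$. If $c_2$ comes after all of $A$, then the cover edge $(c_1,c_2)$ nests the entire span of $A$, hence nests $R_A$, producing a $(k+1)$-rainbow. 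Otherwise $c_2$ precedes some element of $A$, hence precedes every element of $B$, so the cover edge $(c_2,c_3)$ nests the entire span of $B$, hence nests $R_B$, again producing a $(k+1)$-rainbow. In both cases the edges $(c_1,c_2)$ and $(c_2,c_3)$ are incident only to $c_1,c_2,c_3\notin A\cup B$, so they are independent of the edges of $R_A$ and of $R_B$, and the claimed $(k+1)$-rainbow is legitimate. Since $L$ was arbitrary, $\langle P_{w+1}, <'\rangle$ has queue number at least $k+1$. (Starting from the width-$3$ poset of \cref{thm:34} and iterating this construction then yields \cref{thm:w1}.)

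The only genuinely delicate point — and the reason two copies of $P_w$ are needed rather than one — is making the case split on $c_2$ exhaustive. With a single copy $A$, an adversarial linear extension could place $c_2$ strictly inside the span of $A$, so that $(c_1,c_2)$ nests only a prefix of $A$'s rainbow and $(c_2,c_3)$ only a suffix, and neither alone contains a full $k$-rainbow. Interposing a second copy $B$ in series after $A$ fixes this: whenever $c_2$ fails to come after all of $A$, it necessarily comes before all of $B$, so one of the two bridge edges always nests a complete $k$-rainbow. Everything else is bookkeeping about the global minimum $c_1$, the global maximum $c_3$, and the incomparabilities of $c_2$.
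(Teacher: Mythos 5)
Your proposal is correct and is essentially the paper's own proof: your $c_1, c_2, c_3$ are the paper's $s, v, t$, the ordinal sum $A <' B$ matches the paper's edges from the sinks of $G_1$ to the sources of $G_2$, and the concluding case split on the position of $c_2$ (either $(c_1,c_2)$ nests all of $A$ or $(c_2,c_3)$ nests all of $B$) is exactly the paper's argument. Your added remarks on why the internal cover edges survive and why two copies are needed are correct and slightly more explicit than the paper's exposition, but the route is the same.
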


\begin{proof}
    
	Let $\GPR{w}$ be the cover graph of $\langle P_w, < \rangle$. The cover graph
	$G(P_{w+1}, <')$ of $\langle P_{w+1}, <' \rangle$ is constructed from two copies of $\GPR{w}$ and
	three new vertices, $s$, $t$, and $v$. Namely, let $G_1$ and $G_2$ be two copies
	of $\GPR{w}$. We first add directed edges from the sinks of $G_1$ to the sources of $G_2$ which ensures that in any linear extension of $G(P_{w+1}, <')$, all vertices of $G_1$
	precede those of $G_2$. Afterwards, we connect vertex $s$ to all sources, and vertex $t$ to all sinks. Observe that the former belong to $G_1$, while the latter belong to $G_2$.
	Finally, we add two directed edges $(s, v)$ and $(v, t)$.
	By construction, $s$ is a global source, and $t$ is a global sink in $G(P_{w+1}, <')$.
	It is not difficult to see that $G(P_{w+1}, <')$ is a poset.
	Since $v$ is incomparable to all vertices defining the width of $\GPR{w}$ in both $G_1$ and
	$G_2$, poset $\langle P_{w+1}, <' \rangle$ has width $w+1$. As already observed, in any linear extension of
	$\GPW{P_{w+1}}$ all vertices of $G_1$ must precede all vertices of $G_2$. This implies that
	either edge $(s, v)$ nests all edges of $G_1$ or edge $(v, t)$ nests all
	edges of $G_2$. Thus, the queue number of $\langle P_{w+1}, <' \rangle$ is at least $k+1$.
\qed\end{proof}	

\noindent \cref{thm:34} and \cref{lem:w1} imply the following:

\counterfull*
}{}
\end{document}